\documentclass[11pt]{article}

\usepackage{booktabs} 

\usepackage[ruled]{algorithm2e} 

\SetAlFnt{\small}
\SetAlCapFnt{\small}
\SetAlCapNameFnt{\small}
\SetAlCapHSkip{0pt}
\IncMargin{-\parindent}





\usepackage{amsthm,amsmath,amssymb}
\usepackage{xcolor}
\usepackage{graphicx}
\usepackage{booktabs}
\usepackage{fullpage}
\usepackage{caption}
\usepackage{hyperref}
\usepackage{enumerate}
\usepackage{enumitem}

\newcommand{\eps}{\varepsilon}

\newcommand{\skel}{\text{skeleton}}
\newcommand{\core}{\text{core}}
\newcommand{\cluster}[1]{\langle #1 \rangle}
\renewcommand{\phi}{\varphi}

\newtheorem{lemma}{Lemma}

\newtheorem{theorem}{Theorem}

\newtheorem{corollary}{Corollary}

\begin{document}
	
\title{Recognizing Weak Embeddings of Graphs\thanks{Research supported in part by the NSF awards CCF-1422311 and CCF-1423615, and the Science Without Borders program.  The second author gratefully acknowledges  support from Austrian Science Fund (FWF): M2281-N35. A preliminary version of this paper appeared in the \emph{Proceedings of the 29th ACM-SIAM Symposium on Discrete Algorithms} (SODA), 2018, pp.~274--292, \url{https://doi.org/10.1137/1.9781611975031.20}.}}

\author{
	Hugo A. Akitaya\thanks{Department of Computer Science, Tufts University, Medford, MA, USA.}
	\and
	Radoslav Fulek\thanks{Institute of Science and Technology, Klosterneuburg, Austria.}
	\and
	Csaba D. T\'oth\footnotemark[2]~\thanks{Department of Mathematics, California State University Northridge, Los Angeles, CA, USA.}
}

\maketitle

	\begin{abstract}
		We present an efficient algorithm for a problem in the interface between clustering and graph embeddings. An \textbf{embedding} $\varphi:G\rightarrow M$ of a graph $G$ into a 2-manifold $M$ maps the vertices in $V(G)$ to distinct points and the edges in $E(G)$ to interior-disjoint Jordan arcs between the corresponding vertices. In applications in clustering, cartography, and visualization, nearby vertices and edges are often bundled to the same point or overlapping arcs, due to data compression or low resolution. This raises the computational problem of deciding whether a given map $\varphi:G\rightarrow M$ comes from an embedding. A map $\varphi:G\rightarrow M$ is a \textbf{weak embedding} if it can be perturbed into an embedding $\psi_\eps:G\rightarrow M$ with $\|\varphi-\psi_\eps\|<\eps$ for every $\eps>0$, where $\|.\|$ is the unform norm.
		
		A polynomial-time algorithm for recognizing weak embeddings has recently been found by Fulek and Kyn\v{c}l. It reduces the problem to solving a system of linear equations over $\mathbb{Z}_2$. It runs in $O(n^{2\omega})\leq O(n^{4.75})$ time, where $\omega\in [2,2.373)$ is the matrix multiplication exponent and $n$ is the number of vertices and edges of $G$.
		We improve the running time to $O(n\log n)$. Our algorithm is also conceptually simpler: We perform a sequence of \emph{local operations} that gradually ``untangles'' the image $\varphi(G)$ into an embedding $\psi(G)$, or reports that $\varphi$ is not a weak embedding. It combines local constraints on the orientation of subgraphs directly, thereby eliminating the need for solving large systems of linear equations.
	\end{abstract}

	\section{Introduction}
	\label{sec:intro}
	
	Given a graph $G$ and a 2-dimensional manifold $M$, one can decide in linear time whether $G$ can be embedded into $M$~\cite{Moh99}, although finding the smallest genus of a surface into which $G$ embeds is NP-hard~\cite{Tho89}. An \textbf{embedding} $\psi:G\rightarrow M$ is a continuous piecewise linear injective map where the graph $G$ is considered as a 1-dimensional simplicial complex. Equivalently, an embedding maps the vertices into distinct points and the edges into interior-disjoint Jordan arcs between the corresponding vertices that do not pass through the images of vertices.
	We would like to decide whether a given map $\varphi:G\rightarrow M$ can be ``perturbed'' into an embedding $\psi:G\rightarrow M$. Let $M$ be a 2-dimensional manifold equipped with a metric.
	A continuous piecewise linear map $\phi:G\rightarrow M$ is a \textbf{weak embedding} if, for every $\eps>0$, there is an embedding $\psi_\eps:G\rightarrow M$ with $\|\varphi-\psi_\eps\|<\eps$, where $\|.\|$ is the uniform norm (i.e., $\sup$ norm).
	
	In some cases, it is easy to tell whether $\varphi:G\rightarrow M$ is a weak embedding: Every embedding is a weak embedding; and if $\varphi$ maps two edges into Jordan arcs that cross transversely, then $\varphi$ is not a weak embedding. The problem becomes challenging when $\varphi$ maps several vertices (edges) to the same point (Jordan arc), although no two edges cross transversely. This scenario arises in applications in clustering, cartography, and visualization, where nearby vertices and edges are often bundled to the same point or overlapping arcs, due to data compression, graph semantics, or low resolution. A \emph{cluster} in this context is a subgraph of $G$ mapped by $\varphi$ to the single point in $M$.
	
	The recognition of weak embeddings turns out to be a purely combinatorial problem independent of the global topology of the manifold $M$ (as noted in~\cite{ChEX15}). The key observation here is that we are looking for an embeddding in a small neighborhood of the image $\phi(G)$, which can be considered as the embedding of some graph $H$. As such, we can replace $M$ with a neighborhood of an embedded graph in the formulation of the problem.
	
	\begin{figure}[h]
		\centering
		\def\svgwidth{.8\textwidth}
		\input{./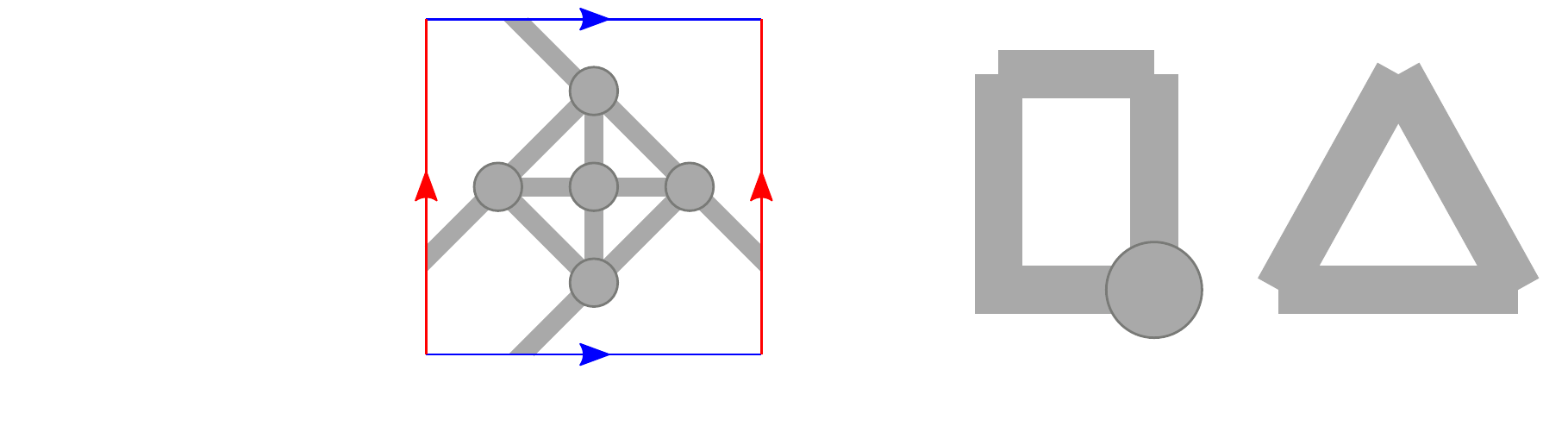_tex}
		\caption{\small
			(a) An embedding of $H=K_5$ in the torus.
			(b) Strip system $\mathcal{H}$ of the embedding of $H$.
			(c) A weak embedding where $G$ is disconnected and $H=C_4$.
			(d) A negative instance where $G=C_8$ and $H=C_3$.}
		\label{fig:intro}
	\end{figure}
	
	\medskip\noindent\textbf{Problem Statement and Results.}
	In Sections~\ref{sec:pre}--\ref{sec:reverse}, we make two simplifying assumptions: We assume that $M$ is an \textbf{orientable} 2-manifold, and $\varphi:G\rightarrow H$ is a \emph{simplicial map} for some graph $H$ embedded in $M$. A map $\varphi:G\rightarrow H$ is \textbf{simplicial} if it maps the vertices of $G$ to vertices of $H$ and the edges of $G$ to edges or vertices of $H$ such that incidence relations are preserved (in particular, the image of an edge cannot pass through the image of a vertex). We explain below how to drop the assumption that $\varphi:G\rightarrow H$ is simplicial at the expense of increasing the running time of our algorithms. We extend our results to nonorientable surfaces in Section~\ref{sec:nonorientable}.
	
	An \textbf{embedded graph} $H$ in an orientable 2-manifold $M$ is an abstract graph together with a \textbf{rotation system} that specifies, for each vertex of $H$, the ccw cyclic order of incident edges.
	The \textbf{strip system} $\mathcal{H}$ of $H$ (a.k.a. the \emph{thickening} of $H$) is a 2-manifold with boundary constructed as follows (see Fig.~\ref{fig:intro}(a)--(b)): For every $u\in V(H)$, create a topological disk $D_u$, and for every edge $uv\in E(H)$, create a rectangle $R_{uv}$. For every $D_u$ and $R_{uv}$, fix an arbitrary orientation of the boundaries $\partial D_u$ and $\partial R_{uv}$, respectively. Partition $\partial D_u$ into $\deg(u)$ arcs, and label them by $A_{u,v}$, for all $uv\in E(H)$, in the cyclic order around $\partial D_u$ determined by the rotation of $u$ in the rotation system of $H$. Finally, the manifold $\mathcal{H}$ is obtained by identifying two opposite sides of every rectangle $R_{uv}$ with $A_{u,v}$ and $A_{v,u}$ via an orientation preserving homeomorphism (i.e., consistently with the chosen orientations of $\partial R_{uv},\partial D_u$ and $\partial D_v$). 
	%
	If $\varphi:G\rightarrow M$ is a continuous piecewise linear map, then $\varphi(G)$ is the embedding of some graph $H$ in $M$, and an $\eps$-neighborhood of $\varphi(G)$ in $M$ is homeomorphic to $\mathcal{H}$ for a sufficiently small $\eps>0$. Consequently, $\varphi:G\rightarrow M$ is a weak embedding if and only if $\varphi:G\rightarrow \mathcal{H}$ is a weak embedding. We may further assume that $\psi_\eps$ maps every edge $uv$ to a Jordan arc that crosses the boundaries of $D_u$ (resp., $D_v$) precisely once (cf.~\cite[Lemma~B2]{ChEX15} and~\cite[Section~4]{FK18_ht}).
	
	We can now formulate an instance of the weak embeddability problem as a simplicial map $\varphi:G\rightarrow H$ (for short, $\varphi$), where $G$ is an abstract graph and $H$ is an embedded graph. The simplicial map $\varphi:G\rightarrow H$ is a \textbf{weak embedding} if there exists an embedding $\psi_\varphi:G\rightarrow \mathcal{H}$ that maps each vertex $v\in V$ to a point in $D_{\varphi(v)}$, and each edge $uv\in E(G)$ to a Jordan arc in $D_{\varphi(u)}\cup R_{\varphi(u)\varphi(v)}\cup D_{\varphi(v)}$ that has a connected intersection with each of $D_{\varphi(u)}$, $R_{\varphi(u)\varphi(v)}$, and $D_{\varphi(v)}$, and $R_{\varphi(u)\varphi(v)}=\emptyset$ if $u=v$. We say that the embedding $\psi_{\varphi}$ \textbf{approximates} $\varphi$. Our main results is the following.
	
	\begin{theorem}\label{thm:main}
		\begin{enumerate}
			\item[]
			\item[{\rm (i)}] Given an abstract graph $G$ with $m$ edges, an embedded graph $H$, and a simplicial map $\varphi: G\rightarrow H$, we can decide in $O(m\log m)$ time whether $\phi$ is a weak embedding.
			\item[{\rm (ii)}] If $\varphi:G\rightarrow H$ is a weak embedding, then for every $\eps>0$ we can also find an embedding
			$\psi_\eps:G\rightarrow M$ with $\|\varphi-\psi_\eps\|<\eps$ in $O(m\log m)$ time.
		\end{enumerate}
	\end{theorem}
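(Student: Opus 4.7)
The plan is to recast weak embeddability as a system of \emph{local} combinatorial constraints. The observation is that $\varphi:G\rightarrow H$ is a weak embedding if and only if, for every $u\in V(H)$, the cluster $\varphi^{-1}(u)$ can be drawn in the disk $D_u$ so that, along each boundary arc $A_{u,v}$, the cyclic order of edge endpoints is forced to match (up to a fixed reversal) the order along $A_{v,u}$, as a consequence of the orientation-preserving identifications defining $\mathcal{H}$. Thus the global problem decomposes into per-vertex planarity subproblems linked by boundary rotation constraints, and the algorithm's task is to jointly decide feasibility of these constraints.

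The core of the approach is a small collection of \emph{local reduction operations} that transform the pair $(G,\varphi)$ into a strictly smaller instance without changing whether it is a weak embedding. Natural candidates include pruning a pendant vertex inside a cluster, contracting an edge whose rotation is already determined by its surroundings, merging parallel edges whose relative order is forced, and unifying two clusters joined by a single edge. Each step either succeeds, or locally detects an obstruction, in which case the algorithm correctly outputs NO. When no reduction applies, the residual instance must be shown to be trivially embeddable.

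To achieve $O(m\log m)$ time, I would maintain the boundary rotation at each arc $A_{u,v}$ by a balanced search tree supporting $O(\log m)$-time concatenation, reversal (via a lazy flip bit), and compatibility tests, analogous to top trees or splay-based sequence representations; a union-find structure tracks the current clusters. Each local operation then runs in amortized $O(\log m)$ time, and because every operation removes at least one vertex or edge of $G$, the total cost is $O(m\log m)$.

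The main obstacle is proving \emph{completeness}: if no reduction detects an obstruction and none remain applicable, the residual instance is genuinely a weak embedding. This requires a structural classification of irreducible instances, playing the role that the $\mathbb{Z}_2$ linear system plays in the previous algorithm, but carried out by purely combinatorial arguments. For part~(ii), once $\varphi$ is validated, the algorithm unwinds its reductions and uses the stored rotation data to realize each cluster as a planar drawing inside $D_u$, routing the edges through the rectangles $R_{uv}$ in the prescribed cyclic order; since all data is already computed, this reconstruction runs in $O(m\log m)$ time and yields an embedding $\psi_\eps$ within any desired tolerance.
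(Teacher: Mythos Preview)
Your proposal is a plan rather than a proof, and it leaves open precisely the step that carries all of the technical weight. You explicitly flag ``completeness'' (the classification of irreducible instances) as the main obstacle and do not resolve it; but this is the theorem. In the paper, the irreducible instances are shown to be disjoint unions of \emph{generalized cycles} (Lemma~\ref{lem:all-safe}), and reaching that form requires a very specific set of operations---\textsf{clusterExpansion} and \textsf{pipeExpansion}---that replace a cluster by a planar gadget built from the SPQR-tree of its components, introducing wheels and ``thick edges'' so that 3-connected pieces carry their orientation with them. Your candidate reductions (prune a pendant, contract a forced edge, merge forced parallel edges, unify two clusters joined by one edge) do not address the central difficulty the paper isolates: a component of pipe-degree~2 has two mirror-image embeddings, and which one is correct can only be determined by propagating constraints through chains of such components (cf.\ Fig.~\ref{fig:flip}). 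Nothing in your list breaks that symmetry, so you cannot expect to reach a trivially decidable residual instance.

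The running-time argument is also off. The paper's operations do \emph{not} monotonically shrink $G$; \textsf{clusterExpansion} creates new clusters, pipes, wheels, and thick edges. Termination and the $O(m)$ bound on the number of iterations come from a nontrivial potential $\Phi_1=4(N_\sigma-Q)+N_s$ (Lemma~\ref{lem:termination}), and the $\log m$ factor does not arise from balanced-tree rotations but from a heavy-path-style amortization over ``stable'' components that are shuffled between clusters without being touched (Lemma~\ref{lem:time}, potential $\Phi_2$). Your accounting ``each operation removes a vertex or edge, each costs $O(\log m)$'' does not match any operation set that is known to be complete. Finally, for part~(ii) the paper's reconstruction is not a straightforward replay: recovering the order inside a contracted pipe requires Belyi's theorem (Theorem~\ref{thm:belyi}) to find a Jordan curve separating $G_u$ from $G_v$ inside $D_{\langle uv\rangle}$, and stable edges must be bundled to stay within the time budget.
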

	Throughout the paper we assume that $G$ has $n$ vertices and $m$ edges.
	In the plane (i.e., $M=\mathbb{R}^2$), only planar graphs admit weak embeddings hence $m=O(n)$, but our techniques work for 2-manifolds of arbitrary genus, and $G$ may be a dense graph. Our result improves the running time of the previous algorithm~\cite{FK18_ht} from $O(m^{2\omega})\leq O(m^{4.75})$ to $O(m\log m)$, where $\omega\in [2,2.373)$ is the matrix multiplication exponent. It also improves the running times of several recent polynomial-time algorithms in special cases, e.g., when the embedding of $G$ is restricted to a given isotopy class~\cite{fulek:LIPIcs:2017:8223}, and $H$ is a path~\cite{ADDF17} (see below).
	
	\medskip\noindent\textbf{Extension to Nonsimplicial Maps and Nonorientable Manifolds.}
	If $\varphi:G\rightarrow H$ is a continuous map (not necessarily simplicial) that is injective on the edges (each edge is a Jordan arc), we may assume that $\varphi(V(G))\subseteq V(H)$ by subdividing the edges in $E(H)$ with at most $n=|V(G)|$ new vertices in $V(H)$ if necessary. Then $\varphi$ maps every edge $e\in G$ to a path of length $O(n)$ in $H$. By subdividing the edges $e\in E(G)$ at all vertices in $V(H)$ along $\varphi(e)$, we reduce the recognition problem to the regime of simplicial maps (Theorem~\ref{thm:main}). The total number of vertices of $G$ may increase to $O(mn)$ and the running time to $O(mn\log (mn))=O(mn\log (m))$.
	\begin{corollary}\label{cor:nonsimplicial}
		Given an abstract graph $G$ with $m$ edges and $n$ vertices, an embedded graph $H$,
		and a piecewise linear continuous map $\varphi: G\rightarrow H$ that is injective on the
		interior of every edge in $E(G)$, we can decide in $O(mn\log (m))$ time whether $\phi$ is a weak embedding.
	\end{corollary}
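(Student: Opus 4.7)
The plan is to reduce the non-simplicial case to the simplicial setting of Theorem~\ref{thm:main} through two rounds of subdivision. Each edge of $G$ may need to be split into $O(n)$ arcs, inflating the vertex/edge count of $G$ from $\Theta(m)$ to $O(mn)$, after which Theorem~\ref{thm:main} runs in $O(mn \log(mn)) = O(mn \log m)$ time.

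\textbf{Refining $H$.} Since $\varphi$ is piecewise linear, $\varphi(V(G))$ is a finite set. For every vertex $v \in V(G)$ whose image $\varphi(v)$ lies in the interior of some edge $xy \in E(H)$, I would insert a new degree-two vertex at $\varphi(v)$, splitting $xy$ into two edges and extending the rotation system in the unique compatible way. After discarding any components of $H$ disjoint from $\varphi(G)$, this produces a refined embedded graph $H'$ with $|V(H')| = O(n)$ and $\varphi(V(G)) \subseteq V(H')$. The strip system of $H'$ is canonically homeomorphic to that of $H$, so weak embeddability is preserved.

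\textbf{Refining $G$.} Because $\varphi$ is injective on the interior of each edge $e \in E(G)$, the image $\varphi(e)$ is a simple path in $H'$ that visits each vertex of $H'$ at most once, and hence has length $O(n)$. I would subdivide $e$ by inserting a degree-two vertex for every vertex of $H'$ interior to $\varphi(e)$, mapping each new vertex to the corresponding vertex of $H'$. After processing all $m$ edges of $G$, the refined graph $G'$ has $O(mn)$ vertices and edges, and $\varphi \colon G' \to H'$ is now a simplicial map. A degree-two subdivision of an edge of $G$ does not affect weak embeddability, because in any approximating embedding the new vertex may be placed anywhere along the corresponding Jordan arc.

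\textbf{Finishing.} Applying Theorem~\ref{thm:main} to the simplicial map $\varphi \colon G' \to H'$ decides weak embeddability in $O(mn \log(mn)) = O(mn \log m)$ time, which is the stated bound. There is no real obstacle here; the only points to verify are (a) that the refinement steps preserve weak embeddability, which follows from the homeomorphism of strip systems in Step 1 and the reversibility of degree-two subdivisions in Step 2, and (b) that the preprocessing itself fits within the time budget, which requires only a linear-time pass to sort the interior vertices of $H'$ along each $\varphi(e)$ in the order they are traversed.
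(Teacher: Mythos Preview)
Your proposal is correct and follows essentially the same argument as the paper. The paper's proof is the short paragraph immediately preceding the corollary: subdivide $H$ with at most $n$ new vertices so that $\varphi(V(G))\subseteq V(H)$, observe that each edge of $G$ then maps to a path of length $O(n)$, subdivide $G$ accordingly to obtain a simplicial map on $O(mn)$ edges, and invoke Theorem~\ref{thm:main}. Your two refinement steps and the invocation of Theorem~\ref{thm:main} mirror this exactly; the only cosmetic difference is that you justify the $O(n)$ path-length bound via $|V(H')|=O(n)$ together with injectivity on edge interiors, which relies on the same implicit assumption the paper makes about the size of $H$.
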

	For example, this applies to straight-line drawings in $\mathbb{R}^2$ if the edges may pass through vertices.
	\begin{corollary}\label{cor:geo}
		Given an abstract graph $G$ with $n$ vertices and a map $\varphi: G\rightarrow \mathbb{R}^2$
		where every edge is mapped to a straight-line segment, we can decide in
		$O(n^2 \log n)$ time whether $\phi$ is a weak embedding.
	\end{corollary}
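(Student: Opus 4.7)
The plan is to reduce the problem to Corollary~\ref{cor:nonsimplicial} after a short geometric preprocessing step. First I would test whether $|E(G)|\le 3n-6$; if not, $G$ is non-planar and therefore cannot admit a weak embedding into $\mathbb{R}^2$, so we can reject immediately. From now on $m=O(n)$, which is what makes the final bound come out to $O(n^2\log n)$ instead of something larger.

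Next I would scan all $\binom{m}{2}=O(n^2)$ pairs of segments to detect transverse crossings; any such crossing certifies that $\varphi$ is not a weak embedding, so we reject. Otherwise the image $\varphi(G)$ is a union of straight segments meeting only at shared endpoints or along collinear overlaps, and I would define the embedded plane graph $H$ by taking $V(H)=\varphi(V(G))$ with coinciding images merged (so $|V(H)|\le n$), and taking as edges the sub-segments obtained by subdividing each $\varphi(e)$ at every vertex image $\varphi(w)$ that lies in its interior, identifying coincident pieces. Finding the subdivision points costs $O(n)$ per edge, hence $O(n^2)$ in total; the rotation system at each vertex of $H$ is read off from the angular order of its incident half-segments, giving $O(n\log n)$ overall. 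Because $H$ is a plane graph on at most $n$ vertices, $|E(H)|\le 3n-6=O(n)$, and the resulting $\varphi:G\to H$ is a piecewise linear continuous map injective on edge interiors. Corollary~\ref{cor:nonsimplicial} then decides weak embeddability in $O(mn\log m)=O(n^2\log n)$ time, which dominates the preprocessing.

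The only real subtlety, and hence the main place to be careful, is to verify that the construction of $H$ is sound when two edges of $G$ have collinear, partially overlapping images. The key combinatorial observation is that every endpoint of such an overlap is an endpoint of one of the two segments, and thus is the image of a vertex of $G$; as a consequence, once we subdivide at all vertex images, the two overlapping edges of $G$ agree on an identical sequence of sub-edges of $H$, which is exactly the combinatorial data that Corollary~\ref{cor:nonsimplicial} requires. With this invariant checked, the reduction goes through cleanly and yields the claimed $O(n^2\log n)$ bound.
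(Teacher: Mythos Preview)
Your proposal is correct and follows the same reduction the paper uses: bound $m=O(n)$ via planarity in $\mathbb{R}^2$ and then invoke Corollary~\ref{cor:nonsimplicial}. The paper states Corollary~\ref{cor:geo} as an immediate application and does not spell out the construction of $H$; your write-up supplies the details the paper omits (the transverse-crossing check, the explicit construction of $H$ on the vertex images, and the observation about collinear overlaps), but the underlying argument is the same.
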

	
	In Section~\ref{sec:nonorientable}, we extend Theorem~\ref{thm:main} and Corollary~\ref{cor:nonsimplicial} to \textbf{nonorientable} surfaces with minor changes in the combinatorial representations, using a signature $\lambda: E(H)\rightarrow\{-1,1\}$ to indicate whether an edge $uv$ (and $R_{uv}$) is orientation-preserving or -reversing (similarly to~\cite{BD09}).
	
	\medskip\noindent\textbf{Related Previous Work.}
	The study of weak embeddings lies at the interface of several independent lines of research in mathematics and computer science. In topology, the study of weak embeddings and its higher dimensional analogs were initiated by Sieklucki~\cite{S69_realization} in the 1960s.
	One of his main results~\cite[Theorem 2.1]{S69_realization} implies the following. Given a graph $G$ and an embedded path $H$, \emph{every} simplicial map $\varphi: G \rightarrow H$ is a weak embedding if and only if every connected component of $G$ is a subcubic graph with at most one vertex of degree three.
	However, even if $H$ is a cycle or a 3-star, it is easy to construct simplicial maps $\varphi: G \rightarrow H$, where $G$ has maximum degree 2, that are not weak embeddings. 
	In this case, a series of recent papers on weakly simple \emph{polygons}~\cite{AAET17,ChEX15,CDPP09} show that weak embeddings can be recognized in $O(n\log n)$ time (the same time suffices when $\phi:C_n\rightarrow H$ is a simplicial map~\cite{ChEX15} or a continuous map~\cite{AAET17}).
	
	Finding efficient algorithms for the recognition of weak embeddings $\varphi: G \rightarrow H$, where $G$ is an arbitrary graph, was posed as an open problem in~\cite{AAET17,ChEX15,CDPP09}. The first polynomial-time solution for the general version follows from a recent variant~\cite{FK18_ht} of the Hanani-Tutte theorem~\cite{Ha34_uber,Tutte70_toward}, which was conjectured by M.~Skopenkov~\cite{Sko03_approximability} in 2003 and in a slightly weaker form already by Repov\v{s} and A.~Skopenkov~\cite{ReSk98_deleted} in 1998. However, this algorithm reduces the problem to a system of $O(m)$ linear equations over $\mathbb{Z}_2$, where $m=|E(G)|$. The running time is dominated by solving this system in $O(m^{2\omega})\leq O(m^{4.75})$ time, where $\omega\in [2,2.373)$ is the matrix multiplication exponent; cf.~\cite{WVV12}.
	
	Weak embeddings of graphs also generalize various graph visualization models such as the recently introduced \emph{strip planarity}~\cite{ADDF17} and \emph{level planarity}~\cite{JLM98_level}; and can be seen as a special case~\cite{AngeliniL19} of the notoriously difficult \emph{clustered-planarity} (for short, \emph{c-planarity})~\cite{CorteseP18,FCE95a_how,FCE95b_planarity}, whose tractability remains elusive
	despite many attempts by leading researchers.
	
	\medskip\noindent\textbf{Outline.}
	Our results rely on ideas from~\cite{ADDF17,ChEX15,CDPP09} and~\cite{FK18_ht}.
	To distinguish the graphs $G$ and $H$, we use the convention that $G$ has \textbf{vertices} and \textbf{edges}, and $H$ has \textbf{clusters} and \textbf{pipes}. A cluster $u\in V(H)$ corresponds to a subgraph $\varphi^{-1}[u]$ of $G$,
	and a pipe $uv\in E(H)$ corresponds to a set of edges $\varphi^{-1}[uv]\subseteq E(G)$.
	
	The main tool in our algorithm is a local operation, called ``cluster expansion,'' which generalizes similar operations introduced previously for the case that $G$ is a cycle~\cite{CDPP09}. Given an instance $\varphi:G\rightarrow H$ and a cluster $u\in V(H)$, it modifies $u$ and its neighborhood (by replacing $u$ with several new clusters and pipes) such that weak embeddability is invariant under the operation in the sense that the resulting new instance $\varphi':G'\rightarrow H'$ is a weak embedding if and only if $\varphi:G\rightarrow H$ is a weak embedding. Our operation increases the number of clusters and pipes, but it decreases the number of ``ambiguous'' edges (i.e., multiple edges in the same pipe). The proof of termination and the running time analysis use potential functions.
	
	In a preprocessing phase, we perform a cluster expansion operation at each cluster $u\in V(H)$. The main loop of the algorithm applies another operation, ``pipe expansion,'' for two adjacent clusters $u,v\in V(H)$ under certain conditions. It merges the clusters $u$ and $v$, and the pipe $uv\in E(H)$ between them, and then invokes cluster expansion. If any of these operations finds a local configuration incompatible with an embedding, then the algorithm halts and reports that $\varphi$ is not a weak embedding (this always corresponds to some nonplanar subconfiguration since the neighborhood of a single cluster or pipe is homeomorphic to a disk). We show that after $O(m)$ successive operations, we obtain an irreducible instance for which our problem is easily solvable in $O(m)$ time. Ideally, we end up with $G=H$ (one vertex per cluster and one edge per pipe), and $\varphi=\mathrm{id}$ is clearly an embedding. Alternatively, $G$ and $H$ may each be a cycle (possibly $G$ winds around $H$ multiple times), and we can decide whether $\varphi$ is a weak embedding in $O(m)$ time by a simple traversal of $G$.
	If $G$ is disconnected, then each component falls into one of the above two cases (Fig.~\ref{fig:intro}(c)--(d)), i.e., the case when $\varphi=\mathrm{id}$ or the case when $\varphi\not=\mathrm{id}$.
	
	The main challenge was to generalize previous local operations (that worked well for  cycles~\cite{ChEX15,CDPP09,Sko03_approximability}) to arbitrary graphs. Our expansion operation for a cluster $u\in V(H)$
	simplifies each component of the subgraph $\varphi^{-1}[u]$ of $G$ independently. Each component is planar (otherwise it cannot be perturbed into an embedding in a disk $D_u$). However, a planar (sub)graph with $k$ vertices may have $2^{\Omega(k\log k)}$ combinatorially different embeddings: some of these may or may not be compatible with adjacent clusters. The embedding of a (simplified) component $C$ of $\varphi^{-1}[u]$ depends, among other things, on the edges that connect $C$ to adjacent clusters. The \textbf{pipe-degree} of $C$ is the number of pipes that contain the edges incident to $C$.
	If the pipe-degree of $C$ is 3 or higher, then the rotation system of $H$ constrains the embedding of $C$. If the pipe-degree is 2, however, then the embedding of $C$ can only be determined up to a reflection, unless $C$ is connected by \emph{two} independent edges to a component in $\varphi^{-1}[v]$ whose orientation is already fixed; see Fig.~\ref{fig:flip}.
	
	\begin{figure}[h]
		\centering
		\def\svgwidth{.75\textwidth}
		\input{./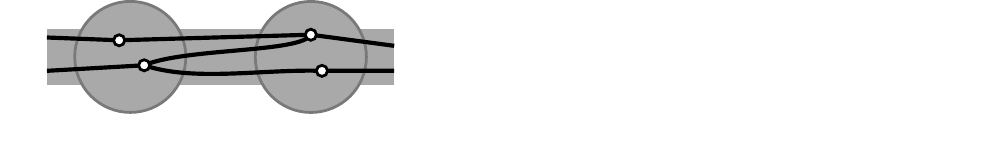_tex}
		\caption{\small
			Two adjacent clusters, $u$ and $v$, that each contain two components of pipe-degree 2 (left).
			These components merge into a single component in $D_u\cup R_{uv}\cup D_v$.
			In every embedding, the order of the pipe-edges $a,b$ determines the order of the pipe-edges $c,d$.
			The operation \textsf{pipeExpansion}$(uv)$ transforms the component on the left to two wheels
			connected by three edges (a so-called thick edge) shown on the right.
		}
		\label{fig:flip}
	\end{figure}
	
	We need to maintain the feasible embeddings of the components in all clusters efficiently. In~\cite{FK18_ht}, this problem was resolved by introducing 0-1 variables for the components, and aggregating the constraints into a system of linear equations over $\mathbb{Z}_2$, which was eventually resolved in $O(m^{2\omega})\leq O(m^{4.75})$ time. We improve the running time to $O(m\log m)$ by maintaining the feasible embeddings simultaneously with our local operations.
	
	Another challenge comes from the simplest components in a cluster $\varphi^{-1}[u]$. Long chains of degree-2 vertices, with one vertex per cluster, are resilient to our local operations. Their length may decrease by only one (and cycles are irreducible). We need additional data structures to handle these ``slowly-evolving'' components efficiently. We use a dynamic heavy-path decomposition data structure and a suitable potential function to bound the time spent on such components.
	
	\medskip\noindent\textbf{Organization.}
	In Section~\ref{sec:pre}, introduce additional terminology for an instance $\varphi:G\rightarrow H$, and show how to modify the subgraphs of $G$ within each cluster to reduce the problem to an instance in ``normal form,'' a ``simplified form,'' and introduce a combinatorial representation of weak embeddings that we use in our algorithm.
	The simplification step relies on the concept of SPQR-trees, developed in~\cite{DT96} for the efficient representation of combinatorial embeddings of a graph, which we also review in this section.
	In Section~\ref{sec:op}, we present the cluster expansion and pipe expansion operations and prove that  weak embeddibility is invariant under both operations. We use these operations repeatedly in Section~\ref{sec:alg} to decide whether a simplicial map $\varphi:G\rightarrow H$ is a weak embedding.
	Section~\ref{sec:reverse} discusses how to reverse a sequence of operations to perturb a weak embedding into an embedding.
	The adaptation of our results to nonorientable surfaces $M$ is discussed in Section~\ref{sec:nonorientable}.
	We conclude with open problems in Section~\ref{sec:con}.

	\section{Preliminaries}
	\label{sec:pre}
	
	In this section, we describe modifications within the clusters of a simplicial map $\varphi:G\rightarrow H$ to bring it to ``normal form'' (properties \ref{P1}--\ref{P2}) and ``simplified form'' (properties \ref{P3}--\ref{P4}). These properties allow for a purely combinatorial representation of weak embeddability (in terms of permutations), which we use
	in the proof of correctness of the algorithms in Sections~\ref{sec:op} and~\ref{sec:alg}.
	
	\paragraph{Definitions.}
	Two instances $\phi:G\rightarrow H$ and $\phi':G'\rightarrow H'$ are called \textbf{equivalent} if $\phi$ is a weak embedding if and only if $\phi'$ is a weak embedding.
	We call an edge $e\in E(G)$ a \textbf{pipe-edge} if $\varphi(e)\in E(H)$, or a \textbf{cluster-edge} if $\varphi(e)\in V(H)$.
	For every cluster $u\in V(H)$, let $G_u$ be the subgraph of $G$ induced by $\phi^{-1}[u]$.
	For every pipe $uv\in E(H)$, $\varphi^{-1}[uv]$ stands for the set of pipe-edges mapped to $uv$ by $\varphi$.
	
	The \textbf{pipe-degree} of a connected component $C$ of $G_u$, denoted $\text{pipe-deg}(C)$, is the number of pipes that contain some edge of $G$ incident to $C$.
	A vertex $v$ of $G_u$ is called a \textbf{terminal} if it is incident to a pipe-edge.
	For an integer $k\geq 3$, the $k$-vertex \textbf{wheel} graph $W_k$ is a join of a \textbf{center} vertex $c$ and a cycle of $k-1$ \textbf{external} vertices.
	Refer to \cite{D16_graph_theory} for standard graph theoretic terminology (e.g., cut vertex, 2-cuts, biconnectivity).

	\subsection{Normal Form}
	An instance $\phi:G\rightarrow H$ is in \textbf{normal form} if every cluster $u\in V(H)$ satisfies:
	
	\begin{enumerate}[label=(P\arabic*)]
		\item \label{P1} Every terminal in $G_u$ is incident to exactly one cluster-edge and one pipe-edge.
		\item \label{P2} There are no degree-2 vertices in $G_u$.
	\end{enumerate}
	
	We now describe subroutine \textsf{normalize}$(u)$ that, for a given instance $\varphi:G\rightarrow H$ and a cluster $u\in V(H)$,
	returns an equivalent instance $\varphi':G'\rightarrow H$ such that $u$ satisfies \ref{P1}--\ref{P2}; refer to Fig.~\ref{fig:normalized}(a)--(b).
	
	\begin{figure}[h]
		\centering
		\def\svgwidth{\textwidth}
		\input{./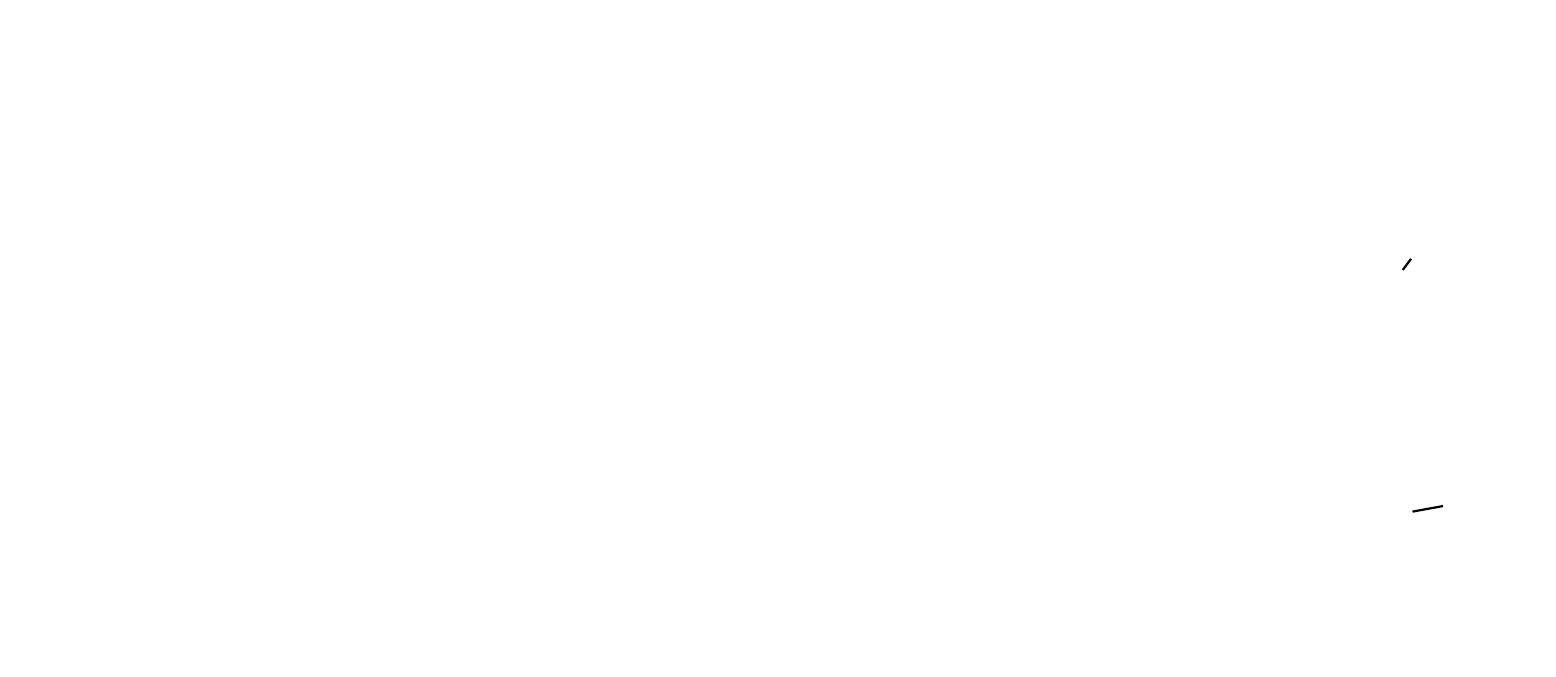_tex}
		\caption{\small
			Changes in a cluster caused by \textsf{normalize} and \textsf{simplify}. (a) Input, (b) after \textsf{normalize}, (c) after first part of step 1, (d) after step 1, and (e) after step 2 of subroutine \textsf{simplify}.
			Dashed lines, green dots, green lines, and blue lines represent pipe-edges, pipe-vertices, edges in $\overline{E}_C$, and virtual edges, respectively.}
		\label{fig:normalized}
	\end{figure}
	
	\paragraph{\textsf{normalize}$(u)$.} Input: an instance $\varphi:G\rightarrow H$ and a cluster $u\in V(H)$.
	
	\smallskip\noindent
	Subdivide every pipe-edge $pq$ where $\phi(p)=u$ into a path $(p,p',q)$ such that $\phi'(p)=\phi'(p')=u$, and $\phi'(q)=\phi(q)$.
	Note that the new vertex $p'$ is a terminal in $G$ and a leaf in $G_u$ (i.e., $\deg_{G_u}(p')=1$).
	Successively suppress every vertex $p$ of $G_u$ with $\deg_{G_u}(p)=2$ by merging its incident edges.
	If this creates a loop, delete the loop.
	
	\begin{lemma}\label{lem:normalize}
		Given an instance $\phi:G\rightarrow H$ and a cluster $u\in V(H)$, the instance $\phi'=\textsf{normalize}(u)$ and $\phi$ are equivalent and $u$ satisfies \ref{P1}--\ref{P2} in $\phi'$.
		The subroutine runs in $O(\sum_{p\in V(G_u)} \deg_{G}(p))$ time.
		By successively applying $\textsf{normalize}$ to all clusters in $V(H)$, we obtain an equivalent instance in normal form in $O(|E(G)|)$ time.
	\end{lemma}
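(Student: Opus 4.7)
The plan is to verify three claims: equivalence of $\phi$ and $\phi'=\textsf{normalize}(u)$, that $u$ satisfies \ref{P1}--\ref{P2} in $\phi'$ (and that this is preserved when other clusters are later normalized), and the two stated running-time bounds. For equivalence, I would argue separately that each of the three atomic moves performed by \textsf{normalize}$(u)$ preserves weak embeddability. Subdividing a pipe-edge $pq$ with $\phi(p)=u$ into $(p,p',q)$ inserts a new vertex $p'$ in cluster $u$; from any approximating embedding $\psi_\eps$ of $G$ we obtain one of $G'$ by placing $p'$ on the arc representing $pq$ arbitrarily close to $\psi_\eps(p)\in D_u$, and the reverse direction is immediate by contracting $p'$ back to $p$. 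Suppressing a vertex $p\in V(G_u)$ with $\deg_{G_u}(p)=2$ is justified by observing that, after the subdivision round, $p$ carries no pipe-edges, so its two incident arcs lie entirely in $D_u$; concatenating them (or splitting the merged arc at a point in $D_u$) gives a bijection between approximating embeddings of the two graphs. Finally, a loop at $a\in V(G_u)$ created by a suppression is a cluster-edge drawn inside $D_u$ from $\psi_\eps(a)$ to itself, which can be realized as (or omitted in favor of) an arbitrarily small Jordan curve in a free sector of the rotation at $a$, without disturbing any other arc of the embedding.

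For \ref{P1}--\ref{P2}, after the subdivision round every terminal of $G_u$ is one of the newly created leaves $p'$, each with exactly one cluster-edge $pp'$ and one pipe-edge $p'q$; this gives \ref{P1}. Because $\deg_{G_u}(p')=1$, the leaves $p'$ are never candidates for suppression, so \ref{P1} persists throughout the rest of the subroutine. The suppression step is iterated (``successively'') because a loop deletion at $a$ reduces $\deg_{G_u}(a)$ by $2$ and may expose a new degree-$2$ vertex; when none remain, \ref{P2} holds. To see that \ref{P1}--\ref{P2} at $u$ survive subsequent calls \textsf{normalize}$(v)$ for $v\ne u$, note that the latter only adds vertices to $V(G_v)$ and merely renames the $v$-side endpoint of pipe-edges $uv$, leaving every vertex of $V(G_u)$ and all its incident edges untouched.

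For the running time, the subdivision phase inspects each edge incident to $V(G_u)$ once in $O(\sum_{p\in V(G_u)}\deg_G(p))$ time. The suppression phase maintains a queue of degree-$2$ vertices of $G_u$; since every local move only weakly decreases degrees, each vertex is enqueued and suppressed at most once and each check for a newly exposed degree-$2$ neighbor takes $O(1)$ via adjacency lists, so this phase adds only $O(|V(G_u)|)$ further work. Summing over all clusters in $V(H)$, and using that the degree of a vertex is unaffected by normalizations of clusters other than its own, the total cost telescopes to $\sum_{p\in V(G)}\deg_G(p)=2|E(G)|$, which is the advertised $O(|E(G)|)$.

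The main delicate point I expect is the loop-deletion equivalence: intuitively a digon inside a disk can always be inserted or removed, but one must actually exhibit the modification of $\psi_\eps$ without disturbing any other arc. I would handle this by confining the digon to a small open neighborhood of $\psi_\eps(a)\subseteq D_u$ that is disjoint from every other arc of the embedding and contains a free sector in the rotation at $a$, where the two parallel arcs can be drawn arbitrarily close together.
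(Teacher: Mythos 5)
Your proposal is correct and follows essentially the same route as the paper: equivalence via replacing an edge by a path (subdivision/suppression) and inserting/deleting a loop inside $D_u$, \ref{P1}--\ref{P2} by construction with locality giving preservation under later \textsf{normalize} calls, and the time bound by counting subdivisions and degree-2 suppressions per cluster and summing to $O(|E(G)|)$. Your added details (the queue for degree-2 vertices and the explicit perturbation for loop deletion) only flesh out steps the paper treats as immediate.
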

	\begin{proof}
		The instances $\phi$ and $\phi'$ are clearly equivalent since
		(i) we can always replace the embedding of an edge by a path and vice-versa, and
		(ii) a loop can always be deleted or added to a vertex in an embedding.
		There are $O(\sum_{p\in V(G_u)} \deg_{G}(p))$ pipe-edges incident to a vertex in $u$.
		Hence the subroutine performs $O(\sum_{p\in V(G_u)} \deg_{G}(p))$ subdivisions.
		There are at most $|E(G_u)|=O(\sum_{p\in V(G_u)} \deg_{G}(p))$ degree-2 vertices in $G_u$.
		By construction, the resulting graph $G_u'$ satisfies \ref{P1}--\ref{P2}.
		All changes are local and applying $\textsf{normalize}$ to $u$ does not
		change properties \ref{P1}--\ref{P2} in other clusters.
		Therefore, we can obtain the normal form of $\phi$ in $O(|E(G)|)$ time.
	\end{proof}
	
	For every cluster $u\in V(H)$, the graph $G_u$ may have several components. For each component $C$, we define a multigraph $\overline{C}$ that represents the interactions of $C$ with vertices in adjacent clusters. Specifically,
	for each component $C$ of $G_u$ of a cluster $u\in V(H)$ satisfying \ref{P1}--\ref{P2}, we define the multigraph $\overline{C}$ in two steps as follows; refer to Fig.~\ref{fig:normalized}(b).
	
	\begin{enumerate}
		\item For every pipe $uv\in E(H)$ incident to $u$, create a new vertex $v'$, called \textbf{pipe-vertex}, and identify all terminal vertices of
		$C$ incident to some edge in $\varphi^{-1}[uv]$ with $v'$ (this may create multiple edges incident to $v'$).
		
		\item If $\text{pipe-deg}(C)=2$, connect the two pipe-vertices with an edge $e$ and let $\overline{E}_C=\{e\}$.
		If $\text{pipe-deg}(C)\geq 3$, connect all pipe-vertices by a cycle in the order determined by the rotation of $u$
		and let $\overline{E}_C$ be the set of edges of this cycle. Hence, if 
		$\text{pipe-deg}(C)\le 1$ then $\overline{E}_C=\emptyset$.
	\end{enumerate}

	Let $\overline{C}=(V(\overline{C}),E(\overline{C}))$, where
	$V(\overline{C})$ consists of nonterminal vertices in $V(C)$ and $\text{pipe-deg}(C)$ pipe-vertices,
	and $E(\overline{C})$ consists of the edges induced by nonterminal vertices in $V(C)$, (multi) edges created in step~1 (each of which corresponds to an edge in $E(C)$) and edges in $\overline{E}_C$.
	It is clear that every embedding of $\overline{C}$ can be converted into an embedding of $C$
	such that the rotation of the pipe vertices in $\overline{C}$ determines the cyclic order of
	terminals along the facial walk of the outer face of $C$.

	\subsection{SPQR-Trees}
	\textbf{SPQR-trees} were introduced by Di Battista and Tamassia~\cite{DT96} for an efficient representation of all combinatorial plane embeddings of a graph. Let $G$ be a biconnected planar graph. The SPQR-tree $T_G$ of $G$ represents a recursive decomposition of $G$ defined by its (vertex) 2-cuts.
	A deletion of a 2-cut $\{u,v\}$ disconnects $G$ into two or more components $C_1,\ldots ,C_i$, $i\geq 2$.
	A \textbf{split component} of $\{u,v\}$ is either an edge $uv$ (which is not one of the components $C_j$) or the subgraphs of $G$ induced by $V(C_j)\cup\{u,v\}$ for $j=1,\ldots, i$.
	The tree $T_G$ captures the recursive decomposition of $G$ into split components defined by 2-cuts of $G$.
	A node $\mu$ of $T_G$ is associated with a multigraph called $\skel(\mu)$ on a subset of $V(G)$, and has a \textbf{type} in $\{$S,P,R$\}$. If the type of $\mu$ is S then $\skel(\mu)$ is a cycle of 3 or more vertices.
	If the type of $\mu$ is P then  $\skel(\mu)$ consists of 3 or more parallel edges between a pair of vertices. If the type of $\mu$ is R
	then  $\skel(\mu)$ is a 3-connected graph on 4 or more vertices.
	An edge in $\skel(\mu)$ is \textbf{real} if it is an edge in $G$, or \textbf{virtual} otherwise.
	A virtual edge connects the two vertices of a 2-cut, $u$ and $v$, and represents a subgraph of $G$
	obtained in the recursive decomposition, containing a $uv$-path in $G$ that does not contain any edge in $\skel(\mu)$.
	Two nodes $\mu_1$ and $\mu_2$ of $T_G$ are adjacent if $\skel(\mu_1)$ and $\skel(\mu_2)$ share exactly two vertices, $u$ and $v$, that form a 2-cut in $G$.
	Each virtual edge in $\skel(\mu)$ corresponds bijectively to a pair of adjacent nodes in $T_G$.
	No two S nodes (resp., no two P nodes) are adjacent. Every edge in $E(G)$ appears in the skeleton of exactly one node.
	The tree $T_G$ has $O(|E(G)|)$ nodes and it can be computed in $O(|E(G)|)$ time~\cite{DT96}.
	
	It is also known that $T_G$ represents all combinatorial embeddings of $G$ in $\mathbb{R}^2$
	in the following manner~\cite{DT96}. Choose a root node for $T_G$ and an embedding of its skeleton.
	Then successively replace each virtual edge $uv$ by the skeleton of the corresponding node $\mu$
	minus the virtual edge $uv$ in $\skel(\mu)$. In each step of the recursion,
	if $\mu$ is of type R, $\skel(\mu)$ can be flipped (reflected) around $u$ and $v$,
	and if $\mu$ is of type P, the parallel edges between $u$ and $v$ can be permuted arbitrarily.
	
	\subsection{Combinatorial Representation of Weak Embeddings}
	\label{sec:combRep}
	Given an embedding $\psi_\phi:G\rightarrow\mathcal{H}$, where $\phi$ is in normal form, we define a combinatorial representation $\pi=\pi(\psi_\phi)$ as the set of total orders of edges in $R_{uv}$, for all pipes $uv\in E(H)$. Specifically, for every pipe $uv\in E(H)$, fix an orientation of the boundary of $R_{uv}$ (e.g., the one used in the construction of the strip system $\mathcal{H}$). Record the order in which the pipe-edges in the embedding $\psi_\phi$ intersect a fixed side of $\partial R_{uv}$ (say, the side $\partial R_{uv}\cap \partial D_u$) when we traverse it in the given orientation. Let $\Pi(\varphi)$ be the set of combinatorial representations $\pi(\psi_\phi)$ of all embeddings $\psi_\phi:G\rightarrow\mathcal{H}$. Note that $\varphi:G\rightarrow H$ is a weak embedding if and only if $\Pi(\varphi)\neq\emptyset$.
	
	Conversely, let $\pi$ be a set of total orders of edges in rectangles $R_{uv}$, for all pipes $uv\in E(H)$. We show (in Lemma~\ref{lem:comb-reconstruction} below) how to use the normal form and SPQR-trees to decide whether $\pi\not\in\Pi(\varphi)$ .
	We say that two components $C_1$ and $C_2$ of $G_u$ \textbf{cross} with respect to $\pi$ if and only if their terminals interleave in the cyclic order around $\partial D_u$ (i.e., there exists no cut in the cyclic order in which all terminals of $C_1$ appear before all terminals of $C_2$). If $\pi\in\Pi(\varphi)$, then $\pi$ cannot induce two crossing components in $G_u$, for any $u\in V(H)$.
	
	\begin{lemma}\label{lem:comb-reconstruction}
		Given a set of total orders $\pi$ for a map $\phi:G\rightarrow H$, we can decide whether $\pi\not\in\Pi(\varphi)$ in $O(m)$ time.
		If $\pi\in\Pi(\varphi)$, then we can also find an embedding $\psi_\phi:G\rightarrow\mathcal{H}$, $\pi=\pi(\psi_\phi)$, in $O(m)$ time.
	\end{lemma}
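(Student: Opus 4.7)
The plan is to reduce the test to standard planarity checks on a linear-size family of auxiliary graphs, one per component of each cluster subgraph, so that the total work is linear in $m$.

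First, use $\pi$ and the rotation system of $H$ to obtain, for every cluster $u\in V(H)$, the cyclic sequence of pipe-edge endpoints around $\partial D_u$: concatenate the total orders given by $\pi$ for the pipes incident to $u$ in the order of pipes around $u$ in $H$. By \ref{P1} this is the same as the cyclic order of terminals of $G_u$. This step takes $O(m)$ time overall because each pipe-edge contributes two endpoints. Next, for each cluster $u$, label the terminals by the component of $G_u$ they belong to, and test whether the labels form a non-interleaving (laminar) sequence on $\partial D_u$; this is parenthesis matching with a stack and runs in $O(|V(G_u)|+|E(G_u)|)$ time per cluster, hence $O(m)$ in total. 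If this test fails for some $u$, two components of $G_u$ cross with respect to $\pi$ and we report $\pi\notin\Pi(\varphi)$.

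Next, for each component $C$ of every $G_u$, decide whether $C$ admits an embedding in the closed disk with terminals in the cyclic order induced by $\pi$. Build the multigraph $\overline{C}$; the cycle $\overline{E}_C$ visits pipe-vertices in the rotation order at $u$, and at each pipe-vertex $v'$ the cyclic order of the incident multi-edges is prescribed by the restriction of $\pi$ to the pipe $uv$. Such an embedding of $C$ exists iff $\overline{C}$ has a planar embedding in which (i) the rotation at every pipe-vertex agrees with this prescription, and (ii) $\overline{E}_C$ bounds a face. To enforce (i), replace each pipe-vertex $v'$ of degree $k$ by a wheel gadget of size $O(k)$ that fixes the cyclic order of its incident edges; to enforce (ii), attach a dummy apex connected to every vertex of $\overline{E}_C$ on one side of the cycle. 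The resulting auxiliary multigraph $\widehat{C}$ has size $O(|V(\overline{C})|+|E(\overline{C})|)$ and is planar iff the desired embedding exists. Apply a linear-time planarity test to $\widehat{C}$. Summed over all components of all clusters, the sizes of the graphs $\widehat{C}$ add up to $O(m)$ by \ref{P1}--\ref{P2}, so all tests take $O(m)$ time. Equivalently, one can test feasibility directly on the SPQR-tree of each block of $\overline{C}$: the only embedding choices are flips at R-nodes and permutations of parallel edges at P-nodes, and the constraints from $\pi$ and from $\overline{E}_C$ can be resolved on the tree in linear time.

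Finally, if all tests pass, extract from each planarity certificate a combinatorial planar embedding of $\overline{C}$ with $\overline{E}_C$ as the outer facial cycle; this yields an embedding of $C$ in $D_u$ with its terminals in the prescribed cyclic order on $\partial D_u$. Nest the disks hosting the components of $G_u$ inside $D_u$ according to the laminar structure established in the second step, and draw the pipe-edges of each $R_{uv}$ as $|\varphi^{-1}[uv]|$ parallel arcs in the order specified by $\pi$. Concatenating these pieces gives a Jordan-arc embedding $\psi_\phi:G\rightarrow\mathcal{H}$ with $\pi(\psi_\phi)=\pi$. All drawing steps are linear in the local size, for a total of $O(m)$ time.

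The main obstacle is step three: ensuring that both the prescribed rotations at pipe-vertices and the requirement that $\overline{E}_C$ is a face are captured by a single linear-time planarity test. The wheel-gadget plus apex reduction handles both simultaneously and cleanly, while the SPQR-tree viewpoint explains why no additional choices need to be coordinated across the components or across the graph, keeping the running time linear.
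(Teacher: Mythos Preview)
Your proof is correct, but it takes a more elaborate route than the paper's. The paper works directly with each component $C$ of $G_u$ rather than detouring through $\overline{C}$: it forms the auxiliary graph $\widetilde{C}=C\cup W_{t_C+1}$, where the wheel's external vertices are the \emph{terminals themselves} in the cyclic order induced by $\pi$, and tests planarity of $\widetilde{C}$. This single gadget simultaneously forces the terminals onto the outer face in the prescribed cyclic order, so there is no need to separately enforce rotations at pipe-vertices and a facial condition on $\overline{E}_C$. Your wheel-plus-apex construction on $\overline{C}$ encodes the same constraint but in two layers, and the SPQR alternative you mention is not used here by the paper at all. For the inter-component crossing test, the paper does it \emph{after} embedding a first component, by checking whether the remaining components' terminals lie in a single face of $D_u$; your upfront parenthesis-matching on the labeled cyclic sequence is an equally valid (and arguably cleaner for the running-time argument) alternative. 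Both approaches yield the same $O(m)$ bound; the paper's is shorter because the single wheel-through-terminals gadget avoids the pipe-vertex bookkeeping.
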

	\begin{proof}
		Let $\mathcal{H}$ be the strip system for $\varphi: G\rightarrow H$.
		For each pipe $uv\in E(H)$, draw $|\phi^{-1}[uv]|$ parallel Jordan arcs in $R_{uv}$ connecting $\partial D_u$ and $\partial D_v$. For each cluster $u\in V(H)$, $\pi$ defines a ccw cyclic order of terminals around $\partial D_u$.
		Let $t_C$ denote the number of terminals in $C$.
		Create a graph $\widetilde{C}$ by the union of $C$ and a wheel $W_{t_C+1}$ whose center is a new vertex, and whose external vertices are the terminals of $C$ in the order defined by $\pi$.
		An embedding of $C$ in which the terminals appear in the outer face in the same cyclic order as the one defined by $\pi$ exists if and only if $\widetilde{C}$ is planar. If $\widetilde{C}$ is not planar, report that $\pi\notin\Pi(\varphi)$. Else, embed $C$ inside $D_u$ given the position of the already embedded terminals on $\partial D_u$. This subdivides $D_u$ into faces. If a component connects terminals in different faces, then two components in $G_u$ cross, hence $\pi\notin\Pi(\varphi)$. If no two components cross, we can incrementally embed all components of $G_u$, as there is always a face of $D_u$ that contains all terminals of each remaining component.
	\end{proof}

	\subsection{Simplified Form}
	Given an instance $\phi:G\rightarrow H$ in normal form, we simplify the graph $G$ by removing parts of $G_u$, for all $u\in V(H)$, that are locally ``irrelevant'' for the embedding, such as 0-, 1-, and 2-connected components that are not adjacent to edges in any pipe incident to $u$.
	Formally, for each component $C$ of $G_u$, we call a split component defined by a 2-cut $\{p,q\}$ of $\overline{C}$ \textbf{irrelevant} if it contains no pipe-vertices.
	An instance is in \textbf{simplified form} if it is in normal form an every $u\in V(H)$ satisfies properties \ref{P3}--\ref{P4} below.
	
	\begin{enumerate}[label=(P\arabic*)]
		\setcounter{enumi}{2}
		\item \label{P3} For every component $C$ of $G_u$, $\overline{C}$ is biconnected and
		every 2-cut of $\overline{C}$ contains at least one pipe-vertex.	
	\end{enumerate}
	
	Assuming that a cluster $u$ satisfies \ref{P3}, we define $T_C$ as the SPQR tree of $\overline{C}$ where $C$ is a component of $G_u$.
	Given a node $\mu$ of $T_C$, let the \emph{core} of $\mu$, denoted $\core(\mu)$, be the subgraph obtained from $\skel(\mu)$ by deleting all pipe-vertices.
	Property \ref{P4} below will allow us to bound the number of vertices of $G_u$ in terms of its number of terminals (cf.~Lemma~\ref{obs:1}).

	\begin{enumerate}[label=(P\arabic*)]
		\setcounter{enumi}{3}
		
		\item \label{P4} For every component $C$ of $G_u$, and every R node $\mu$ of $T_C$, $\core(\mu)$ is
		isomorphic to a wheel $W_k$, for some $k\geq 4$, whose external vertices have degree 4 in $G_u$.
		
	\end{enumerate}

	We now describe subroutine \textsf{simplify}$(u)$ that, for a given instance $\varphi:G\rightarrow H$ in normal form and a cluster $u\in V(H)$,
	returns an instance $\varphi':G'\rightarrow H$ such that $u$ satisfies \ref{P1}--\ref{P4}.
	We break the subroutine into two steps.
	
	\paragraph{\textsf{simplify}$(u)$.} Input: an instance $\varphi:G\rightarrow H$ in normal form and a cluster $u\in V(H)$.
	
	\noindent
	For every component $C$ of $G_u$, do the following.
	
	\smallskip\noindent\textbf{(1)}
	If $C$ is not planar, report that $\phi$ is not a weak embedding and halt.
	If $\text{pipe-deg}(C)=0$, then delete $C$. Else compute $\overline{C}$, and find the maximal biconnected component
	$\widehat{C}$ of $\overline{C}$ that contains all pipe-vertices.
	The component $\widehat{C}$ trivially exists if $\text{pipe-deg}(C)\in\{1,2\}$, and if $\text{pipe-deg}(C)\geq 3$, it exists since $\overline{E}_C$ forms a
	cycle containing all pipe-vertices. Modify $C$ by deleting all vertices of $\overline{C}\setminus \widehat{C}$, and update $\overline{C}$ (by deleting the same vertices from $\overline{C}$, as well);
	refer to Fig.~\ref{fig:normalized}(b)--(c).
	Consequently, we may assume that $\overline{C}$ is biconnected and contains all pipe-vertices.
	Compute the SPQR tree $T_C$ for $\overline{C}$.
	Set a node $\mu_r$ in $T_C$ whose skeleton contain a pipe-vertex as the root of $T_C$.
	Traverse $T_C$ using DFS.
	If a node $\mu$ is found such that $\skel(\mu)$ contains no pipe-vertex, let $\{p,q\}$ be the 2-cut of $\overline{C}$ shared by $\skel(\mu)$ and $\skel(\text{parent}(\mu))$.
	Replace all irrelevant split components defined by $\{p,q\}$ by a single edge $pq$ in $C$.
	If $p$ or $q$ now have degree 2, suppress $p$ or $q$, respectively.
	Update $\overline{C}$ accordingly, and update $T_C$ to reflect the changes in $\overline{C}$ by changing $pq$ from virtual to real and possibly suppressing $p$ and/or $q$ in $\skel(\text{parent}(\mu))$, which also deletes node $\mu$ and its descendants since their skeletons contain edges in the deleted irrelevant split components; refer to Fig.~\ref{fig:normalized}(c)--(d).
	Continue the DFS ignoring deleted nodes.
	
	
	\smallskip\noindent\textbf{(2)}
	While there is an R node $\mu$ in $T_C$, of a component $C$ in $G_u$,  that does not satisfy \ref{P4}, do the following.
	Let $Y$ be the set of edges in $\skel(\mu)$
	adjacent to $\core(\mu)$ (i.e., edges between a vertex in $\core(\mu)$ and a pipe-vertex).
	Since $\mu$ is an R node, it represents a 3-connected planar graph, which has a unique combinatorial embedding (up to reflection).
	If we contract $\core(\mu)$ to a single vertex, the rotation of that vertex defines a cyclic order on $Y$.
	Let $e_i=p_iq_i$, where  $p_i\in \core(\mu)$ and $q_i$ is a pipe-vertex, be the $i$-th edge in this order. Recall that $e_i$ is an edge in $\skel(\mu)$ and therefore represents a subgraph $\overline{C}_{e_i}$ of $\overline{C}$.
	If $e_i$ is real, $\overline{C}_{e_i}$ is a single edge.
	Otherwise, $\overline{C}_{e_i}$ contains all split components defined by the 2-cut $\{p_i,q_i\}$ that do not contain $\core(\mu)$ as a subgraph.
	Do the following changes in $C$, which will incur changes in $\overline{C}$, as well. Replace $\core(\mu)$ by the wheel graph $W_{|Y|+1}$, 
	by deleting all interior edges and vertices of $\core(\mu)$,
	and inserting a new vertex adjacent to all vertices of the outer cycle. 
	If $\overline{C}_{e_i}$ is not a single edge, then replace 
	$p_i$ by two adjacent vertices, $p_i'$ and $p_i''$ where $p_i'$ 
	is in $W_{|Y|+1}$ and $p_i''$ is in $\overline{C}_{e_i}$ (that is,
	$p_i'$ is adjacent to $p_i''$ and the three neighbors of $p_i$ in the wheel, and $p_i''$ is adjacent to $p_i'$ and the remaining neighbors of $p_i$).
	%
	%
	Update $T_C$ to reflect the changes in $\overline{C}$ by updating $\skel(\mu)$ and adding an S node between $\mu$ and an adjacent node $\mu_i$ whose skeleton contained $e_i$ for each virtual edge $e_i\in Y$ where $p_i$ has not been not suppressed; refer to Fig.~\ref{fig:normalized}(d)--(e).
	
	\begin{lemma}\label{lem:simplify}
		Given an instance $\phi:G\rightarrow H$ in normal form and a cluster $u\in V(H)$, the instance $\phi'=\textsf{simplify}(u)$ and $\phi$ are equivalent and $u$ satisfies \ref{P1}--\ref{P4} in $\phi'$.
		The operation runs in $O(|E(G_u)|)$ time.
		By successively applying $\textsf{simplify}$ to all clusters in $V(H)$, we obtain an equivalent instance in simplified form in $O(m)$ time.
	\end{lemma}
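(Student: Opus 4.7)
The plan is to prove the three claims separately: equivalence $\phi\equiv\phi'$, the properties \ref{P1}--\ref{P4} at cluster $u$, and the running-time bounds, handling step~(1) and step~(2) of the subroutine in turn.

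For equivalence under step~(1), I would first justify the planarity test: any approximating embedding draws $C$ inside the disk $D_u$, so $C$ must be planar, and the rejection is sound. Next, if $\text{pipe-deg}(C)=0$, then $C$ has no edges leaving $D_u$, so it can be freely deleted or re-added as a small planar drawing inside any face of $D_u$. Finally, to justify replacing an irrelevant split component defined by a 2-cut $\{p,q\}$ by a single edge $pq$: by the definition of irrelevance the split component contains no pipe-vertex, so in any approximating embedding it lies inside one of the faces incident to the $p$--$q$ separation and can be replaced by a Jordan arc from $p$ to $q$ inside the same face; conversely, any such arc can be replaced back by a planar drawing of the split component. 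These modifications also establish \ref{P3}, since after step~(1) the updated $\overline{C}$ is the maximal biconnected subgraph of the original $\overline{C}$ containing all pipe-vertices, and every remaining 2-cut separates two subgraphs each still containing a pipe-vertex.

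For equivalence under step~(2), I would invoke the SPQR-tree characterization of combinatorial embeddings. An R node $\mu$ has a unique planar embedding up to reflection, so it determines the cyclic order $e_1,\ldots,e_{|Y|}$ in which the edges joining $\core(\mu)$ to pipe-vertices of $\skel(\mu)$ appear around the contraction of $\core(\mu)$. Replacing $\core(\mu)$ by the wheel $W_{|Y|+1}$ produces a new 3-connected graph whose unique embedding up to reflection has the vertices $p_1',\ldots,p_{|Y|}'$ in exactly the same cyclic order on its outer face, and whose two flip states match the two flip states of $\mu$. Splitting each $p_i$ into $p_i'$ and $p_i''$ decouples the wheel from the subgraph $\overline{C}_{e_i}$ but leaves them joined by the single edge $p_i'p_i''$, which can be routed inside the unique face of the wheel adjacent to $e_i$. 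Combined with Lemma~\ref{lem:comb-reconstruction}, this shows that the set of admissible cyclic orders of terminals around $\partial D_u$ is preserved, so the new instance is a weak embedding if and only if the old one is. Property \ref{P4} holds by construction: each external vertex of the new wheel has three neighbors in the wheel plus one further neighbor (either the original real edge endpoint or $p_i''$), giving degree $4$ in $G_u$. No step creates degree-$2$ vertices or new terminals, so \ref{P1}--\ref{P2} carry over from the input.

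For the running time, the SPQR tree of $\overline{C}$ can be built in $O(|E(\overline{C})|)$ time with total skeleton size $O(|E(\overline{C})|)$, and $|E(\overline{C})|=O(|E(C)|)$ by construction. Step~(1) performs one DFS on $T_C$, and the work done inside each discarded subtree is charged to the edges of $G$ it deletes. In step~(2), the work at an R node $\mu$ is $O(|E(\core(\mu))|)$, while the wheel that replaces it has $O(|Y|)\leq O(|E(\core(\mu))|)$ edges, so each R node contributes linearly in the size of its original core and the cores of distinct R nodes are edge-disjoint. Summing over all components of $G_u$ gives $O(|E(G_u)|)$, and summing over all clusters yields $O(m)$, since every cluster-edge is counted once and every pipe-edge at most twice. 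The hardest part of the argument is the equivalence step in~(2): one must verify that substituting a wheel for the 3-connected $\core(\mu)$ on the same attachment vertices preserves not only the cyclic order of the pipe-edges but also the compatibility with crossing constraints imposed by other components of $G_u$, which is what connects the local transformation to the combinatorial representation of Lemma~\ref{lem:comb-reconstruction}.
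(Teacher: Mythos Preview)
Your proposal is essentially correct and follows the same approach as the paper: both argue equivalence by showing that the set $\Pi(\varphi)$ of admissible cyclic orders of terminals around $\partial D_u$ is preserved, via the SPQR-tree characterization and the unique-up-to-reflection embedding of R nodes, and both handle the running time by charging work to edges of the (linear-size) SPQR skeletons.

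There is one small gap worth noting in your justification of \ref{P3}. You conclude that ``every remaining 2-cut separates two subgraphs each still containing a pipe-vertex,'' but \ref{P3} requires the stronger statement that the 2-cut \emph{itself} contains a pipe-vertex. The two are not the same: a priori one could have a 2-cut $\{p,q\}$ with neither $p$ nor $q$ a pipe-vertex, yet with pipe-vertices on both sides. The paper closes this gap with a short contradiction argument: after step~(1) every node of $T_C$ has a pipe-vertex in its skeleton, so such a $\{p,q\}$ would have to lie in $\core(\mu)$ for an S node $\mu$; but then one split component of $\{p,q\}$ is a path of length at least two, forcing a degree-2 vertex in $G_u$ and contradicting \ref{P2}. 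Adding this sentence completes your argument.
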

	
	\begin{proof}
		First we prove that $u$ satisfies \ref{P1}--\ref{P4} in $\phi'$.
		Since $\phi$ is in normal form, $u$ satisfies \ref{P1}--\ref{P2} in $\phi$.
		By construction, $u$ still satisfies \ref{P1}--\ref{P2} in $\phi'$.
		For \ref{P3}, note that after step~1, $\overline{C}$ is
		biconnected and every node $\mu$ of $T_C$ contains a pipe-vertex in its skeleton.
		Step~2 does not change this property.
		This implies that $\core(\mu)$ contains only real edges for every node $\mu$.
		Suppose for contradiction that there is a 2-cut $\{p,q\}$ such that neither $p$ nor $q$ is a pipe-vertex.
		Then $\{p,q\}$ must be in $\core(\mu)$ where $\mu$ is an S node, or else either $p$ or $q$ would have been deleted for being in $\overline{C}\setminus \widehat{C}$.
		Then one split component of $\{p,q\}$ is a path of length two or more.
		But $G_u'$ has no degree-2 vertex by \ref{P2}, a contradiction.
		Hence, $u$ satisfies \ref{P3} in $\phi'$.
		By definition, after step~2 $u$ satisfies \ref{P4} in $\phi'$.
		
		We now show that the operation takes $O(|E(G_u)|)$ time.
		In step~1, planarity testing is done in linear time for each component $C$ of $G_u$~\cite{HoTa74_planarity}.
		We obtain $\widehat{C}$ by a DFS.
		We compute $T_C$ in $O(|E(C)|)$ time~\cite{DT96}.
		Replacing irrelevant split components by one edge can be done in $O(|E(C)|)$ overall time.
		In step 2, we can obtain a list of R nodes in $O(|E(C)|)$ time.
		The changes in step~2 are local, both in $C$ and $T_C$, and do not influence whether other R nodes satisfy \ref{P4}.
		Step~2 takes $O(|E(G_u)|)$ time overall by processing each R node sequentially. All the changes are local to $u$ and, by successively applying \textsf{simplify}, we obtain a simplified form in $O(|E(G)|)$ time.
		
		Finally, we show that $\phi$ and $\phi'$ are equivalent. Notice that there is a bijection between the terminals of $\phi$ and $\phi'$.
		We show that $\Pi(\varphi)=\Pi(\varphi')$, i.e., given $\pi\in\Pi(\varphi)$, then $\pi\in\Pi(\varphi')$ and vice versa.
		Notice that for every $\pi\in \Pi(\pi)$ and $u\in V(H)$, two components $C_1$ and $C_2$ of $G_u$ cross if and only if the corresponding components $C'_1$ and $C'_2$ of $G'_u$ in also cross.
		Then, it suffices to show that the SPQR trees of $\overline{C}$ and $\overline{C'}$ for corresponding components $C$ of $G_u$ and $C'$ of $G'_u$ represent the same constraints in the cyclic order of terminals around $D_u$.
		Step~1 deletes components of pipe-degree 0, which do not pose any restriction on the cyclic order of terminals.
		The subgraphs represented by irrelevant subtrees in the SPRQ tree of $\overline{C}$ can be flipped independently and, since they do not contain pipe-vertices, their embedding does not interfere with the order of edges adjacent to pipe-vertices.
		Hence, step~1 does not alter any constraint on the cyclic order of terminals.
		By construction, step~2 does not change the circular order of edges in $\core(\mu)$.
		Replacing $\core(\mu)$ by a wheel $W_{|Y|+1}$ does not change any of the constraints on the cyclic order of terminals.
	\end{proof}

	\begin{lemma}\label{obs:1}
		After \textsf{simplify}$(u)$, every component $C$ of $G_u$ contains $O(t_{C})$ edges,
		where $t_{C}$ is the number of terminals in $C$.
	\end{lemma}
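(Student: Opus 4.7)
The plan is to bound $|V(C)|$ by $O(t_C)$ and then combine this with planarity of $C$ (ensured by step~1 of \textsf{simplify}) and the degree lower bounds from \ref{P1}--\ref{P2} to conclude $|E(C)| \le 3|V(C)| - 6 = O(t_C)$. Since by construction $|V(C)| = |V(\overline{C})| - |P| + t_C$, where $P$ denotes the set of pipe-vertices of $\overline{C}$ and $|P| \le t_C$, it suffices to bound the number of non-pipe vertices of $\overline{C}$ by $O(t_C)$.

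I would analyze this count via the SPQR tree $T_C$ of $\overline{C}$, exploiting that after \textsf{simplify} every skeleton contains at least one pipe-vertex (step~1 deletes all irrelevant subtrees). Properties \ref{P3} and \ref{P4} then constrain each skeleton: for an S-node whose skeleton is a cycle of length $k\ge 4$, any two cycle-non-adjacent vertices form a 2-cut of $\overline{C}$, so \ref{P3} forces at most two (cycle-adjacent) non-pipe vertices per skeleton; for a P-node, the two vertices form a 2-cut, so \ref{P3} makes at least one of them a pipe-vertex; and for an R-node, \ref{P4} gives $\core(\mu) = W_{k_\mu}$ whose $k_\mu - 1$ external vertices each have degree $4$ in $G_u$ and degree $3$ in the wheel, so the fourth edge at each external vertex leads either directly to a pipe-vertex in $\skel(\mu)$ (the real-edge case of step~2 of \textsf{simplify}) or, via a virtual edge $e_i$, into a subtree of $T_C$ every skeleton of which contains a pipe-vertex.

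With these local bounds in hand, I would set up a charging scheme in which every non-pipe vertex of $\overline{C}$ is charged to a ``witness'' pipe-vertex reached either in the same skeleton or across a single virtual edge. Each pipe-vertex is shown to absorb only $O(1)$ charges: in S- and P-nodes this is immediate from the local constraints above, and in R-nodes the $k_\mu - 1$ external wheel vertices receive $k_\mu - 1$ pairwise distinct witnesses (one per $e_i$). Summing over all nodes of $T_C$ and using $|P|\le t_C$ then yields $|V(\overline{C})| = O(t_C)$, and hence, via planarity of $C$, $|E(C)| = O(t_C)$.

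The hard part is the aggregation, since a single vertex of $\overline{C}$ can appear in the skeletons of several nodes along a subtree of $T_C$, so naively summing the per-node non-pipe counts over-counts. I would address this by charging each non-pipe vertex only at the unique node closest to a chosen root of $T_C$ in which it appears, and use that every tree edge of $T_C$ corresponds to a 2-cut containing at least one pipe-vertex (by \ref{P3}); this ensures that whenever a non-pipe vertex is shared across an adjacent pair of skeletons, a fresh pipe-vertex is available on the separating 2-cut to pay for the overlap, so the overall $O(1)$ ratio of non-pipe vertices to pipe-vertices is preserved.
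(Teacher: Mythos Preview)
Your approach differs substantially from the paper's and contains a gap in the charging argument. You propose to charge each non-pipe vertex of $\overline{C}$ to a ``witness'' pipe-vertex and claim that each pipe-vertex absorbs only $O(1)$ charges. This fails already for a single R-node: take $\text{pipe-deg}(C)=2$, so $\overline{C}$ has exactly two pipe-vertices $v',w'$, and suppose the root R-node has $\core(\mu)=W_k$ with all edges $e_i$ real (each external wheel vertex adjacent to $v'$ or $w'$). Then there are $k$ non-pipe vertices but only two pipe-vertices available as witnesses, so no assignment can keep the multiplicity bounded. Your phrase ``$k_\mu-1$ pairwise distinct witnesses (one per $e_i$)'' conflates the edges $e_i$ (which are distinct) with their pipe-vertex endpoints $q_i$ (which need not be). The quantity that is actually $\Theta(t_C)$ is the total degree of the pipe-vertices in $\overline{C}$, i.e., the number of \emph{edges} incident to pipe-vertices; if you charge to those instead of to the pipe-vertices themselves, the R-node case goes through, but then the ``$|P|\le t_C$'' step is irrelevant and the aggregation across shared 2-cut vertices that you flag as ``the hard part'' still needs to be carried out.

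The paper's proof bypasses the SPQR machinery entirely. It observes that \ref{P4} makes every wheel in $C$ a maximal biconnected block; contracting each wheel to a point produces a graph $\widehat{C}$ that is a tree without degree-$2$ vertices whose leaves are exactly the $t_C$ terminals (by \ref{P1}--\ref{P3}), so $|E(\widehat{C})|\le 2t_C$. A direct charging argument---the $2k$ edges of a wheel $W_{k+1}$ are paid for by the $k$ outgoing edges at its external vertices, at most $4$ units per edge of $\widehat{C}$---then gives $|E(C)|\le 5|E(\widehat{C})|\le 10\,t_C$. This is shorter, yields an explicit constant, and never needs planarity of $C$ or the $3|V|-6$ bound.
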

	\begin{proof}
		By~\ref{P4}, every wheel subgraph in $C$ is a maximal biconnected component. 
		Let us contract every wheel in $C$ into a single vertex and remove any loops created by the contraction. Let $\widehat{C}$ denote the resulting component. We have $|E(C)|\le 5|E(\widehat{C})|$, since the number of edges decreases by at most
		$4|E(\widehat{C})|$.
		Indeed, a wheel $W_{k+1}$ has $2k$ edges, which are contracted, and its $k$ external vertices are incident to $k$ edges that are not contracted by~\ref{P4}. We charge each of these $k$ edges in $E(\widehat{C})$ for two edges of $W_k$. Then every edge in $E(\widehat{C})$ receives at most 2 units of charge from each of its endpoints, hence at most 4 units of charge overall.
		As all maximal biconnected components of $C$ have been contracted to single vertices of degree at least three, $\widehat{C}$ is a tree without degree-2 vertices whose leaves are precisely the terminals of $C$ by~\ref{P1}--\ref{P3}. Since the number of edges in a tree is at most twice the number its leaves, by the above inequality we have	$|E(C)|\le 5|E(\widehat{C})|\le 5\cdot 2 \cdot t_{C}$,	as claimed.
	\end{proof}

	
	\section{Operations}
	\label{sec:op}
	
	In this section, we present our two main operations, \textsf{clusterExpansion} and \textsf{pipeExpansion}, that we use successively in our recognition algorithm. Given an instance $\varphi$ and a cluster $u$ in simplified form, operation \textsf{clusterExpansion}$(u)$ either finds a configuration that cannot be embedded locally in the neighborhood of $u$ and reports that $\varphi$ is not a weak embedding, or replaces cluster $u$ with a group of clusters and pipes (in most cases reducing the number of edges in pipes). It first modifies the embedded graph $H$, and then handles each component of $G_u$ independently.
	
	Operation \textsf{pipeExpansion}$(uv)$ first merges two adjacent clusters, $u$ and $v$ (and the pipe $uv$) into a single cluster $\cluster{uv}$ and invokes \textsf{clusterExpansion}$(\cluster{uv})$. We continue with the specifics.
	
	\subsection{Cluster Expansion}
	
	For a cluster $u\in V(H)$ in an instance $\varphi:G\rightarrow H$, let the \textbf{expansion disk} $\Delta_u$ be a topological closed disk containing a single cluster $u\in V(H)$ and intersecting only pipes incident to $u$.
	
	\paragraph{\textsf{clusterExpansion}$(u)$.}
	Input: an instance $\varphi:G\rightarrow H$ in simplified form and a cluster $u\in V(H)$.
	We either report that $\varphi$ is not a weak embedding or return an instance $\varphi':G'\rightarrow H'$.
	The instance $\varphi'$ is computed incrementally: initially $\varphi'$ is a copy of $\varphi$.
	Steps 0--3 will insert new clusters and pipes into $H'$ that are within $\Delta_u$ without describing their embedding, Step 4 will determine the rotation system for the new clusters and check whether the rotation system induces any crossing between new pipes within $\Delta_u$, and Step~5 brings $\phi'$ to its simplified form.
	
	\smallskip
	\noindent\textbf{Step 0.}
	For each pipe $uv\in E(H)$ incident to $u$, subdivide $uv$ by inserting a cluster $u_v$ in $H'$ at the intersection of $uv$ and $\partial \Delta_u$. If $\deg(u)\ge 3$, then add a cycle $C_u$ of pipes through all clusters in $\partial \Delta_u$ (hence the clusters $u_v$ appear along $C_u$ in the order given by the rotation of $u$); and if $\deg(u)=2$, then add a pipe between the two clusters in $\partial \Delta_u$.
	Delete $u$ (and all incident pipes). As a result, the interior of $\Delta_u$ contains no clusters.
	
	\noindent\textbf{Step 1: Components of pipe-degree~1.}
	For each component $C$ of $G_u$ such that $\text{pipe-deg}(C)=1$, let $uv$ be the pipe to which the pipe-edges incident to $C$ are mapped to.
	Move $C$ to the new cluster $u_v$, i.e., set $\varphi'(C)=u_v$.
	(For example, see the component in $u_x$ in Fig.~\ref{fig:clusterExp}.)
	
	\noindent\textbf{Step 2: Components of pipe-degree~2.}
	For each pair of clusters $\{v,w\}$ adjacent to $u$, denote by $B_{vw}$ the set of components of $G_u$ of degree 2 adjacent to pipe-edges in $\varphi^{-1}[uv]$ and $\varphi^{-1}[uw]$. For all nonempty sets $B_{vw}$ do the following.
	
	\textbf{(a)} Insert the pipe $u_vu_w$ into $H'$ if it is not already present.
	
	\textbf{(b)} For every component $C\in B_{vw}$, do the following:
	
	\textbf{(b1)}
	Compute $\overline{C}$ (by \ref{P3}, $\overline{C}$ is biconnected).
	Compute the SPQR tree $T_C$ of $\overline{C}$.
	Set a node $\mu_r$ as the root of $T_C$ so that $\skel(\mu_r)$ contains both pipe-vertices,
	which we denote by $v'$ and $w'$ (i.e., consistently with Section~\ref{sec:pre}).
	Note that $\mu_r$ cannot be of type P, otherwise $C$ would not be connected.
	
	\textbf{(b2)} 
	If $\mu_r$ is of type S, then $E(\skel(\mu_r))\setminus \overline{E}_C$ forms a path between the pipe vertices $v'$ and $w'$, that we denote by $P$, where the first and last edges may be virtual.
	Notice that path $P$ contains at most 3 edges otherwise \ref{P2} or \ref{P3} would not be satisfied.
	If $P=(v',w')$ has length 1, then subdivide $P$ into 3 edges $P=(v',p_1,p_2,w')$. 
	If $P=(v',p,w')$ has length 2, then $\{v',p\}$ is a 2-cut in $\overline{C}$ that defines two split components,
	$C_{v}$ and $C_{w}$, containing $v'$ and $w'$, respectively. Note that it is not possible that both $v'p$ and $pw'$ are real edges, because there are no degree-2 vertices in $G_u$ by~\ref{P2}. Split $p$ into two vertices, $p_1$ and $p_2$,
	connected by an edge so that $p_1$ (resp., $p_2$) is adjacent to every vertex in $C_{v}$
	(resp., $C_w$) that was adjacent to $p$. 
	Finally, if $P=(v',p_1,p_2,w')$ has length 3, we do not modify $P$. 
	Then, in all three cases, the edge $p_1p_2$ defines an edge cut in $C$
	that splits $C$ into two components each incident to a single pipe,
	one to $uv$ and the other to $uw$. We define $\varphi'$ so that it maps each of the two components into
	$u_v$ or $u_w$ accordingly. (See the components incident to pipe $u_{v_1}u_{v_3}$ in Fig.~\ref{fig:clusterExp}.)
	
	\textbf{(b3)} 
	If $\mu_r$ is of type R, by \ref{P4}, $\core(\mu_r)$ is a wheel subgraph $W_{k}$.
	Let $k_v$ and $k_w$, where $k_v+k_w=k-1$, be the number of edges between $W_k$ and $v'$, and between $W_k$ and $w'$, respectively.
	Replace $W_{k}$ by two wheel graphs $W_{k_v+4}$ and $W_{k_w+4}$ connected by three edges so that the circular order of the edges around $v'$ and $w'$ is maintained (recall that an R node has a unique embedding).
	The triple of edges between $W_{k_v+4}$ and $W_{k_w+4}$ is called a \textbf{thick edge}.
	The thick edge defines a 3-edge-cut that splits $C$ into two components, each with a wheel graph.
	We define $\varphi'$ so that each of the two components is mapped to its respective vertex $u_v$ or $u_w$.
	(See the components incident to pipe $u_{v_1}u_w$ in Fig.~\ref{fig:clusterExp}.)

	\noindent\textbf{Step 3: Components of pipe-degree~3 or higher.}
	For all the remaining components $C$ (i.e., $\text{pipe-deg}(C)\ge 3$) of $G_u$, do the following.
	Assume $C$ is incident to pipe-edges mapped to the pipes $uv_1, uv_2,\ldots , uv_d$.
	
	\textbf{(a)}
	Compute $\overline{C}$ and its SPQR tree $T_C$ and let $v_i'$ be the pipe-vertex corresponding to terminals adjacent to edges in  $uv_i$ for $i=1,2,\ldots, d$.
	Set the node $\mu_r$ as the root of $T_C$ such that $\skel(\mu_r)$ contains the cycle $\overline{E}_C$.
	The type of $\mu_r$ is R, otherwise $C$ would be disconnected.
	
	\textbf{(b) Changes in $H'$}.
	By \ref{P4}, we have that $\core(\mu_r)$  is a wheel graph $W_{k}$, where $k-1\ge d$.
	For $j=1,2,\ldots, k-1$, let $p_j$ be the $j$-th external vertex of $W_{k}$ and $p_C$ be its central vertex.
	Create a copy of $W_{k}$ using clusters and pipes: Create a cluster $u_{p_j}$ that represents each vertex $p_j$,
	a cluster $u_C$ that represents vertex $p_C$, see Fig.~\ref{fig:clusterExp}(middle).
	Insert the copy of $W_k$ in $H'$. For every $1\leq i\leq d$ and $1\leq j\leq k-1$,
	insert a pipe $u_{p_j}u_{v_i}$ in $H'$ if an edge $p_jv_i'$ is present in $E(\skel(\mu_r))$.
	
	\textbf{(c) Changes in $G'$}.
	Delete all edges and the central vertex of $W_{k}$ from $G'$, which splits $C$ into $k-1$ components.
	Set $\varphi'(p_j)=u_{p_j}$ for $j\in\{1,\ldots,k-1\}$.
	By \ref{P4}, every cluster $u_{p_j}$ is adjacent to a single cluster, say $u_{v_i}$, outside of $W_k$.
	We modify $\varphi'$ so that it maps the vertices of the component of $C$ containing $p_j$ to $u_{v_i}$ with the exception of $p_j$, which is mapped to $u_{p_j}$. (Note that the cluster $u_C$ and all incident pipes are empty.)
	
	\noindent\textbf{Step 4: Local Planarity Test.}
	Let $H_u$ be the subgraph induced by the newly created clusters and pipes.
	Let $\widetilde{H_u}$ denote the graph obtained as the union of $H_u$
	and a star whose center is a new vertex (not in $V(H_u)$), and whose leaves
	are the clusters in $\partial\Delta_u$. Use a planarity testing algorithm to test
	whether $\widetilde{H_u}$ is planar.
	If $\widetilde{H_u}$ is not planar, report that $\phi$ is not a weak embedding and halt.
	Otherwise, find an embedding of $\widetilde{H_u}$ in which the center of the star is in the outer face.
	This defines a rotation system for $H_u$. The rotation system of $H'$ outside of $\Delta_u$ is
	inherited from $H$.
	
	\noindent\textbf{Step 5: Normalize.}
	Finally, apply \textsf{normalize} to each new cluster in $H'$.
	(This step subdivides edges so that \ref{P1} is satisfied as shown in Fig.~\ref{fig:clusterExp}(right).)
	
	\begin{figure}[h]
		\centering
		\def\svgwidth{\textwidth}
		\input{./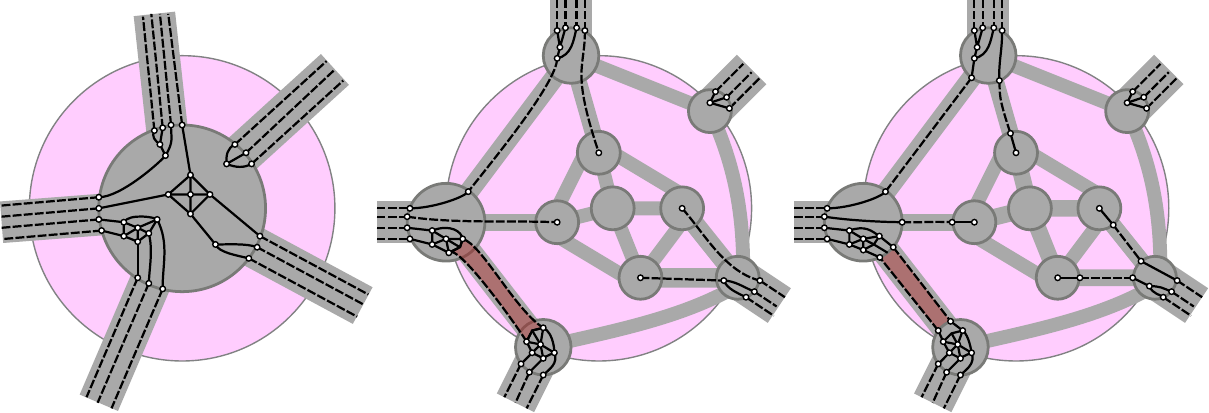_tex}
		\caption{\small
			Changes in a cluster caused by \textsf{clusterExpansion}.
			Left: input. Note that the embedding of $G_u$ is not known and the grey disks (rectangles) represent points (arcs). Middle: the instance after step 4. Right: output.	
			Red rectangles indicate triples of edges defining a thick edge.}
		\label{fig:clusterExp}
		\vspace{-.6\baselineskip}
	\end{figure}
	
	\begin{lemma}\label{lem:clusterExpansion}
		Given an instance $\phi:G\rightarrow H$ in simplified form containing a cluster $u$, $\textsf{clusterExpansion}(u)$ either reports that $\phi$ is not a weak embedding or produces an instance $\phi':G'\rightarrow H'$ in simplified form that is equivalent to $\varphi$ in $O(|E(G_u)|+\deg(u))$ time.
	\end{lemma}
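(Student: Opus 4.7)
The plan is to establish three claims in turn: (i) the output is in simplified form when the operation does not halt, (ii) the output instance is equivalent to the input, and (iii) the running time is $O(|E(G_u)|+\deg(u))$. Throughout, I will work with the combinatorial representation $\Pi(\varphi)$ from Section~\ref{sec:combRep}, using Lemma~\ref{lem:comb-reconstruction} to translate between embeddings and cyclic orders.

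First I would verify the structural claim. Outside $\Delta_u$ nothing changes, so all other clusters still satisfy \ref{P1}--\ref{P4}. For the new clusters inside $\Delta_u$, Step~5 invokes \textsf{normalize} to enforce \ref{P1}--\ref{P2}. To check \ref{P3}--\ref{P4} at each new cluster $u_v$, $u_{p_j}$, or $u_C$, I would argue component-by-component: the components placed in $u_v$ by Step~1 come from the input in simplified form and have had no internal modifications; the two pieces produced in Step~2(b2) from a pipe-degree-2 component are sub-split-components of a biconnected $\overline{C}$, and the edge $p_1p_2$ together with the newly inserted pipe-vertex in $u_w$ (resp.\ $u_v$) yields a biconnected $\overline{C'}$; the analogous argument applies to the thick edge of Step~2(b3) and to the wheel pieces produced in Step~3(c), whose cores already satisfy \ref{P4} by construction. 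Any residual degree-2 vertices or split-components without pipe-vertices created at the boundary are removed by the Step~5 normalization (if necessary, one may also observe that an extra call to \textsf{simplify} would be harmless and idempotent on the Step~3 output).

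Next I would prove equivalence, the core of the lemma. I claim there is a bijection $\Pi(\varphi)\leftrightarrow\Pi(\varphi')$. The terminals of $\varphi$ and $\varphi'$ are in natural correspondence (Step~5 only subdivides edges, which preserves the combinatorial representation), so the only question is whether the local constraint at $u$ in $\varphi$ equals the aggregate constraint imposed by the new clusters and pipes inside $\Delta_u$ in $\varphi'$. For a pipe-degree-1 component $C$, its embedding is entirely free inside $D_u$ and thus contributes no constraint; moving $C$ into $u_v$ in Step~1 preserves this. For a pipe-degree-2 component $C$, the SPQR analysis of $\overline{C}$ shows that the only embedding choice across the $\{v',w'\}$-cut is a reflection; splitting $C$ at the central edge $p_1p_2$ (in the S-root case) or at the thick edge (in the R-root case) replaces this reflection choice by the ordering freedom of a single edge/triple inside the new pipe $u_vu_w$, and Lemma~\ref{lem:comb-reconstruction} then captures the same compatibility with the boundary orders on $\partial D_v$ and $\partial D_w$. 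For a pipe-degree-$\geq 3$ component, \ref{P4} forces $\core(\mu_r)$ to be a wheel with a unique embedding, so the cyclic order of the pipe-vertices $v_i'$ around $u$ is fixed; replicating this wheel as the cluster/pipe gadget of Step~3 encodes exactly that forced cyclic order. Finally, Step~4 tests whether all these local cyclic orders at newly created clusters can be realized simultaneously inside the disk $\Delta_u$ with the clusters of $\partial\Delta_u$ on the outer boundary; planarity of $\widetilde{H_u}$ with the star center in the outer face is exactly the condition that a planar embedding with this boundary exists, which in turn is equivalent to the existence of a non-crossing cyclic order of the terminals of $G_u$ around $\partial D_u$ compatible with every component's constraints. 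Thus Step~4 halts iff no valid $\pi\in\Pi(\varphi)$ exists for the local constraints at $u$, i.e., iff $\varphi$ is not a weak embedding.

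For the running time, Step~0 takes $O(\deg(u))$. By Lemma~\ref{obs:1}, every component $C$ of $G_u$ has $|E(C)|=O(t_C)$, so computing $\overline{C}$ and $T_C$ costs $O(|E(C)|)$ per component and $O(|E(G_u)|)$ in total across Steps~1--3; the local modifications in Steps~2(b) and 3 are also linear in the affected components. Step~4 runs a planarity test on $\widetilde{H_u}$, which has $O(|E(G_u)|+\deg(u))$ vertices and edges, in linear time via~\cite{HoTa74_planarity}. Step~5 applies Lemma~\ref{lem:normalize} to the new clusters at total cost $O(|E(G_u)|+\deg(u))$. The dominant cost is therefore $O(|E(G_u)|+\deg(u))$. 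The main obstacle I expect is the equivalence argument in Step~4: carefully matching the planarity of $\widetilde{H_u}$ with the pointwise realizability of every per-component constraint produced in Steps~1--3, so that no "phantom" obstruction is introduced and no genuine obstruction is missed.
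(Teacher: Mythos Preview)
Your approach is essentially the paper's: both argue per component that the SPQR constraints on terminal orders are preserved by Steps~1--3, both invoke Lemma~\ref{lem:comb-reconstruction} to pass between embeddings and cyclic orders, and the running-time bookkeeping matches.

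There is, however, one genuine gap in the equivalence argument. Your per-component analysis handles only one direction cleanly: from an embedding of $\varphi'$ you reconstruct one of $\varphi$. For the forward direction ($\varphi$ weak $\Rightarrow$ $\varphi'$ weak) you need more than ``$\widetilde{H_u}$ is planar.'' Step~4 selects \emph{one particular} planar embedding of $\widetilde{H_u}$, and this fixes the rotation system of every new cluster in $H'$. You must argue that this specific choice is compatible with \emph{every} embedding $\psi_\varphi$ of the input; otherwise Step~4 could pick a rotation system for which $\Pi(\varphi')=\emptyset$ even though $\Pi(\varphi)\neq\emptyset$, and your claimed bijection $\Pi(\varphi)\leftrightarrow\Pi(\varphi')$ would fail. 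The paper closes this by observing that, by construction, every $2$-cut of $H_u$ consists of two clusters on $\partial\Delta_u$; hence $H_u$ has a \emph{unique} planar embedding with the prescribed outer cycle, so Step~4's choice is forced and automatically consistent with any $\psi_\varphi$. Your sketch notes uniqueness only for a single wheel (the pipe-degree~$\geq 3$ case), not for $H_u$ as a whole. Relatedly, the closing sentence ``Step~4 halts iff \ldots\ iff $\varphi$ is not a weak embedding'' overclaims: halting does imply $\varphi$ is not weak, but non-halting does not imply it \emph{is}---it only means you proceed to an instance $\varphi'$ whose status is settled later.
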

	\begin{proof}
		All clusters of $H$ other than $u$ are clusters in $H'$, as well. By assumption, $\phi:G\rightarrow H$ is simplified form, hence these clusters satisfy \ref{P1}--\ref{P4} in $\phi'$. By construction, the new clusters satisfy \ref{P1}--\ref{P4} in $\phi'$, consequently $\phi':G'\rightarrow H'$ is in simplified form.
		
		Step~5 of the operation receives as input an instance $\phi^*:G^*\rightarrow H^*$ and returns an equivalent instance $\phi'$ by Lemma~\ref{lem:normalize}. It remains to show that $\phi^*$ and $\phi$ are equivalent, and to analyze the running time.
		
		First assume that $\phi$ is a weak embedding, and so there is an embedding $\psi_\phi:G\rightarrow\mathcal{H}$.
		We need to show that there exists an embedding $\psi_{\phi^*}:G^*\rightarrow \mathcal{H}^*$,
		and hence $\phi^*$ is a weak embedding.
		This can be done by performing steps~0--3 on the graphs $G$ and $H$, and the embedding $\psi_\phi$,
		which will produce $G^*$ and $H^*$, and an embedding $\psi_{\phi^*}:G^*\rightarrow \mathcal{H}^*$.
		By construction, every 2-cut in $H_u^*$ consists of a pair of clusters in $\partial\Delta_u$.
		Consequently, $H_u^*$ has a unique embedding with the given outer cycle.
		In particular, $\widetilde{H_u}^*$ is planar. The rotation of the new clusters of $H^*$ is uniquely defined and therefore must be consistent with any embedding of $G$, including $\psi_\phi$.

		Next, assume that $\phi^*$ is a weak embedding.
		Given an embedding $\psi_{\phi^*}:G^*\rightarrow \mathcal{H}^*$, we construct an embedding $\psi_\phi:G\rightarrow\mathcal{H}$ as follows.
		Let $H_u^*$ be the subgraph of $H^*$ induced by the clusters created by $\textsf{clusterExpansion}(u)$.
		Note that $H^*$ is a connected plane graph:
		the clusters created in step~0 are connected by a path (if $\deg(u)\leq 2$) or a cycle (if $\deg(u)\geq 3$);
		and any clusters created in step~3 (when $\deg(u)\geq 3$) are attached to this cycle.
		Since $H_u^*$ is a connected plane graph, we may assume that there is a topological disk containing
		only the pipes and clusters of $H_u^*$; let $D_u$ denote such a topological disk.
		
		Let $G_u^*$ be the subgraph of $G^*$ mapped to $H_u^*$. We show that
		steps 0--3 of the operation can be reversed without introducing crossings.
		For each component $C$ of $G_u$ with $\text{pipe-deg}(C)\ge 3$,
		embed a wheel $W_k$ in the disk $D_{u_C}$ around the cluster $u_C$,
		and connect its external vertices to the vertices $p_i$, $i=1,\ldots ,k-1$,
		Since the pipes incident to $u_C$ are empty, and each $p_i$ is a unique vertex in
		its cluster, this can be done without crossings. Now, every component $C$ of
		$G_u$ corresponds to a component $C^*$ of $G_u^*$, and by \ref{P1}
		every terminal vertex in $C$ corresponds to terminal in $C^*$.
		
		If we delete a component $C^*$ from $G^*$, there will be a face $F$ of $D_u$
		(a component of $D_u\setminus \psi_{\phi^*}(G^*)$) that contains all terminals of $C$ on its boundary. Denote by $\pi_C$ the ccw cyclic order in which these terminals appear in the facial walk of $F$.
		If $C$ admits an embedding in which the terminals appear in the outer face in the same order as $\pi_C$, we can embed $C$ in $F$ on the given terminals.	We show that $C$ admits such an embedding by proving that the SPQR trees of $\overline{C}$ and $\overline{C}^*$ impose the same constraints on the cyclic order of terminals.
		Subdividing edges do not change these constraints.
		Step~1 does not change $C$.
		If step~2(b2) adds an edge in the skeleton of an S node, the possible combinatorial embeddings remain the same.
		Step~2(b3) maintains the rotation of $v'$ and $w'$.
		If $\text{pipe-deg}(C)\ge3$, $C$ and $C^*$ are identical apart from subdivided edges.
		Because all steps maintain the same constraints on the rotation system of the terminals,
		we can construct $\psi_\phi:G\rightarrow\mathcal{H}$ by incrementally replacing the embedding
		of $C^*$ by an embedding of $C$ in $D_u$ for every component $C$ of $G_u$.
		By Lemma~\ref{lem:comb-reconstruction}, this is possible without introducing crossings since no two components of $G_u$ cross and every component is planar since $\phi$ is in simplified form.
		
		Finally, we show that $\textsf{clusterExpansion}(u)$ runs in $O(|E(G_u)|+\deg(u))$ time.
		Step~0 takes $O(\deg(u))$ time.
		Steps~1--3 are local operations that take $O(|E(C)|)$ time for each  component $C$ of $G_u$.
		Step~4 takes $O(|E(G_u)|+\deg(u))$ time since each component $C$ inserts at most $O(|E(C)|)$ pipes in $H'$, where $|E(G_u)|=\sum_C |E(C)|$; the graph $\widetilde{H_u}$ has $\deg(u)$ more edges than $H_u$, and planarity testing takes linear time in the number of edges~\cite{HoTa74_planarity}.
		Step~5 takes $O(|E(G_u)|)$ time by Lemma~\ref{lem:normalize} and \ref{P1}.
	\end{proof}

	\subsection{Pipe Expansion}
	A cluster $u\in V(H)$ is called a \textbf{base of} an incident pipe $uv\in E(H)$ if every component of $G_u$ is incident to:
	\textbf{(i)}  at least one pipe-edge in $uv$; and
	\textbf{(ii)}  at most one pipe-edge or one thick pipe-edge (i.e., a triple of pipe-edges), other than those in pipe $uv$.
	We call a pipe $uv$ \textbf{safe} if both of its endpoints are bases of $uv$; otherwise it is \textbf{unsafe}.
	For a simplified instance $\varphi:G\rightarrow H$ and a safe pipe $uv\in E(H)$,
	operation \textsf{pipeExpansion}$(uv)$ consists of the following steps:
	
	\paragraph{\textsf{pipeExpansion}$(uv)$.}
	Input: an instance $\varphi:G\rightarrow H$ in simplified form and a safe pipe $uv\in E(H)$.
	The operation either reports that $\varphi$ is not a weak embedding or returns an instance $\varphi':G'\rightarrow H'$.
	First, produce an instance $\varphi^*:G\rightarrow H^*$ by contracting the pipe $uv$ into the new cluster
	$\cluster{uv}\in V(H^*)$ while mapping the vertices of $G_u$ and $G_v$ to $\cluster{uv}$, see Fig.~\ref{fig:PipeExp}(top);
	and then apply \textsf{simplify} and \textsf{clusterExpansion} to $\cluster{uv}$.
	
	\begin{figure}[h]
		\centering
		\def\svgwidth{0.7\textwidth}
		\input{./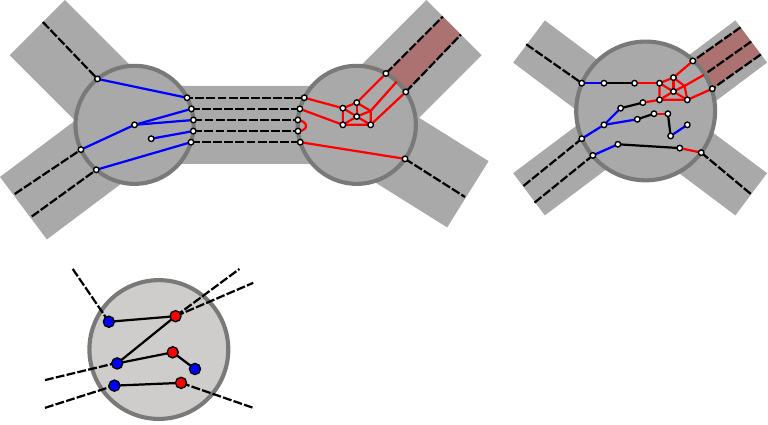_tex}
		\caption{\small
			Pipe Expansion. A safe pipe $uv$ (top left).
			The cluster $\cluster{uv}$ obtained after contraction of $uv$ (top right).
			The result of contracting the components in $G_u$ and $G_v$ (bottom left).
			The subsequent contraction of all components incident to $\partial_u D_{\cluster{uv}}$ and $\partial_v D_{\cluster{uv}}$ (all loops are already deleted),
			resp., and a Jordan curve that crosses every edge of the resulting bipartite plane multigraph  for each component (bottom center). All Jordan curves can be combined into one (bottom right).}
		\label{fig:PipeExp}
	\end{figure}
	
	We use the following folklore result in the proof of correctness of operation $\textsf{pipeExpansion}(uv)$.
	This result is obtained by an Euler tour algorithm on the dual graph of a plane bipartite multigraph.
	\begin{theorem}[Belyi~\cite{B83_self}]
		\label{thm:belyi}
		For every embedded connected bipartite multigraph $G^*$, there exists a Jordan curve
		that crosses every edge of $G^*$ precisely once.
		Such a curve can be computed in $O(|E(G^*)|)$ time.
	\end{theorem}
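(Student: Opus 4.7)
The plan is to realize the Jordan curve as an Eulerian circuit in the planar dual of $G^*$. Let $G^{*\vee}$ denote this dual graph: each vertex of $G^{*\vee}$ corresponds to a face of $G^*$, and its degree equals the length of the corresponding facial walk. Because $G^*$ is bipartite, every facial walk has even length, so every vertex of $G^{*\vee}$ has even degree. Since $G^*$ is connected and drawn in the plane, $G^{*\vee}$ is connected as well, so $G^{*\vee}$ admits an Eulerian circuit. Hierholzer's algorithm computes one in $O(|E(G^{*\vee})|)=O(|E(G^*)|)$ time.

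Next, I would realize the circuit as a closed curve $\gamma$ drawn in the plane: each dual edge $e^{\vee}$ is represented by a short arc crossing the corresponding primal edge $e$ exactly once, and consecutive dual edges in the tour are connected by an arc routed through the interior of the shared face of $G^*$. By construction, $\gamma$ is a closed curve that crosses every edge of $G^*$ exactly once, and the traversal order matches the Eulerian circuit.

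The main obstacle is ensuring that $\gamma$ is \emph{simple}, that is, a true Jordan curve and not just a self-intersecting closed walk. Inside a face $f$ of $G^*$ with boundary length $d_f$, the Eulerian tour induces a perfect matching of the $d_f$ crossings on $\partial f$, and this matching might not be drawable without chord crossings. I plan to resolve this by re-routing the arcs inside each face according to some non-crossing perfect matching of boundary crossings (for instance, pairing crossings $(1,2),(3,4),\ldots$ in the cyclic order along $\partial f$); such a matching always exists because $d_f$ is even. After this adjustment, every chord configuration inside every face is non-crossing, so $\gamma$ becomes a disjoint union of simple closed curves.

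The final step is to merge these components into a single Jordan curve. If $\gamma$ currently has more than one component, then by the connectivity of $G^*$ there exists a face $f$ whose chords come from two different components: otherwise the partition of $E(G^*)$ induced by the components would disconnect $G^*$. Performing the swap of pairings in such an $f$ (replacing two non-crossing chords by the other non-crossing matching on the same four endpoints) merges two components into one without introducing self-intersections. The number of components strictly decreases with each such swap, so after at most $O(|E(G^*)|)$ swaps we obtain a single Jordan curve. With linked-list representations of the component arcs and appropriate bookkeeping to locate merge-faces, the total running time remains $O(|E(G^*)|)$.
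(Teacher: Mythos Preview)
The paper does not actually prove this statement: it is cited as Belyi's theorem, and the only hint given is the sentence preceding the statement, namely that ``this result is obtained by an Euler tour algorithm on the dual graph of a plane bipartite multigraph.'' Your approach is exactly this, and you supply considerably more detail than the paper does, so in spirit you match the paper's (implicit) argument.

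There is one genuine gap in your merging step. You write that in a face $f$ containing chords from two different components you can ``replace two non-crossing chords by the other non-crossing matching on the same four endpoints'' without introducing self-intersections. This is not automatic when $f$ has more than four boundary crossings: if the two chosen chords are not adjacent in the nesting structure, the swapped chords may cross \emph{other} chords already present in $f$. The fix is to observe that if $f$ contains chords from two components, then walking around $\partial f$ you find two \emph{consecutive} crossing points $p,q$ that lie on chords $c_1=(p,p')$ and $c_2=(q,q')$ from different components; because $c_1$ and $c_2$ are non-crossing and $p,q$ are consecutive, the cyclic order is $p,q,q',\ldots,p'$. Replacing $c_1,c_2$ by $(p,q)$ and $(p',q')$ is then safe: the short chord $(p,q)$ crosses nothing, and any chord that would cross $(p',q')$ would already have crossed $c_1$ or $c_2$. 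With this refinement your argument goes through, and the linear-time bound is preserved since each swap touches $O(1)$ chords and reduces the number of components by one.
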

	
	\begin{lemma}\label{lem:pipeExpansion}
		Given an instance $\varphi:G\rightarrow H$ and a safe pipe $uv\in E(H)$,
		$\textsf{pipeExpansion}(uv)$ either reports that $\varphi$ is not a weak embedding or produces an equivalent instance $\varphi':G'\rightarrow H'$.
	\end{lemma}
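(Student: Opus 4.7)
The plan is to factor $\textsf{pipeExpansion}(uv)$ into three sub-operations: contraction of the pipe $uv$ producing the intermediate instance $\varphi^*:G\rightarrow H^*$, followed by $\textsf{simplify}(\langle uv\rangle)$, followed by $\textsf{clusterExpansion}(\langle uv\rangle)$. By Lemmas~\ref{lem:simplify} and~\ref{lem:clusterExpansion}, the last two sub-operations either preserve equivalence or, in the case of $\textsf{clusterExpansion}$, correctly report that the input is not a weak embedding. It therefore suffices to establish that $\varphi$ and $\varphi^*$ are equivalent.

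The forward direction will be immediate: given any embedding $\psi_\varphi:G\rightarrow\mathcal{H}$ approximating $\varphi$, a regular neighborhood of $D_u\cup R_{uv}\cup D_v$ serves as the disk $D_{\langle uv\rangle}$ in $\mathcal{H}^*$, and $\psi_\varphi$ itself already approximates $\varphi^*$.

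For the reverse direction, starting from an embedding $\psi_{\varphi^*}:G\rightarrow\mathcal{H}^*$ approximating $\varphi^*$, I plan to produce a Jordan arc $\gamma$ inside $D_{\langle uv\rangle}$ that partitions the disk into two sub-disks $D_u'$ and $D_v'$ connected by a thin strip $R_{uv}'$ around $\gamma$, such that each pipe-edge in $\varphi^{-1}[uv]$ crosses $\gamma$ exactly once while every other pipe-edge remains entirely on the correct side; see Fig.~\ref{fig:PipeExp}. To construct $\gamma$, I will build an auxiliary plane bipartite multigraph inside $D_{\langle uv\rangle}$ by contracting each component of $G_u$ in $\psi_{\varphi^*}$ to a single vertex on the $u$-side of the bipartition, and each component of $G_v$ to a single vertex on the $v$-side, retaining only the pipe-edges in $\varphi^{-1}[uv]$ as edges. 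I will then further contract all components incident to $\partial_u D_{\langle uv\rangle}$ into a single vertex $u_\infty$, and symmetrically into $v_\infty$ on the $v$-side, deleting any loops that arise. Applying Theorem~\ref{thm:belyi} to each connected component of the resulting bipartite plane multigraph yields Jordan curves that cross every edge exactly once, which can be stitched together into a single arc $\gamma$ with endpoints on $\partial D_{\langle uv\rangle}$. Splitting $D_{\langle uv\rangle}$ along $\gamma$ inverts the contraction of $uv$ and delivers an embedding $\psi_\varphi$ approximating $\varphi$.

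The main obstacle will be verifying that the Jordan curve produced via Theorem~\ref{thm:belyi} really partitions $D_{\langle uv\rangle}$ so that every pipe-edge in $\varphi^{-1}[uv]$ is crossed exactly once and no pipe-edge leaving $D_{\langle uv\rangle}$ through $\partial_u$ or $\partial_v$ is crossed at all. The bipartiteness of the auxiliary multigraph (edges go only between the $u$-side and the $v$-side) is what forces $\gamma$ to place $u$-components and $v$-components on opposite sides. The safe pipe hypothesis, specifically bullet (ii) of the definition of base, is precisely what keeps the multigraph bipartite after contracting all boundary-incident components into $u_\infty$ and $v_\infty$: each component of $G_u$ carries at most one pipe-edge or one thick pipe-edge leaving through $\partial_u D_{\langle uv\rangle}$, so the contraction introduces no edge between $u_\infty$ and any $u$-side vertex beyond the loops that are to be discarded.
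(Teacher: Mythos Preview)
Your overall strategy matches the paper's proof: factor the operation, reduce to showing $\varphi$ and $\varphi^*$ are equivalent, use the trivial forward direction, and for the reverse direction build a plane bipartite multigraph by two rounds of contraction and invoke Theorem~\ref{thm:belyi} to obtain the separating curve. That skeleton is correct.

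There is, however, a genuine gap in the reverse direction. When you contract a component $C$ of $G_u$ (or $G_v$) to a single point, you tacitly assume the resulting loops can be deleted without consequence. But by \ref{P3}--\ref{P4} such a component may contain wheel subgraphs, and in the given embedding $\psi_{\varphi^*}$ a triangle of a wheel (or a $4$-cycle formed by two branches of a thick edge together with wheel edges) can enclose vertices of \emph{other} components of $G_{\langle uv\rangle}$. If that happens, the loop you discard bounds a region containing those vertices, and the Jordan curve produced by Theorem~\ref{thm:belyi} will place them on the wrong side of the cut; the resulting partition of $D_{\langle uv\rangle}$ will not send every vertex of $G_u$ to $D_u$ and every vertex of $G_v$ to $D_v$. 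The paper fixes this \emph{before} contracting, by modifying $\psi_{\varphi^*}$ so that (i) each wheel triangle $(p_1,p_2,p_c)$ encloses only the wheel's center (re-route $p_1p_2$ along $(p_1,p_c,p_2)$), and (ii) no cycle induced by a thick edge encloses any vertex (re-route the outer two paths of the thick edge along the middle one). Only after these adjustments are the loops guaranteed to be empty and hence safely removable.

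A smaller point: you should also say a word about components of $G_{\langle uv\rangle}$ of pipe-degree~$0$. After the two contractions these become separate connected bipartite plane multigraphs, and you need them to sit in a common face of the main multigraph in order to stitch the several Belyi curves into a single Jordan arc with endpoints on $\partial D_{\langle uv\rangle}$. The paper relocates all pipe-degree-$0$ components into one face first; without that, ``stitching together'' is not automatic.
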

	\begin{proof}
		Let $\varphi:G\rightarrow H$ be an instance in simplified form, and let $uv$ be a safe pipe.
		Recall that $\textsf{pipeExpansion}(uv)$ starts with producing an instance $\varphi^*:G\rightarrow H^*$ by contracting the pipe $uv$ into the new cluster $\cluster{uv}\in V(H^*)$ while mapping the vertices of $G_u$ and $G_v$ to $\cluster{uv}$.
		It is enough to prove that $\varphi$ and $\varphi^*$ are equivalent, the rest of the proof follows from Lemmas~\ref{lem:normalize}, \ref{lem:simplify}, and \ref{lem:clusterExpansion}.
		
		One direction of the equivalence proof is trivial: Given an embedding $\psi_\phi:G\rightarrow\mathcal{H}$, we can obtain an embedding $\psi_{\phi^*}:G\rightarrow\mathcal{H}^*$ by defining $D_{\cluster{uv}}$ as a topological disk containing only $D_u$, $D_v$, and $R_{uv}$.
		
		For the other direction, assume that we are given an embedding $\psi_{\phi^*}:G\rightarrow\mathcal{H}^*$. We need to show that there exists an embedding $\psi_\phi:G\rightarrow\mathcal{H}$.
		We shall apply Theorem~\ref{thm:belyi} after contracting certain subgraphs of $G_u$ and $G_v$ (as described below). If a component contains cycles, then the contraction of its embedding creates a bouquet of loops. We study the cycles formed by $G_u$, $G_v$, and thick edges incident to $G_u$ or $G_v$ to ensure that no other component is embedded in the interior of such cycles.
		
		\smallskip\noindent\textbf{Components of $G_{\cluster{uv}}$ of pipe-degree 0.}
		Note that the terminals corresponding to the pipes incident to $u$ and $v$ lie in two disjoint arcs of $\partial D_{\cluster{uv}}$,
		which we denote by $\partial_u D_{\cluster{uv}}$ and $\partial_v D_{\cluster{uv}}$, respectively. The components of graph $G_{\cluster{uv}}$
		with positive pipe-degree are incident to terminals in $\partial_u D_{\cluster{uv}}$ or $\partial_v D_{\cluster{uv}}$ (possibly both).
		The components of pipe-degree 0 can be relocated to any face of the embedding of all other components.
		Without loss of generality, we may assume that all components of pipe-degree 0 lie in a common face
		incident to both $G_u$ and $G_v$ in $\psi_{\phi^*}$.
		
		\smallskip\noindent\textbf{Cycles induced by $G_u$ and $G_v$, and by thick edges.}  Notice that \ref{P3} and \ref{P4} imply that every maximal biconnected component in $G_u$ and $G_v$ is a wheel.
		Since each wheel is 3-connected, the circular order of their external vertices is determined by the embedding $\psi_{\phi^*}$.
		We may assume that in the embedding no cycle of a wheel subgraph encloses any vertex other than the center of the wheel. Indeed, suppose a 3-cycle $(p_1,p_2,p_c)$ of a wheel encloses some vertex, where $p_c$ is the center of the wheel, and $p_1$ and $p_2$ are two consecutive external vertices. We can modify the embedding of the edge $p_1p_2$ in $\psi_{\phi^*}$ so that it follows closely the path $(p_1,p_c,p_2)$ and the 3-cycle does not contain any vertex; see Fig.~\ref{fig:changingThick}(a).
		
		Consider a thick edge $\theta$ in $G$ between a wheel in $G_u$ (or $G_v$) and a wheel in $G_w$ for some adjacent cluster $w\notin\{u,v\}$.  Recall that a thick edge gives rise to three paths, say $P_1=(p_1,t_1,t_2,p_2)$, $P_2=(p_3,t_3,t_4,p_4)$, and $P_3=(p_5,t_5,t_5,p_6)$, where $(p_1,p_3,p_5)$ and $(p_2,p_4,p_6)$ are consecutive external vertices of the two wheels, resp., and $p_i$ is the unique vertex in a cluster adjacent to terminal $t_i$ for $i\in\{1,\ldots,6\}$; cf.~\ref{P1}.
		If we suppress the terminals, then the two wheels incident to the thick edge would be in the same maximal 3-connected component of $G$ and, therefore, their relative embedding is fixed within $D_u\cup R_{uw}\cup D_w$. We may assume that no vertex is enclosed by a cycle induced by the vertices of the thick edge (i.e., by any pair of paths from $P_1$, $P_2$, and $P_3$) in $D_u\cup R_{uw}\cup D_w$. Indeed, we can modify the embedding of the path $P_1$ and $P_3$ in $\psi_{\phi^*}$ so that they closely follow the path $p_1p_3\cup P_2\cup p_4p_2$ and $p_5p_3\cup P_2\cup p_4p_6$, respectively. By \ref{P4}, such modification of the embedding is always possible without introducing crossings; see Fig.~\ref{fig:changingThick}(b). We conclude that a cycle induced by the thick edge $\theta$ does not enclose any vertices of $G$.
		
		\begin{figure}[htbp]
			\centering
			\def\svgwidth{.6\textwidth}
\begingroup%
  \makeatletter%
  \providecommand\color[2][]{%
    \errmessage{(Inkscape) Color is used for the text in Inkscape, but the package 'color.sty' is not loaded}%
    \renewcommand\color[2][]{}%
  }%
  \providecommand\transparent[1]{%
    \errmessage{(Inkscape) Transparency is used (non-zero) for the text in Inkscape, but the package 'transparent.sty' is not loaded}%
    \renewcommand\transparent[1]{}%
  }%
  \providecommand\rotatebox[2]{#2}%
  \newcommand*\fsize{\dimexpr\f@size pt\relax}%
  \newcommand*\lineheight[1]{\fontsize{\fsize}{#1\fsize}\selectfont}%
  \ifx\svgwidth\undefined%
    \setlength{\unitlength}{177.14601517bp}%
    \ifx\svgscale\undefined%
      \relax%
    \else%
      \setlength{\unitlength}{\unitlength * \real{\svgscale}}%
    \fi%
  \else%
    \setlength{\unitlength}{\svgwidth}%
  \fi%
  \global\let\svgwidth\undefined%
  \global\let\svgscale\undefined%
  \makeatother%
  \begin{picture}(1,0.62401278)%
    \lineheight{1}%
    \setlength\tabcolsep{0pt}%
    \put(0,0){\includegraphics[width=\unitlength,page=1]{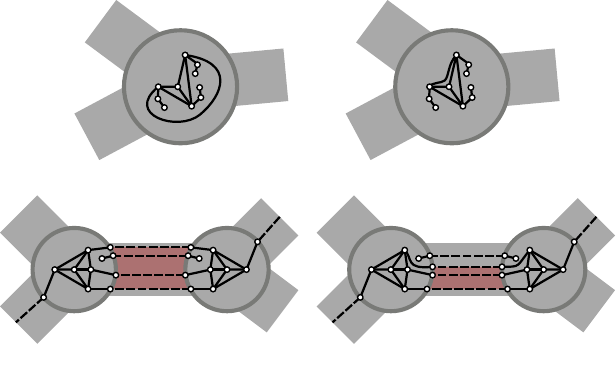}}%
    \put(0.46055916,0.33348051){\color[rgb]{0,0,0}\makebox(0,0)[lt]{\lineheight{1.25}\smash{\begin{tabular}[t]{l}(a)\end{tabular}}}}%
    \put(0.45869179,0.00917876){\color[rgb]{0,0,0}\makebox(0,0)[lt]{\lineheight{1.25}\smash{\begin{tabular}[t]{l}(b)\end{tabular}}}}%
  \end{picture}%
\endgroup%

			\caption{\small
				Changing the embedding so that: (a) no cycle encloses vertices that are not the center of a wheel; and (b) no induced cycle of a thick edge encloses a vertex.}
			\label{fig:changingThick}
		\end{figure}
		
		\smallskip\noindent\textbf{Separating $G_u$ and $G_v$.}
		We next show that there exits a closed Jordan curve that separates the embeddings of $G_u$ and $G_v$, and crosses every edge between $G_u$ and $G_v$ precisely once. In order to use Theorem~\ref{thm:belyi}, we reduce $G_{\cluster{uv}}$ and its embedding to a bipartite multigraph in two steps.
		
		\noindent\textbf{(1)}
		Contract each component $C$ of $G_u$ (resp., $G_v$) to a single vertex $w_C$,
		see Fig.~\ref{fig:PipeExp}(bottom-left).
		This results in a bouquet of loops at $w_C$ that corresponds to faces bounded by $C$. As argued above, a cycle through the external vertices
		of a wheel in $G_u$ or $G_v$ encloses only its center.
		Hence, none of the loops at $w_C$ encloses any other vertex of $G$, and they can be discarded.
		
		\noindent\textbf{(2)}
		As $uv$ is safe, every component $C$ of $G_u$ and $G_v$ is incident to either at most one terminal in $\partial D_{\cluster{uv}}$ or precisely three terminals corresponding to a thick edge. Contract the arc $\partial_u D_{\cluster{uv}}$ (resp., $\partial_v D_{\cluster{uv}}$) and for every component $C$ of $G_u$ (resp.,  $G_v$)  all edges between $w_C$ and terminals into a new vertex $u_{\infty}\in \partial D_{\cluster{uv}}$ (resp., $v_{\infty}\in\partial D_{\cluster{uv}}$), see Fig.~\ref{fig:PipeExp}(bottom-center);
		and insert an edge $u_\infty v_\infty$ in the exterior of $D_{\cluster{uv}}$. This results in a bouquet of loops at $u_{\infty}$ (resp., $v_{\infty}$), corresponding to thick edges. As argued above, these loops do not enclose any vertices, and can be eliminated.
		The subgraph of all components of pipe-degree 1 or higher has been transformed into a connected bipartite plane multigraph, and each component of pipe-degree 0 is also transformed into a connected bipartite plane multigraph.
		
		We apply Theorem~\ref{thm:belyi} for each of these plane multigraphs independently.
		By our assumption, the components $C$, $\text{pipe-deg}(C)=0$, lie in a common face.
		Consequently, we can combine their Jordan curves with the Jordan curve of all remaining components
		into a single Jordan curve that crosses every edge between $G_u$ and $G_v$ precisely once, see Fig.~\ref{fig:PipeExp}(bottom-right).
		After reversing the contractions and loop deletions described above, we obtain a Jordan curve $\gamma$
		that separates $G_u$ and $G_v$, and crosses every edge between $G_u$ and $G_v$ precisely once.
		
		Since $\gamma$ crosses the edge $u_\infty v_\infty$ precisely once, and $u_\infty v_\infty$ is homotopic to either subarcs of $\partial D_{\cluster{uv}}$ between $u_\infty$ and $v_\infty$, we can assume that $\gamma$ crosses
		each of these arcs precisely once. Consequently, the Jordan arc $\gamma'=\gamma\cap D_{\cluster{uv}}$
		partitions the disk $D_{\cluster{uv}}$ into two disks, $D_u$ and $D_v$, adjacent to $\partial_u D_{\cluster{uv}}$ and $\partial_v D_{\cluster{uv}}$, respectively, such that every edge between $G_u$ and $G_v$ crosses $\gamma'$ precisely once and all other edges of $G_{\cluster{uv}}$ lie entirely in either $D_u$ or $D_v$.
		This yields the required embedding $\psi_\phi:G\rightarrow\mathcal{H}$.
	\end{proof}
	
	\section{Algorithm and Runtime Analysis}
	\label{sec:alg}
	
	In this section, we present our algorithm for recognizing weak embeddings.
	We describe the algorithm and prove that it recognizes weak embeddings in Section~\ref{ssec:alg1}.
	A na\"ive implementation would take $O(m^2)$ time as explained below; we describe how to implement it in $O(m\log m)$ time using additional data structures in Section~\ref{ssec:alg2}.
	
	\subsection{Main Algorithm}
	\label{ssec:alg1}
	
	We are given a piecewise linear simplicial map $\varphi:G\rightarrow H$, where $G$ is a graph and $H$ is an embedded graph in an orientable surface.
	We introduce some terminology for an instance $\varphi:H\rightarrow G$.
	\begin{itemize}
		\item A component $C$ of $G_u$, in a cluster $u\in V(H)$, is \textbf{stable} if $\text{pipe-deg}(C)=2$ and,
		in each of the two pipes, $C$ is incident to exactly one edge or exactly one thick edge.
		Otherwise $C$ is \textbf{unstable}.
		\item Recall that a pipe $uv\in E(H)$ is \textsl{safe} if both $u$ and $v$ are bases of $uv$; otherwise it is \textsl{unsafe}.
		\item A cluster $u\in V(H)$ (resp., a pipe $uv\in E(H)$) is \textbf{empty} if $\varphi^{-1}[u]=\emptyset$ (resp., $\varphi^{-1}[uv]=\emptyset$). Otherwise $u$ is \textbf{nonempty}.
		\item A safe pipe $uv\in E(H)$ is \textbf{useless} if $uv$ is empty or if $u$ and $v$ are each incident to exactly  two nonempty pipes and every component of $G_u$ and $G_v$ is stable. The safe pipe $uv$ is \textbf{useful} otherwise
		(i.e., if $uv$ is nonempty; and $u$ or $v$ is incident to at least 3 nonempty pipes, or $G_u$ or $G_v$ has an unstable component).
		\item A \textbf{generalized cycle} is a cycle in which each node is a component of $G_u$ for some cluster $u$, and each edge is either an edge or a thick edge between wheels.
	\end{itemize}
	
	Notice that if a pipe $uv$ is useless, then \textsf{pipeExpansion}$(uv)$ returns a combinatorially equivalent instance
	(i.e., there would be no progress).
	As we show below (Lemma~\ref{lem:all-safe}), our algorithm reduces $G$ to a collection of generalized cycles. (In particular, if we contract every wheel to a single vertex, then a generalized cycle reduces to a cycle, possibly with multiple edges coming from thick edges.)
	
	\paragraph{Data Structures.}
	The graph $G$ is stored using adjacency lists.
	We store the combinatorial embedding of $H$ using the rotation system of $H$ (ccw order of pipes incident to each $u\in V(H)$).
	The mapping $\phi$ is encoded by its restriction to $V(G)$: that is, by the images $\varphi(p)$ for all $p\in V(G)$.
	For each cluster $u\in V(H)$, we store the set of components of $G_u$, each stored as the ID of a representative edge
	and the pipe-degree of the component. Every pipe $uv\in E(H)$ has two Boolean variables to indicate whether the respective
	endpoints are its bases; and an additional Boolean variable indicates whether $uv$ is useful.
	These data structures are maintained dynamically as the algorithm modifies $G$, $H$, and $\varphi$.
	
	\paragraph{Algorithm$(\phi)$.}
	Input: an instance $\varphi:G\rightarrow H$ in simplified form.
	
	\smallskip\noindent\textbf{Phase 1.}
	Apply \textsf{clusterExpansion} to each $u\in V(H)$.
	Denote the resulting instance by $\varphi':G'\rightarrow H'$.
	Build the data structures described above for $\phi':G'\rightarrow H'$.
	
	\smallskip\noindent\textbf{Phase 2.}
	While there is a useful pipe in $H'$, let $uv\in E(H')$ be an arbitrary useful pipe
	and apply \textsf{pipeExpansion}$(uv)$. 
	
	\smallskip\noindent\textbf{Phase 3.}
	If any component of $G'$ that contains a wheel is nonplanar, then report that $\phi$ is not a weak embedding and halt.
	Otherwise, in each cluster contract every wheel component to a single vertex, and turn every thick edge into a single
	edge by removing multiple edges. Denote the resulting instance by $\varphi'':G''\rightarrow H''$.
	If any component $C$ of $G''$ is a cycle with $k$ vertices but $\varphi''(C)$ is not a cycle with $k$ clusters in $H''$, then report that $\phi$ is not a weak embedding, else report that $\varphi$ is a weak embedding. This completes the algorithm.
	
	\subsection{Analysis of Algorithm}
	We show that the Algorithm recognizes whether the input $\phi:G\rightarrow H$ is a weak embedding.
	The running time analysis follows in Section~\ref{ssec:alg2}.
	
	\paragraph{Termination.}
	Since both Phase~1 and Phase~3 consist of for-loops only, it is enough to show that the while loop in Phase~2 terminates.
	We define a nonnegative potential function $\Phi_1(\phi)$ for an instance $\phi:G\rightarrow H$.
	For a pipe $uv\in E(H)$, let $\sigma(uv)$ be the number of pipe-edges in $\phi^{-1}[uv]$ minus twice the number of thick edges (so that each thick edge is counted as one).
	Let $s(u)$ be the number of stable components of $G_u$.
	We define the following quantities for an instance $\phi:G\rightarrow H$:
	\begin{equation}
		\begin{split}
			N_\sigma(\phi)= \sum_{uv\in E(H)} \sigma(uv),\\
			N_s(\phi)=\sum_{u\in V(H)}s(u).
		\end{split}
	\end{equation}

	Let $Q(\phi)$ be the number of nonempty pipes of an instance $\phi$.
	We can now define the potential function
	\begin{equation}\label{Phi:1}
		\Phi_1(\phi)= 4(N_\sigma(\phi)-Q(\phi))+N_s(\phi).
	\end{equation}
	Note that $\Phi_1(\phi)$ is a nonnegative integer since $N_\sigma(\phi)\ge Q(\phi)$.

	The following lemma describes the effect of one iteration of the while loop in Phase~2 on $\Phi_1$ and $N_\sigma-Q$.
	
	\begin{lemma}\label{lem:termination}
		Assume we invoke \textsf{pipeExpansion}$(uv)$ for a useful pipe $uv$ in an instance $\varphi$, obtaining instance $\phi'$. Then,
		\begin{itemize}
			\item $\Phi_1(\varphi)>\Phi_1(\varphi')$, and
			\item $N_\sigma(\varphi)-Q(\varphi)\ge N_\sigma(\varphi')-Q(\varphi')$.
		\end{itemize}
	\end{lemma}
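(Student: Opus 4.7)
The plan is to decompose \textsf{pipeExpansion}$(uv)$ into three sub-stages: (a) contraction of $uv$ into $\cluster{uv}$ producing an intermediate instance $\varphi^*:G\to H^*$; (b) \textsf{simplify}$(\cluster{uv})$; and (c) \textsf{clusterExpansion}$(\cluster{uv})$. I will track the effects on $N_\sigma$, $Q$, and $N_s$ separately through each sub-stage. Since $uv$ is useful it is nonempty, so $\sigma(uv)\ge 1$. Sub-stage (a) removes the pipe $uv$ from $E(H)$ and converts its $\sigma(uv)$ pipe-edges into cluster-edges of $\cluster{uv}$, giving $N_\sigma(\varphi^*)=N_\sigma(\varphi)-\sigma(uv)$ and $Q(\varphi^*)=Q(\varphi)-1$; hence
\[
(N_\sigma-Q)(\varphi^*) = (N_\sigma-Q)(\varphi) - (\sigma(uv)-1) \le (N_\sigma-Q)(\varphi).
\]
Sub-stage (b) only manipulates cluster-edges (deleting pipe-degree $0$ components and pruning irrelevant SPQR subtrees), so $N_\sigma$ and $Q$ are unchanged. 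In sub-stage (c), new pipe-edges appear only in steps 2 and 3 of \textsf{clusterExpansion}: step 2(b) produces exactly one new edge (or one thick edge) per pipe-degree $2$ component of $G_{\cluster{uv}}$, placed in one new pipe $\cluster{uv}_{w_1}\cluster{uv}_{w_2}$; step 3 produces, for each pipe-degree $\ge 3$ component, a wheel whose new pipe-edges stand in bijection with the edges of $\skel(\mu_r)$ from wheel vertices to pipe-vertices, each residing in its own new pipe. Therefore $N_\sigma-Q$ does not increase in (c), and combining (a)--(c) yields $(N_\sigma-Q)(\varphi')\le (N_\sigma-Q)(\varphi)$, which is the second bullet.

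For the strict decrease of $\Phi_1$ I would split into cases according to the useful condition. If $\sigma(uv)\ge 2$, sub-stage (a) already drops $N_\sigma-Q$ by at least $1$, contributing at least $4$ to the drop in $\Phi_1$; it then remains to verify that $N_s$ can rise by at most $3$, which follows by a charging argument that assigns every newly-stable component of $\varphi'$ to a distinct pipe-edge of $uv$ lost in (a). If $\sigma(uv)=1$, the base condition forces $G_u$ and $G_v$ each to contain a single component, and property \ref{P3} (biconnectivity of $\overline{C}$) forbids pipe-degree $1$ components since the pipe-vertex for the lone pipe-edge of $uv$ would have degree $1$ in $\overline{C}$; consequently both components have pipe-degree $2$. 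Usefulness then forces at least one of them to carry multiple non-thick edges in its other pipe. The merged component in $G_{\cluster{uv}}$, after \textsf{simplify} and \textsf{clusterExpansion}, is split into pieces whose stable/unstable profile is strictly better than that of the previously unstable configuration, strictly decreasing $N_s$.

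The main obstacle is the careful case analysis in the $\sigma(uv)=1$ regime, where the drop in $\Phi_1$ must come entirely from $N_s$. Verifying this requires enumerating the SPQR-tree topologies of the merged component (root node of type S, P, or R) and checking that in every case the resulting expansion eliminates at least one unstable-multiple-edge configuration without creating matching ones elsewhere. This relies crucially on the fact that both endpoints of $uv$ are bases, so the multiple-edge asymmetry responsible for the instability can be resolved inside $\cluster{uv}$ rather than being pushed to a neighboring cluster.
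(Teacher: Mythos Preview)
Your stage-wise decomposition is a natural idea, but the argument has a genuine gap at sub-stage~(c). You assert that in \textsf{clusterExpansion}$(\cluster{uv})$ every new pipe-edge produced by Step~2 lives in ``one new pipe $\cluster{uv}_{w_1}\cluster{uv}_{w_2}$'' and conclude that $N_\sigma-Q$ does not increase in~(c). This is false: several pipe-degree-$2$ components of $G_{\cluster{uv}}$ can share the \emph{same} pair $\{w_1,w_2\}$, so Step~2(a) creates a single new pipe that receives one (thick) edge from each such component. If there are $k$ of them, $N_\sigma$ rises by $k$ while $Q$ rises by only $1$, and $N_\sigma-Q$ increases by $k-1$ from $\varphi^*$ to $\varphi'$. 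The second bullet is still true, but only because this increase is offset by the drop $\sigma(uv)-1$ in sub-stage~(a); establishing that offset requires exactly the per-component bookkeeping you are trying to avoid.

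The first bullet has a related problem. Your bound ``$N_s$ rises by at most~$3$'' when $\sigma(uv)\ge 2$ is not correct: for instance, if $G_u$ and $G_v$ each contain $r$ unstable components of pipe-degree~$2$, paired by two edges each in $uv$, then all $2r$ components are unstable before, and \textsf{clusterExpansion} can produce $2r$ stable components afterwards, so $N_s$ increases by $2r$. The strict decrease of $\Phi_1$ still holds, but again only because $N_\sigma-Q$ drops by roughly $r$ in this situation. Your $\sigma(uv)=1$ case is also incomplete: \ref{P3} does not forbid pipe-degree~$1$ components of $G_u$ in general (e.g.\ when the single contribution to $\sigma(uv)$ is a thick edge, the pipe-vertex has degree~$3$ in $\overline{C}$), and the merged component of $G_{\cluster{uv}}$ can have pipe-degree~$1$, which you do not address.

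The paper's proof avoids all of this by working component-by-component in $G_{\cluster{uv}}$: for each component $C$ it compares $C$'s contribution to $N_\sigma-Q$ and to $N_s$ before and after the whole operation, showing neither increases and at least one strictly decreases unless $C$ consists of two stable pieces joined by a single (thick) edge. A final contradiction argument then shows that if every component is of this last type, $uv$ would be useless. To repair your approach you would have to couple the loss in~(a) to the gain in~(c) on a per-component basis, which is essentially the paper's argument.
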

	\begin{proof}
		Let $\cluster{uv}$ be the cluster obtained after the first step of \textsf{pipeExpansion}$(uv)$ (before applying \textsf{simplify}), and let $C$ be a component of $G_{\cluster{uv}}$. Let $\sigma_C(uv)$ be the number of pipe-edges of $C$ in $\phi^{-1}[uv]$ minus twice the number of thick edges of $C$ in $\phi^{-1}[uv]$ before \textsf{pipeExpansion}$(uv)$.
		Let $s_C$ be the number of stable components of $G_u$ and $G_v$ contained in $C$ before \textsf{pipeExpansion}$(uv)$. Then
		$$\sigma(uv)=\sum_{C} \sigma_C(uv)
		\hspace{1cm}\text{and}\hspace{1cm}
		s(u)+s(v)=\sum_{C} s_C,$$
		where the summation is over all components $C$ of $G_{\cluster{uv}}$.
		We distinguish cases based on the pipe-degree of $C$ and analyse their contribution to $N_\sigma-Q$ and $N_s$ before and after \textsf{pipeExpansion}$(uv)$.
		The case when $\text{pipe-deg}(C)=0$ is trivial since $C$ disappears and therefore its contribution to $N_\sigma-Q$ and $N_s$ decreases.
		
		Assume $\text{pipe-deg}(C)=1$. Then $\sigma_C(uv)\geq 1$ by the definition of safe pipes.
		Step~1 of \textsf{clusterExpansion} (called inside \textsf{pipeExpansion}$(uv)$) creates component $C'$ of pipe-degree~1 in some cluster on the boundary of $\Delta_{\cluster{uv}}$ and no new pipe-edge in $\Delta_{\cluster{uv}}$.
		Consequently, \textsf{pipeExpansion} decreases the contribution of $C$ to $N_\sigma$ and $N_s$ to 0.
		The contribution of $C$ to $N_\sigma-Q$ does not decrease (i.e., remains constant 0) only if
		$\sigma_C(uv)=1$ and $C$ is the only component in $G_{\cluster{uv}}$. In this case, however,
		$C$ contains precisely one component in each of $G_u$ and $G_v$, of pipe-degree 1 and 2.
		The component of pipe-degree 2 is stable since $uv$ is safe and $\sigma_C(uv)=1$,
		and so the number of stable components decreases by one,
		hence \textsf{pipeExpansion} decreases the contribution of $C$ to $\Phi_1$.
		
		Assume $\text{pipe-deg}(C)=2$ (see Fig.~\ref{fig:flip}).
		Then Step~2 of \textsf{clusterExpansion} (called inside \textsf{pipeExpansion}$(uv)$) creates two stable components, connected by either a single edge or a thick edge in a pipe in $\Delta_{\cluster{uv}}$.
		Consequently, \textsf{pipeExpansion}$(uv)$ changes the contribution of $C$ to $N_\sigma$ from at least 1 to precisely 1.
		The number of stable components increases if $s_C\in \{0,1\}$, 
		in which case at least one component of $G_u$ and $G_v$ in $C$ is unstable.
		Since $uv$ is useful, such a component is incident to at least two single or thick edges in $uv$,
		and so $\sigma_C(uv)\geq 2$. In this case, the contribution of $C$ to $N_\sigma$ decreases by at least one, while the number of stable components increases by at most two, hence $\Phi_1$ strictly decreases.
		The contribution of $C$ to $\Phi_1$ is unchanged only if $C$ contains precisely one component in each of $G_u$ and $G_v$, where $\sigma_C(uv)=1$ and $s_C=2$ (that is, both are stable components).

		Assume $\text{pipe-deg}(C)\ge3$.
		Let $W_k$, $k\ge 4$, be the wheel created by \textsf{clusterExpansion}$(\cluster{uv})$.
		Since $uv$ is a safe pipe, every component of $G_u$ or $G_v$ contained in $C$ is
		incident to at most one single or thick pipe-edge outside of $\varphi^{-1}[uv]$.
		Hence, $C$ contains at least $k-2$ edges between these components of $G_u$ and $G_v$
		(which are pipe-edges in $uv$), and its contribution to $N_\sigma$ is at least $k-2$.
		Step~3 of \textsf{clusterExpansion}$(uv)$ (called inside \textsf{pipeExpansion}$(uv)$)
		creates $k-1$ components in distinct clusters of $\Delta_{\cluster{uv}}$.
		Hence, the contribution of $C$ to $N_s$ increases by at most $k-1$.
		Step~3 also creates $k-1$ pipe-edges, each of which is the only pipe-edges in its pipe,
		hence they contribute zero to $N_\sigma-Q$.
		The difference between the new and the old value of $N_\sigma-Q$ due to $C$ is at most $-(k-2-1)=-k+3$
		(where the term $-1$ accounts for the case that $C$ is the only component of $G_{\cluster{uv}}$).
		Hence, we obtain the following upper bound on the difference between the new and the old contribution of $C$ to $\Phi_1$:  $-4(k-3)+(k-1) = 11-3k \le 11-3\cdot 4 = -1 < 0$, where the $(k-1)$-term corresponds to the increase in $N_s$.
		Consequently the contribution of $C$ to both, $\Phi_1$ and
		$N_\sigma-Q$, decreases.
		
		Summing over the contributions of all components of $G_{\cluster{uv}}$, we have shown that
		neither $\Phi_1$ nor $N_\sigma-Q$ increases. 
		It remains to show that the $\Phi_1$ decreases.
		By the previous case analysis, the contribution of a component $C$ to neither $\Phi_1$ nor $N_\sigma-Q$ increases.
		The contribution of $C$ to $\Phi_1$ is unchanged when $C$ contains exactly one component in each $G_u$ and $G_v$, and both are stable, and hence the contribution of $C$ to $N_s$ is remains 2.
		For contradiction, assume that $\Phi_1$ and $N_\sigma-Q$ are both unchanged.
		Then all components of $G_{\cluster{uv}}$ are stable, and each contains precisely one (stable) component in $G_u$ and $G_v$, respectively. Then $N_s$ and $N_\sigma$ remain unchanged.
		Therefore, $Q$ must also remain unchanged, which implies that all components of $G_{\cluster{uv}}$ are adjacent to the same pair of pipes. Then, $uv$ is useless, a contradiction.
	\end{proof}
	\begin{corollary}\label{cor:termination}
		The algorithm executes at most $15m$ iterations of Phase~2 and, therefore, terminates;
		and it creates at most $12m$ stable components.
	\end{corollary}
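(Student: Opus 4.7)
The plan is to apply Lemma~\ref{lem:termination} directly: since $\Phi_1(\phi)$ is a nonnegative integer that strictly decreases in every iteration of Phase~2, the number of Phase~2 iterations is bounded by the value of $\Phi_1$ at the start of Phase~2, i.e., immediately after Phase~1. The entire argument therefore reduces to (a)~bounding $\Phi_1$ after Phase~1 by $15m$, and (b)~extracting from the same potential-drop inequality a bound of $12m$ on the total number of stable components created during the algorithm.

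For part~(a), I will show that after Phase~1 we have $N_\sigma\le 3m$, $N_s\le 3m$, and $Q\ge 0$, whence
\[
\Phi_1=4(N_\sigma-Q)+N_s\le 4\cdot 3m+3m=15m.
\]
To bound $N_\sigma$, I will track how \textsf{clusterExpansion}$(u)$ affects pipe-edges: Step~0 only subdivides pipes and creates no pipe-edges; Steps~1--3 reassign existing pipe-edges among new pipes and additionally create thick edges (Step~2(b3)) and single pipe-edges to the outer cycle (Step~3), each of which I can charge against an original edge of $G$ or a terminal of a component. A similar inspection shows that each component $C$ of $G_u$ spawns at most $\text{pipe-deg}(C)-1$ stable components in Steps~2 and~3, and summing $\sum_u\sum_C\text{pipe-deg}(C)$ equals twice the number of pipe-edges in $G$, hence $\le 2m$; adding components of pipe-degree at most~$1$ yields $N_s\le 3m$. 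Combining with $|E(G')|=O(m)$ after Phase~1 (a consequence of Lemma~\ref{obs:1} applied to the simplified form carried into Phase~1) closes this step. This is where I expect the main technical friction: all the counting above is routine but must be carried out for each of Steps~0--5 of \textsf{clusterExpansion} so that the constants $3m$ and $3m$ (equivalently, the final $15m$) are rigorously justified.

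For part~(b), the inequalities in Lemma~\ref{lem:termination} give, per iteration, $\Delta\Phi_1<0$ and $\Delta(N_\sigma-Q)\le 0$, and these combine through $\Delta N_s=\Delta\Phi_1-4\,\Delta(N_\sigma-Q)$. If $c_i$ denotes the number of stable components created and $d_i$ the number destroyed in iteration~$i$, then $\Delta N_s=c_i-d_i$. The case analysis inside the proof of Lemma~\ref{lem:termination} already shows, component-by-component, that whenever $c_i^{(C)}>0$ we also have $\Delta(N_\sigma-Q)^{(C)}\le -\tfrac{1}{4}c_i^{(C)}$ (the pipe-degree-$\ge 3$ case gives $c_i^{(C)}=k-1$ and $\Delta(N_\sigma-Q)^{(C)}\le -(k-3)$; the pipe-degree-$2$ case gives $c_i^{(C)}\le 2$ with a matching unit drop in $N_\sigma-Q$ whenever a previously stable component is not simply re-created). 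Summing and telescoping yields
\[
\sum_i c_i\le 4\bigl((N_\sigma-Q)^{\text{start of Phase 2}}-(N_\sigma-Q)^{\text{end}}\bigr)\le 4\cdot 3m=12m,
\]
which, together with the observation that Phase~1 itself creates at most $3m$ stable components (absorbed into the initial $N_s$ bound), gives the stated $12m$ bound on total creations.

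The conceptual structure is thus straightforward once Lemma~\ref{lem:termination} is in hand; the hard part is not strategic but bookkeeping, namely verifying the two constants in $N_\sigma\le 3m$ and $N_s\le 3m$ after Phase~1 by walking through each sub-step of \textsf{clusterExpansion} and assigning each new pipe-edge or stable component to an original edge of $G$.
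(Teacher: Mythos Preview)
Your plan is essentially the same as the paper's proof: bound $\Phi_1\le 15m$ after Phase~1 via $N_\sigma\le 3m$ and $N_s\le 3m$, then use Lemma~\ref{lem:termination} to get the iteration bound and the per-creation charge of $\tfrac14$ unit of $N_\sigma-Q$ to get the $12m$ bound on stable components. One simplification worth adopting: the paper bounds $N_s$ not by tracking how many stable components each $C$ spawns, but by the one-line observation that every stable component is incident to exactly two pipe-edges and every pipe-edge touches at most two stable components, so $N_s$ is at most the number of pipe-edges after Phase~1, which is at most $3m$; this avoids the component-by-component inspection you outline.
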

	\begin{proof}
		%
		First we show that at the end of Phase~1, there are at most $3m$ pipe-edges.
		We charge the creation of $\text{pipe-deg}(C)$ pipe-edges in new pipes in Step 3 of cluster expansion to $\text{pipe-deg}(C)$ original pipe-edges incident to $C$.
		Similarly we charge the creation of a (thick) pipe-edge created in Step 2 to 1 or 3 (in the case we created a thick edge) original pipe-edges incident to $C$.
		In the latter case, $C$ must be incident to at least 4 pipe-edges.
		Clearly, each original edge receives the charge of at most 2, one from each of its incident components. Hence, in total we introduced at most $2m$ new pipe-edges.
		
		By definition $N_\sigma-Q$ is a nonnegative integer, and its initial value does not exceed the number of pipe-edges.
		Therefore, at the end of Phase~1, we have $0\leq N_\sigma-Q\leq 3m$.
		Since each pipe-edge can be incident to at most two stable components and each stable component must be incident to two pipe-edges, we have $0\leq N_s\leq 3m$.
		Hence, $0\leq \Phi_1\leq 15m$.
		
		By Lemma~\ref{lem:termination}, $\Phi_1$ strictly decreases in each iteration of the while loop in Phase~2.
		Therefore Phase~2 has at most $15m$ iterations.
		Since both Phase~1 and Phase~3 consist of for-loops only, the algorithm terminates.
		
		By Lemma~\ref{lem:termination}, $\Phi_1$ strictly decreases and $N_\sigma-Q$ does not increase in each iteration of the while loop in Phase~2.
		Consequently, we can charge the creation of a stable component to the decrease of one fourth of a unit of $N_\sigma-Q$.
		Hence, Phase~2 creates at most $12m$ stable components.
	\end{proof}
	
	\paragraph{Correctness.}
	We show next that Phase~2 reduces $G'$ to a collection of generalized
	cycles (see Lemma~\ref{lem:all-safe} below). First, we need a few observations.
	
	\begin{lemma}\label{lem:inv}
		Every cluster $u'$ created by an operation \textsf{clusterExpansion}$(u)$ satisfies the following:
		\begin{enumerate}[label=(B\arabic*)]
			\item \label{b1} $u'$ is either empty or the base for some nonempty pipe.
			\item \label{b2} If $u'$ lies in the interior of the disk $\Delta_u$,
			then $u'$ is either empty or the base of a unique nonempty pipe $u'v'$,
			where $v'$ in on the boundary of $\Delta_u$.
			\item \label{b3} If $u'$ lies on the boundary of the disk $\Delta_u$,
			then $u'$ is either empty, or the base of at least one and at most two pipes, exactly one of which is outside of $\Delta_u$. (In particular, the pipe outside of $\Delta_u$ is unique, this property will be crucial.)
			\item \label{b4} If $u'$ is nonempty,
			then $u'$ is incident to at most three empty pipes.
		\end{enumerate}
	\end{lemma}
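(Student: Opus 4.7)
The plan is to case-split on the way \textsf{clusterExpansion}$(u)$ can create a new cluster $u'$. There are exactly three kinds: (a) the boundary clusters $u_v$ introduced in Step~0 on $\partial\Delta_u$, one per pipe $uv$ of $H$ at $u$; (b) the wheel-central clusters $u_C$ introduced in Step~3 for each component $C$ of $G_u$ with $\text{pipe-deg}(C)\geq 3$; and (c) the wheel-external clusters $u_{p_j}$ introduced in Step~3 for the external vertices of the wheel $W_k$ built for such a $C$. A cluster $u_C$ is empty by construction (its vertex $p_C$ is deleted in Step~3), so \ref{b1}, \ref{b2}, \ref{b4} hold trivially for it. The remaining content of the lemma must be verified for types (a) and (c).

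For \ref{b2}, I would look at a wheel-external cluster $u_{p_j}$, which lies in the interior of $\Delta_u$ and contains only the vertex $p_j$. By \ref{P4}, $p_j$ has degree four in the original $G_u$; the two wheel edges to $p_{j\pm1}$ and the spoke to $p_C$ are all removed in Step~3, leaving a single remaining edge that becomes a pipe-edge to the unique cluster $u_{v_i}$ on $\partial\Delta_u$ identified in Step~3. Thus the sole component of $G_{u_{p_j}}$ is incident to exactly one pipe-edge, in the unique nonempty incident pipe $u_{p_j}u_{v_i}$, which makes $u_{p_j}$ a base of that pipe and establishes both \ref{b1} and \ref{b2} for this case. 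For \ref{b3}, I would examine a boundary cluster $u_v$: by tracing Steps~1--3, every component placed in $u_v$ is one that originally had a pipe-edge in $\varphi^{-1}[uv]$, so it retains a pipe-edge in $u_vv$, the unique pipe at $u_v$ leading outside $\Delta_u$. Moreover, Step~1 yields pipe-degree-1 components and Steps~2 and~3 yield stable pipe-degree-2 components, so each component of $G_{u_v}$ has at most one further pipe-edge or thick edge; this makes $u_v$ a base of $u_vv$ (establishing "at least one" and "exactly one outside $\Delta_u$") and forbids it from being a base of three distinct pipes (establishing "at most two").

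For \ref{b4}, the empty pipes at a new cluster arise only from the structural skeleton added during the operation. For a nonempty $u_v$, the only empty pipes are the (at most two) pipes lying on the boundary cycle $C_u$ added in Step~0; for a nonempty $u_{p_j}$, the only empty pipes are the two pipes along the wheel cycle of the $u_{p_j}$'s and the spoke $u_{p_j}u_C$, giving at most three empty incident pipes. I anticipate that the main obstacle is the bookkeeping needed to certify that every non-structural pipe introduced at $u'$ is in fact nonempty: specifically, the pipes $u_vu_w$ created by Step~2 and the pipes $u_{p_j}u_{v_i}$ created by Step~3 must each receive at least one pipe-edge (or thick edge) from the splitting of the component that caused them to be created. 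I would verify this by a direct inspection of each subcase 2(b1)--(b3) and of the wheel construction in Step~3, using the fact that $\mu_r$ is of type R and that \ref{P4} guarantees every external vertex has a non-wheel edge in $G_u$.
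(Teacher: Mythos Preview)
Your proposal is correct and follows essentially the same approach as the paper: a direct case analysis on the three kinds of new clusters (boundary clusters $u_v$, wheel centers $u_C$, wheel externals $u_{p_j}$), checking the base conditions and counting empty incident pipes for each. Your write-up is in fact more detailed than the paper's own proof, which dispatches \ref{b1}--\ref{b4} in a few sentences. One small inaccuracy: the components landing in $u_v$ from Steps~2 and~3 need not be \emph{stable} in the technical sense (they may have several pipe-edges in $u_vv$); what matters, and what you correctly use, is that each has at most one pipe-edge or thick edge outside $u_vv$, which is exactly condition~(ii) in the definition of a base.
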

	\begin{proof}
		Operation \textsf{clusterExpansion}$(u)$ creates a cluster $u_v$ on the boundary of the disk $\Delta_u$
		for every pipe $uv$ incident to $u$. By construction, $u_v$ is either empty (if the pipe $uv$ was empty),
		or a base for the pipe $vu_v$, which lies outside of $\Delta_u$.
		When $u_v$ is nonempty, the only incident pipes that could be empty are created in Step 0 (either two pipes of the cycle $C_u$ or a single pipe if $\deg(u)=2$).
		For each component $C$ of $G_u$ with $\text{pipe-deg}(C)\geq 3$,
		\textsf{clusterExpansion}$(u)$ creates a wheel where
		the center cluster is empty and every external cluster is incident to
		exactly one nonempty pipe (to a cluster on the boundary of $\Delta_u$),
		consequently, it is a base for that pipe.
		The external clusters of the wheel are incident to exactly three empty pipes by \ref{P4}. By construction, clusters in $\partial \Delta_u$ are incident to at most two empty pipes.
	\end{proof}
	
	\begin{corollary}\label{cor:inv}
		In every step of Phase~2, every cluster $u\in V(H')$ is either empty
		or the base for at least one nonempty pipe.
	\end{corollary}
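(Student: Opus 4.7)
The plan is to establish the invariant by induction on the number of completed iterations of Phase~2, using Lemma~\ref{lem:inv}\ref{b1} for clusters newly created in an iteration and a preservation argument for clusters left untouched.

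For the base case, observe that Phase~1 invokes \textsf{clusterExpansion}$(u)$ for every cluster $u$ of the input instance, so every cluster of $H'$ at the end of Phase~1 was produced by some call to \textsf{clusterExpansion}. Property~\ref{b1} of Lemma~\ref{lem:inv} gives the invariant immediately. For the inductive step, assume the invariant holds before an iteration that calls \textsf{pipeExpansion}$(uv)$ for a useful pipe $uv$, and partition the clusters of the resulting instance into two groups: (a) those lying inside the expansion disk $\Delta_{\cluster{uv}}$ (created by the call to \textsf{clusterExpansion}$(\cluster{uv})$ nested inside \textsf{pipeExpansion}), and (b) those lying outside $\Delta_{\cluster{uv}}$, which are exactly the clusters of the previous instance other than $u$ and $v$. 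For clusters of type~(a), apply Lemma~\ref{lem:inv}\ref{b1} directly.

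For a cluster $w$ of type~(b), the operation \textsf{pipeExpansion}$(uv)$ modifies only what sits inside $\Delta_{\cluster{uv}}$: in particular $G_w$ is unchanged, and every pipe-edge formerly incident to a vertex of $G_w$ is still incident to that same vertex of $G_w$. Any subdivisions introduced by the \textsf{normalize} step of \textsf{clusterExpansion} place their new vertices inside the new clusters of $\Delta_{\cluster{uv}}$, never inside $G_w$. If $w$ is empty the invariant holds trivially, so assume $w$ is nonempty and, by the induction hypothesis, pick a nonempty pipe $p_w$ for which $w$ is a base. If $p_w$ is not incident to $u$ or $v$, then $p_w$ and its pipe-edges are preserved verbatim and $w$ remains a base of $p_w$. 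Otherwise $p_w$ was incident to $u$ or $v$, and after the operation it has been replaced by a pipe $p_w'$ between $w$ and one of the new clusters on $\partial \Delta_{\cluster{uv}}$, containing (up to subdivision at the $\Delta_{\cluster{uv}}$-side endpoint) exactly the same pipe-edges as $p_w$.

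The one place requiring care, and the main obstacle, is verifying that the base property transfers from $p_w$ to $p_w'$. This requires checking that (i) every component of $G_w$ still meets $p_w'$ via a pipe-edge, and (ii) every component of $G_w$ is still incident to at most one pipe-edge or thick pipe-edge outside $p_w'$. Both follow from the fact that $G_w$ is untouched and the map from pipe-edges incident to $G_w$ before the operation to those after the operation is a bijection preserving which cluster each edge connects to (pipes disjoint from $\{u,v\}$ are unchanged; pipes incident to $u$ or $v$ are redirected uniformly to a new cluster on $\partial \Delta_{\cluster{uv}}$). Hence $w$ is a base of the nonempty pipe $p_w'$, and the invariant is preserved, completing the induction.
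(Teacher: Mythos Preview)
Your proof is correct and follows essentially the same approach as the paper: establish \ref{b1} for all clusters after Phase~1 via Lemma~\ref{lem:inv}, then argue inductively that each \textsf{pipeExpansion} preserves the invariant, using Lemma~\ref{lem:inv} again for the newly created clusters. The paper's proof simply asserts that the property ``continues to hold for existing clusters'' without elaboration; your argument for clusters of type~(b) fills in exactly this detail, verifying that $G_w$ and its incident pipe-edges are essentially unchanged so that the base property transfers to the redirected pipe.
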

	\begin{proof}
		Phase~1 of the algorithm performs \textsf{clusterExpansion}$(u)$ for every cluster of the input independently.
		At the end of Phase~1 (i.e., beginning of Phase~2), each cluster in $H'$ has been created by a
		\textsf{clusterExpansion} operation, and \ref{b1} follows from Lemma~\ref{lem:inv}.
		
		Subsequent steps of Phase~2  successively apply \textsf{pipeExpansion} operations
		thereby creating new clusters. By Lemma~\ref{lem:inv},
		property \ref{b1} is established for all new clusters,
		and it continues to hold for existing clusters.
	\end{proof}
	
	\begin{lemma}\label{lem:emptysafe}
		In every step of Phase~2, if $H'$ contains a nonempty unsafe pipe, then it also contains a useful pipe.
	\end{lemma}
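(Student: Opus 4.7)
My plan is to prove the contrapositive: if $H'$ contains no useful pipe, then every nonempty pipe is safe. The proof has three ingredients.

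\emph{First, a Base Exchange Lemma.} I would establish: if a cluster $u$ is not a base of an incident nonempty pipe $e_0$ and is a base of some other nonempty pipe $f$, then either $u$ is incident to at least three nonempty pipes or $G_u$ contains an unstable component. The proof is a short case analysis on a component $C$ of $G_u$ that witnesses the failure of base condition (i) or (ii) for $e_0$. Because $u$ is a base of $f$, every component of $G_u$ has at least one pipe-edge in $f$ and at most one single/thick pipe-edge outside $f$; combined with $C$'s failure at $e_0$, one verifies that $C$ must either be pipe-degree 1 (unstable), have two or more single/thick edges in $f$ (unstable), or carry an edge in a pipe distinct from both $f$ and $e_0$ (giving a third nonempty pipe at $u$).

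\emph{Second, a safe-outside invariant.} I would show that throughout Phase~2, for every nonempty boundary cluster $u'$ (boundary in the sense of Lemma~\ref{lem:inv}(B3), inherited from its creating clusterExpansion), the outside pipe of $u'$ is safe. The base case at the end of Phase~1 is immediate: each outside pipe is a subdivision $u_v v_u$ of an original pipe of $H$, and (B3) makes both $u_v$ and $v_u$ bases of it. For the inductive step, pipeExpansion acts on a useful---hence safe---pipe $pq$. Outside pipes not involving $p$ or $q$ are unaffected, while outside pipes previously going to $p$ or $q$ are renamed to a new subdivision cluster $c$ on $\partial\Delta_{\cluster{pq}}$; safety is preserved because the pipe-edges on the preserved side are untouched and $c$ is a base of its outside pipe by (B3) at creation. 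A companion observation is that (B2) and (B3) persist for any cluster surviving pipeExpansion, as that cluster's pipe-edges are never altered.

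\emph{Combining.} Suppose for contradiction that there is a nonempty unsafe pipe $e_0 = uv$ with $u$ not a base of $e_0$, yet no useful pipe exists. Then $u$ is nonempty, and Corollary~\ref{cor:inv} supplies a nonempty pipe $f \neq e_0$ at $u$ of which $u$ is a base. If $u$ were an interior cluster, property (B2) would identify $f$ with $u$'s unique nonempty pipe, forcing $e_0 = f$ and thus $u$ to be a base of $e_0$, a contradiction. Hence $u$ is a boundary cluster with outside pipe $e^*$; moreover $e^* \neq e_0$, else $u$ would again be a base of $e_0$. By the invariant $e^*$ is safe, and applying the Base Exchange Lemma with $f = e^*$ yields that $u$ is incident to at least three nonempty pipes or $G_u$ has an unstable component. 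Either way, $e^*$ is useful, contradicting the assumption.

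The main obstacle is the safe-outside invariant: verifying that renaming under pipeExpansion truly preserves safety requires carefully cross-checking that the new subdivision cluster inherits base status on its outside pipe from (B3), and that the preserved endpoint retains base status precisely because its incident pipe-edges are unchanged by the operation.
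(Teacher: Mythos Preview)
Your Base Exchange Lemma is fine, and the combining step would work \emph{if} the safe-outside invariant held. But the invariant is false as stated, and your inductive step for it does not cover all cases. Specifically, when \textsf{pipeExpansion}$(pq)$ creates a new boundary cluster $c$ on $\partial\Delta_{\cluster{pq}}$, its outside pipe is $cu'$ where $u'$ was a neighbour of $p$ (or $q$). You argue safety is preserved only for \emph{surviving} clusters whose outside pipe previously went to $p$ or $q$; that indeed handles $c$'s outside pipe when $u'p$ was $u'$'s outside pipe. But if $u'p$ was an \emph{inside} pipe of $u'$ (so $u'$'s outside pipe goes elsewhere), nothing forces $u'$ to be a base of $u'p$, and then $cu'$ is unsafe. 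A small example: take $H$ a path $a\text{--}b\text{--}c$ and give $G_b$ both a pipe-degree-$2$ component and a pipe-degree-$1$ component adjacent only to $ab$. After Phase~1, $b_a$ is not a base of the inside pipe $b_ab_c$. Now \textsf{pipeExpansion}$(b_c c_b)$ creates a boundary cluster $d$ with outside pipe $db_a$, which is unsafe---contradicting your invariant for $d$.

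The paper avoids any such invariant by building a \emph{walk}: from a cluster $u_i$ that is not a base of $u_{i-1}u_i$, move along any nonempty pipe $u_iu_{i+1}$ of which $u_i$ \emph{is} a base (guaranteed by \ref{b1}); iterate. Either one reaches a safe pipe (then a short argument, essentially your Base Exchange Lemma, shows it is useful), or one finds a cycle. The cycle is ruled out by a creation-time argument: some consecutive pair $u_j,u_{j+1}$ has $u_j$ created earlier, whence \ref{b3} makes $u_ju_{j+1}$ the outside pipe of $u_{j+1}$, so $u_{j+1}$ is a base of it---contradicting the walk's construction. Your one-step argument is the first step of this walk; it fails precisely because one step need not reach a safe pipe. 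A natural fix is to iterate your argument along outside pipes: since the outside neighbour of a boundary cluster was created strictly earlier, such a walk cannot cycle and must terminate at a safe pipe, at which point your Base Exchange Lemma finishes. That repaired argument is then essentially the paper's.
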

	\begin{proof}
		Let $u_0u_1$ be a nonempty unsafe pipe in $H'$. Without loss of generality, assume that $u_1$ is not a base for $u_0u_1$.
		We iteratively define a simple path $(u_0,u_1,\ldots , u_\ell)$ for some $\ell\in \mathbb{N}$ as follows. Assume that $i\geq 1$,
		vertices $u_0,\ldots , u_i$ have been defined, and $u_{i-1}u_i$ is a nonempty unsafe pipe in $H'$ for which $u_i$ is not a base. By \ref{b1}, $u_i$ is a base for some nonempty pipe $u_iw$, where $w\neq u_{i-1}$. Put $u_{i+1}=w$.
		Since $H$ is simple, this iterative process terminates when either $u_iu_{i+1}$ is a nonempty safe pipe, or $u_{i+1}=u_k$ for some $0\leq k< i-1$.
		
		\smallskip\noindent\textbf{Case~1: The iterative process finds a nonempty safe pipe $u_iu_{i+1}$.}
		We claim that $u_iu_{i+1}$ is useful. Suppose, to the contrary, that $u_iu_{i+1}$ is useless. Then $u_i$ is incident to two nonempty pipes (which are necessarily $u_{i-1}u_i$ and $u_iu_{i+1}$), and every component in $G_{u_i}$ has pipe-degree 2. Consequently, $u_i$ is a base for both $u_{i-1}u_i$ and $u_iu_{i+1}$. This contradicts our assumption that $u_i$ is not a base for $u_{i-1}u_i$; and proves the claim.
		
		\smallskip\noindent\textbf{Case~2: The iterative process finds a cycle $U=(u_k,u_{k+1},\ldots , u_{\ell})$ where $u_{\ell+1}=u_k$ and for every $i=k,\ldots , \ell$, $u_i$ is a base for $u_iu_{i+1}$ but not a base for $u_iu_{i-1}$.}
		We show that this case does not occur. All clusters in the cycle have been created by \textsf{clusterExpansion} operations (in Phase~1 or 2).
		We claim that not all clusters in $U$ are created by the same \textsf{clusterExpansion} operation.
		For otherwise, by~\ref{b2} and~\ref{b3} $u_i$ and $u_{i+1}$, for some $i$, are on the boundary of an expansion disk $\Delta_u$ and $u_{i+1}$ is not the base of $u_iu_{i+1}$ by the construction of $U$. Furthermore, $u_iu_{i+1}$ is nonempty by the definition of a base. By~\ref{b3}, $u_{i+1}$ is the base of a single pipe, which is outside of $\Delta_u$, as the other base would have to be $u_iu_{i+1}$.
		Due to the previous claim and since $U$ is a cycle, there are two consecutive clusters, $u_j$ and $u_{j+1}$, in $U$ such that $u_j$ was created by an earlier invocation of \textsf{clusterExpansion} than $u_{j+1}$. By \ref{b3}, $u_{j+1}$ is a base for $u_ju_{j+1}$. Indeed, $u_{i+1}u_i$ is the unique pipe  at $u_{i+1}$ outside of $\Delta_u$ for which $u_{i+1}$ is also a base. This contradicts our assumption, and proves that Case~2 does not occur.
	\end{proof}
	
	\begin{lemma}
		\label{lem:all-safe}
		The following hold for the instance $\phi':G'\rightarrow H'$ at the end of Phase~2:
		\begin{enumerate}
			\item\label{11} every pipe in $E(H')$ is empty or useless,
			\item\label{22} every component of $G'$ is a generalized cycle, and
			\item\label{33} any two components of $G'$ are mapped to the same or two vertex-disjoint cycles in $H'$.
		\end{enumerate}
	\end{lemma}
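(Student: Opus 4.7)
The plan is to establish the three parts sequentially by combining the termination condition of Phase~2 with Corollary~\ref{cor:inv} and Lemma~\ref{lem:emptysafe}. For part~\ref{11}, the while loop of Phase~2 exits only when no useful pipe remains in $H'$. By Lemma~\ref{lem:emptysafe}, the absence of useful pipes implies the absence of nonempty unsafe pipes, so every nonempty pipe is safe. A safe pipe is by definition either useful or useless, hence every nonempty pipe is useless, and empty pipes are trivially covered.

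For part~\ref{22}, the key structural observation is that every nonempty cluster has a very rigid local structure. Take any nonempty $u\in V(H')$: by Corollary~\ref{cor:inv}, $u$ is the base of some nonempty pipe $uv$, which by part~\ref{11} must be useless. The definition of ``useless'' then forces both $u$ and $v$ to be incident to exactly two nonempty pipes each, and every component of $G_u$ and $G_v$ to be stable. Stability means each such component has pipe-degree~$2$ and is joined to each of its two incident nonempty pipes by either a single pipe-edge or a single thick edge. I would then form an auxiliary multigraph $\widehat{G'}$ by contracting every component of $G_u$ (for all $u$) to a single node and collapsing each thick edge into one super-edge; every node of $\widehat{G'}$ has degree exactly~$2$, so every connected component of $\widehat{G'}$ is a cycle, which unfolds back to a generalized cycle in $G'$.

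For part~\ref{33}, suppose the images of two components $\mathcal{C}_1,\mathcal{C}_2$ of $G'$ share a cluster $u$. Then $u$ is nonempty and hence incident to exactly two nonempty pipes, say $uv_1$ and $uv_2$. Both $\mathcal{C}_1$ and $\mathcal{C}_2$ must traverse $u$ by entering and exiting through these two pipes, forcing the next cluster visited on either side of $u$ to coincide for both components. Propagating this around the generalized cycles produced by part~\ref{22} shows that the two cluster-level cycles traced out by $\mathcal{C}_1$ and $\mathcal{C}_2$ are identical. Otherwise, if no cluster is shared, their images are vertex-disjoint by definition.

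The main obstacle I anticipate is justifying the auxiliary-graph step in part~\ref{22}: one must confirm that a nonempty pipe incident to $u$ leads to a distinct cluster (ruling out self-loops in the contracted multigraph) and that the three parallel paths of a thick edge collapse cleanly to one super-edge at each endpoint. Both properties follow from the stability condition and from the way \textsf{clusterExpansion} introduces thick edges in Step~2(b3), but they deserve explicit mention in a formal write-up. Once this is granted, $2$-regularity of $\widehat{G'}$ is immediate and part~\ref{33} then follows routinely by propagation.
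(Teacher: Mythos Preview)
Your proposal is correct and follows essentially the same route as the paper's proof. The paper argues part~\ref{11} identically; for part~\ref{22} it simply states that, since every pipe is empty or useless, every component in every cluster is stable and hence every component of $G'$ is a generalized cycle; for part~\ref{33} it notes that every cluster is incident to at most two nonempty pipes, so any two generalized cycles map to the same or to vertex-disjoint cycles in $H'$. Your auxiliary $2$-regular multigraph in part~\ref{22} and the propagation argument in part~\ref{33} merely make explicit what the paper leaves implicit, and your invocation of Corollary~\ref{cor:inv} to guarantee that a nonempty cluster sees a nonempty pipe is a detail the paper skips (it is also implied by the simplified form, which deletes pipe-degree-$0$ components).
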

	\begin{proof}
		{\bf \ref{11}}.
		When the while loop of Phase~2 terminates, there are no useful pipes in $H'$. Consequently, every safe pipe is useless.
		By Lemma~\ref{lem:emptysafe}, every unsafe pipe is empty at that time. Overall, every pipe is empty or useless, as claimed.
		
		{\bf \ref{22}}. Since every pipe $uv\in E(H')$ is empty or useless, every component in every cluster is stable.
		It follows that every component of $G'$ must be a generalized cycle.
		
		{\bf \ref{33}}. Since every pipe $uv\in E(H')$ is empty or useless, every cluster is incident to at most two nonempty pipes.
		Consequently, any two generalized cycles of $G'$ are mapped to either the same cycle or two disjoint cycles in $H'$.
	\end{proof}
	
	\begin{lemma}
		\label{lem:correctness}
		The algorithm reports whether $\phi$ is a weak embedding.
	\end{lemma}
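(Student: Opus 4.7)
The plan is to establish correctness by piecing together the equivalence results of Phases~1 and~2 with a direct analysis of the canonical instance produced at the start of Phase~3. Phase~1 invokes \textsf{clusterExpansion} on every cluster; by Lemma~\ref{lem:clusterExpansion} each call either correctly halts with a ``not a weak embedding'' verdict or returns an equivalent instance. Phase~2 invokes \textsf{pipeExpansion}$(uv)$ on a useful pipe $uv$ in each iteration; by Lemma~\ref{lem:pipeExpansion} each call is correct, and Corollary~\ref{cor:termination} ensures termination. Hence, if the algorithm reaches Phase~3 without halting, it holds an instance $\varphi':G'\rightarrow H'$ equivalent to $\varphi$ and, by Lemma~\ref{lem:all-safe}, in the canonical form where every component of $G'$ is a generalized cycle and any two such components map to either the same cycle or vertex-disjoint cycles of $H'$.

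Next I would justify Phase~3 in two parts. The planarity pre-check is sound because the $\varepsilon$-neighborhood of the image of a generalized-cycle component in $\mathcal{H}'$ is an annular cylinder of genus zero, so a component that is non-planar as an abstract graph cannot be perturbed to an embedding there. The subsequent contractions of wheels and thick edges preserve weak embeddability: by~\ref{P4} each wheel is 3-connected with the rotation of its external vertices fixed, so it embeds rigidly (up to reflection) in its cluster disk, and the three parallel edges of each thick edge are constrained by~\ref{P4} to lie in the associated pipe-strip in a fixed cyclic order. Each such contraction can therefore be reversed inside any embedding of the contracted instance and vice versa, so the resulting instance $\varphi'':G''\rightarrow H''$ is equivalent to $\varphi$. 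Every component of $G''$ is now a simple cycle, and distinct components map to vertex-disjoint cycles of $H''$.

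It then suffices to verify the Phase~3 acceptance criterion: $\varphi''$ is a weak embedding if and only if every $k$-vertex cycle component $C$ of $G''$ has image $\varphi''(C)$ equal to a $k$-vertex cycle in $H''$. For the ``if'' direction, such a $C$ maps bijectively onto an $H''$-cycle and can be realized as the core circle of the corresponding annular strip system; by Lemma~\ref{lem:all-safe}(\ref{33}) the distinct components live in pairwise disjoint annuli, so these embeddings assemble into a global embedding of $G''$. For the ``only if'' direction, if $\varphi''(C)$ has $k'<k$ vertices, then $k=jk'$ for some integer $j\ge 2$, so $C$ wraps $j$ times around the image cycle; any perturbation of $\varphi''$ restricted to $C$ is then a closed curve winding at least twice around the annular strip system of $\varphi''(C)$, hence cannot be a simple closed curve.

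The main obstacle is this last topological step, which ultimately invokes the classification of simple closed curves in an annulus: every such curve is either null-homotopic or isotopic to a single traversal of the core circle, so winding number at least two forces self-intersection. A secondary subtlety is the precise two-sided reconstruction that makes the wheel and thick-edge contractions genuine equivalences of weak-embeddability; this is intuitively clear from~\ref{P4} and the rigidity of 3-connected plane graphs but needs an explicit check in both directions. Everything else follows directly from the previously established lemmas.
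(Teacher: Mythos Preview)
Your overall approach matches the paper's, and most of your argument is correct, but there is one genuine slip in the ``if'' direction of Phase~3. You assert that after contraction ``distinct components map to vertex-disjoint cycles of $H''$'' and hence ``live in pairwise disjoint annuli,'' but this contradicts what you yourself correctly quoted from Lemma~\ref{lem:all-safe}(\ref{33}) a few lines earlier: two components of $G'$ may map to the \emph{same} cycle in $H'$, not only to vertex-disjoint ones. In that case the corresponding cycle components of $G''$ share a single annular strip, and you still owe an argument that several simple cycles, each winding once around that annulus, can be simultaneously embedded there without crossing. The paper handles exactly this point by observing that such cycles can be placed as nested parallel core circles in the annulus (cf.\ Fig.~\ref{fig:intro}(c)). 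Once you add this nesting step, your argument is complete and essentially the same as the paper's; in fact you supply more explicit justification for the wheel and thick-edge contraction equivalence and for the winding-number obstruction than the paper does.
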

	\begin{proof}
		By Lemmas~\ref{lem:clusterExpansion} and \ref{lem:pipeExpansion}, every operation either reports that the instance is not a weak embedding and halts, or produces an instance equivalent to the input $\phi$. Consequently, if any operation finds a negative instance, then $\varphi$ is not a weak embedding. Otherwise the while loop in Phase~2 terminates, and yields an instance $\varphi':G'\rightarrow H'$ equivalent to $\varphi$.
		By Lemma~\ref{lem:all-safe}, every component of $G'$ is a generalized cycle, any two of which are mapped to the same or vertex-disjoint cycles in $H'$.
		
		\begin{figure}[htbp]
			\centering
			\def\svgwidth{.3\textwidth}
			\input{./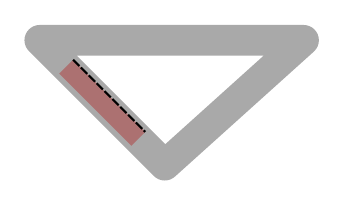_tex}
			\caption{\small
				A nonplanar generalized cycle that does not embed in an annulus.
				It embeds in the M\"obius band or the projective plane.}
			\label{fig:thickCycle}
		\end{figure}
		
		The strip system of the subgraph $\varphi'(C)$ of $H'$, where $C$ is a generalized cycle, is homeomorphic to the annulus, since $M$ is orientable.
		If $C$ is nonplanar, then $\varphi'$ is clearly not a weak embedding (see Fig.~\ref{fig:thickCycle}).
		Furthermore, a generalized cycle $C$ that winds around $\phi'(C)$ several times cannot be embedded in that annulus (Fig.~\ref{fig:intro}(d)). However, one or more generalized cycles that each wind around $\phi'(C)$ once can be embedded in nested annuli in the strip system $\mathcal{H}'$ (Fig.~\ref{fig:intro}(c)). This completes the proof of correctness of Algorithm$(\phi)$.
	\end{proof}

	\subsection{Efficient Implementation}
	\label{ssec:alg2}
	
	Recall that \textsf{pipeExpansion}$(uv)$ first contracts the pipe $uv$ into the new cluster $\cluster{uv}\in V(H^*)$. Each stable component $C$ of $G_{\cluster{uv}}$ is composed of two stable components, in $G_u$ and $G_v$, respectively. Then \textsf{clusterExpansion}$(uv)$ performed at the end of  \textsf{pipeExpansion}$(uv)$ splits $C$ into two stable components in two new clusters. That is, two adjacent stable components in $G_u$ and $G_v$ are replaced by two adjacent stable components in two new clusters. Consequently, we cannot afford to spend $O(|E_{\cluster{uv}}|)$ time for \textsf{pipeExpansion}$(uv)$.
	We introduce auxiliary data structures to handle stable components efficiently: we use \emph{set operations} to maintain a largest set of stable components in $O(1)$ time. A dynamic variant of the heavy path decomposition yields an $O(m\log m)$ bound on the total time spent on stable components in the main loop of the algorithm.
	
	\paragraph{Data structures for stable components.}
	For each pair $(u,uv)$ of a cluster $u\in V(H)$ and an incident pipe $uv\in E(H)$,
	we store a set $L(u,uv)$ of all stable components of $G_u$ adjacent to a (thick) edge in $\phi^{-1}[uv]$.
	For every cluster $u\in V(H)$, let $w^*(u)$ be a neighbor of $u$ maximizing the size of $L(u,uw^*(u))$;
	we maintain a pointer from $u$ to the set $L(u,uw^*(u))$.
	The total number of sets $L(u,uv)$ and the sum of their sizes are $O(m)$.
	We can initialize them in $O(m)$ time.
	For each cluster $u\in V(H)$, we store the components of $G_u$ in two sets:
	a set of stable and unstable components, respectively, each stored as the ID of a representative edge of the component.
	Hence, for each pipe we can determine in $O(1)$ time whether $uv$ is useful or useless.
	Every stable component in $G_u$ has a pointer to the two sets $L(u,.)$ in which it appears.
	
	\begin{figure}[h]
		\centering
		\def\svgwidth{.65\textwidth}
		\input{./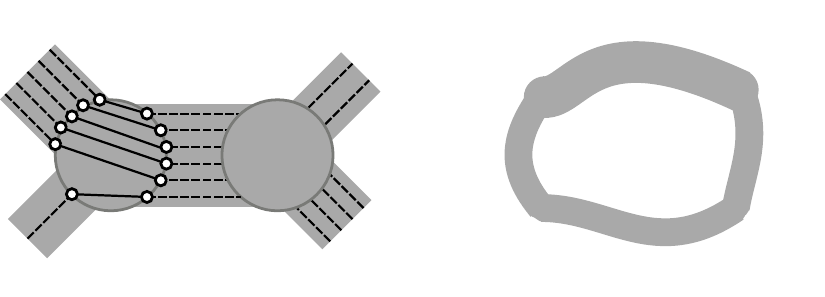_tex}
		\caption{\small
			Implementation of $\textsf{pipeExpansion}(uv)$ for stable components.
		}
		\label{fig:implementation}
	\end{figure}
	
	\smallskip
	In each iteration of Phase~2, we implement  \textsf{pipeExpansion}$(uv)$ for a useful pipe $uv$ as follows.
	Let $G^\diamond_{\cluster{uv}}$ be the graph formed by the unstable components of $G_{\cluster{uv}}$ and stable components 
	$C$ where $\sigma_C(uv)>1$ (such as paths that ``zig-zag'' between $u$ and $v$).
	Compute $G^\diamond_{\cluster{uv}}$ using DFS starting from each unstable component of
	$G_u$ and $G_v$.
	If a stable component of $G_u$ and $G_v$ is absorbed by a new unstable component of $G_{\cluster{uv}}$,
	delete it from the set $L(u,.)$ or $L(v,.)$ in which it appears.\\
	\textbf{(a)} We perform \textsf{simplify} and \textsf{clusterExpansion} only on $G^\diamond_{\cluster{uv}}$.\\
	\textbf{(b)} We handle the remaining stable components of $G_{\cluster{uv}}$ as follows (refer to Fig.~\ref{fig:implementation}).
	Since $uv$ is safe, every stable component of $G_u$ and $G_v$ appears in $L(u,uv)$ and $L(v,uv)$, respectively.
	For all $w$, $w\notin \{v,w^*\}$, where $w^*=w^*(u)$, move all components from $L(u,uw)$ to the cluster $\cluster{uv}_w$.
	Similarly, for all $x$, $x\notin\{u,x^*\}$, where $x^*=w^*(v)$, move all components from $L(v,vx)$ to the cluster $\cluster{uv}_x$.
	Let $L'(u,uv)$ and $L'(v,uv)$ be the sets obtained from $L(u,uv)$ and $L(v,uv)$ after this step.
	Notice that the remaining stable components of $G_u$ and $G_v$ in $L'(u,uv)$ and $L'(v,uv)$ appear in $L(u,uw^*)$ and $L(v,vx^*)$, respectively. Move them to the clusters $\cluster{uv}_{w^*}$ and $\cluster{uv}_{x^*}$ by renaming all data structures of $u$ and $v$, respectively,
	which can be done in $O(1)$ time.
	Update $H$ so that it reflects the changes in $G$,
	by creating appropriate pipes between the clusters in $\partial\Delta_{\cluster{uv}}$, checking for crossings by local planarity testing inside $\Delta_{\cluster{uv}}$ (similar to Step~4 of \textsf{clusterExpansion}). \\
	\textbf{(c)} We add the new stable components (created by \textsf{clusterExpansion} from unstable components) that fit the definition of $L(\cluster{uv}_{w^*},\cluster{uv}_{w^*}\cluster{uv}_{x^*})$ to the current set $L'(u,uv)$
	to obtain the set $L(\cluster{uv}_{w^*},\cluster{uv}_{w^*}\cluster{uv}_{x^*})$.
	Analogously, add new stable components to the current sets $L'(v,uv)$, $L(u,uw)$ where $w\neq v$, and $L(v,vx)$ where $x\neq u$ to obtain the new sets $L(\cluster{uv}_{x^*},\cluster{uv}_{x^*}\cluster{uv}_{w^*})$, $L(\cluster{uv}_w,\cluster{uv}_w w)$, and $L(\cluster{uv}_x,\cluster{uv}_x x)$, respectively.
	Compute any other sets of stable components from scratch.
	
	We are now ready to show that the running time of the algorithm improves to $O(m\log m)$.
	
	\begin{lemma}\label{lem:time}
		Our implementation of the algorithm runs in $O(m\log m)$ time.
	\end{lemma}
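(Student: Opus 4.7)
The plan is to bound Phases~1 and 3 by $O(m)$ each in the direct way, and to bound Phase~2 by $O(m\log m)$ via an amortized analysis that splits each iteration into an ``expensive'' part touching unstable structure and a ``cheap'' part redistributing stable components in bulk. Phase~1 performs \textsf{clusterExpansion}$(u)$ once per cluster, costing $O(|E(G_u)|+\deg(u))$ each by Lemma~\ref{lem:clusterExpansion}; summing telescopes to $O(m)$. Initialization of the data structures (the sets $L(u,uv)$, the pointers $w^*(u)$, the stable/unstable lists, and the useful/useless bits) is also $O(m)$. Phase~3 performs one planarity test per component and one linear pass to compare generalized cycles against $H'$, so it is $O(m)$.

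Phase~2 runs at most $15m$ iterations by Corollary~\ref{cor:termination}. In a single iteration on the useful pipe $uv$, the plan is to separate the cost into: (I) the DFS that extracts $G^\diamond_{\cluster{uv}}$ from the unstable components of $G_u$ and $G_v$, plus \textsf{simplify} and \textsf{clusterExpansion} restricted to $G^\diamond_{\cluster{uv}}$, together with the local planarity test in $\Delta_{\cluster{uv}}$; and (II) the bulk-movement of the ``simple'' stable components described in step~(b). By Lemmas~\ref{lem:normalize}--\ref{lem:clusterExpansion} applied to $G^\diamond_{\cluster{uv}}$, part~(I) costs $O(|E(G^\diamond_{\cluster{uv}})|+\deg(u)+\deg(v))$. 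To bound the sum over all iterations, I refine the case analysis in the proof of Lemma~\ref{lem:termination}: every unstable subcomponent of $G_{\cluster{uv}}$ that ends up in $G^\diamond_{\cluster{uv}}$ forces $4(N_\sigma-Q)+N_s$ to drop by at least a constant times its own edge count (e.g., a pipe-degree $\ge 3$ component with wheel size $k$ contributes $\Omega(k)$ edges and gives a $\Phi_1$-drop of $3k-11$). Since $\Phi_1\le 15m$ initially, the total $\sum |E(G^\diamond_{\cluster{uv}})|$ across Phase~2 is $O(m)$; the $\deg(u)+\deg(v)$ terms sum to $O(m)$ as well, so part~(I) totals $O(m)$.

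For part~(II), I exploit the small-to-large trick already encoded in the pointer $w^*(u)$: in step~(b) the set $L(u,uw^*(u))$ is moved to the new cluster $\cluster{uv}_{w^*}$ by an $O(1)$ rename, while the non-heavy sets $L(u,uw)$ for $w\notin\{v,w^*(u)\}$ are traversed explicitly and the same on the $v$-side. Hence each explicit move of a stable component $C$ happens only when $C$ lies in a non-heavy set of its current cluster. I will define a potential
\[
  \Phi_2(\phi)=\sum_{C \text{ stable}} \log\bigl(1+|L(u_C,u_Cw^*(u_C))|\bigr),
\]
where $u_C$ is the cluster currently containing $C$, and show that (a) every explicit move drops $\Phi_2$ by $\Omega(1)$ (since $C$'s new heavy set is at least twice as large, up to the additive constants coming from newly-merged stable components), while (b) the rename operations and the $O(1)$-sized updates from part~(I) change $\Phi_2$ by $O(1)$ each, absorbed by the creation of the at most $12m$ stable components from Corollary~\ref{cor:termination} and the $O(m)$-total work in part~(I). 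Since $\Phi_2=O(m\log m)$ always, this bounds the total number of explicit moves, hence part~(II), by $O(m\log m)$.

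The main obstacle is the potential argument for part~(II): the small-to-large intuition is clean when sets only merge, but here both splits and merges occur as clusters are created and destroyed, and a stable component can be in two different $L$-sets simultaneously (one per incident pipe), each with its own $w^*$. The careful point is to argue that the dynamic heavy-path structure on the chains of stable components along $H$ admits such a $\log$-potential whose drop on an explicit move dominates the potential gains from reorganizations in part~(I); once this is in place, combining $O(m)$ for Phases~1 and 3, $O(m)$ for part~(I) of Phase~2, and $O(m\log m)$ for part~(II) yields the claimed $O(m\log m)$ runtime.
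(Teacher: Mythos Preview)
Your plan tracks the paper's proof closely in structure (Phases~1 and~3 in $O(m)$; Phase~2 split into an ``unstable'' part paid by the $\Phi_1$-drop and a ``stable'' part paid by a heavy-path style potential), but there are two genuine gaps.

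\medskip
\textbf{The degree term in part~(I).} You assert that $\sum_{\text{iterations}}(\deg(u)+\deg(v))=O(m)$ without argument. This is not obvious: each \textsf{pipeExpansion} creates roughly $\deg(u)+\deg(v)$ new boundary clusters plus wheel clusters, so summing degrees over all clusters ever created needs its own bound. The paper does not sum degrees separately; instead it uses property~\ref{b4} (every nonempty cluster has at most three empty incident pipes) to absorb $\deg(\cluster{uv})$ into $O\bigl(|E(G^\diamond_{\cluster{uv}})|+1+\sum_{L\in\Lambda}|L|\bigr)$, after which both summands are paid by the potential. If you want to keep the degree term separate, you must actually prove the $O(m)$ bound.

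\medskip
\textbf{The potential for part~(II).} Your potential
\[
\Phi_2(\phi)=\sum_{C\text{ stable}}\log\bigl(1+|L(u_C,u_Cw^*(u_C))|\bigr)
\]
does not give an $\Omega(1)$ drop per explicit move. When a component $C\in L(u,uw)$ with $w\notin\{v,w^*(u)\}$ is moved, its contribution changes from $\log|L(u,uw^*(u))|$ to $\log$ of the heavy set of the new cluster $\cluster{uv}_w$, which has size at most $|L(u,uw)|$. But you only know $|L(u,uw)|\le |L(u,uw^*(u))|$, not $|L(u,uw)|\le\frac12|L(u,uw^*(u))|$; the ratio can be arbitrarily close to~1, so the drop can be $o(1)$. (Your parenthetical ``$C$'s new heavy set is at least twice as large'' points in the wrong direction and is in any case not what happens.) The correct per-component potential is $\log\mathcal{L}(u_C)$, i.e., the \emph{total} number of stable components in $C$'s cluster; summed over $C$ this is exactly the paper's $\sum_u\mathcal{L}(u)\log\mathcal{L}(u)$. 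The reason this works is the splitting inequality: if $\mathcal{L}(u)=k$ is split into parts $k_1\ge k_2\ge\cdots$, then for every $j\ge 2$ one has $k_j\le k/2$, so each explicitly moved component loses at least one bit. Equivalently, $k\log k-\sum_i k_i\log k_i\ge k-k_1$, which is the number of explicit moves. The paper then adds a term $(12m-s)\log(48m)$ to offset the potential increase when new stable components are created (at most $12m$ of them by Corollary~\ref{cor:termination}); this is the bookkeeping you allude to but do not carry out.

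\medskip
In short, the skeleton is right, but you need (i) the \ref{b4}-based absorption of the degree term rather than an unproved global sum, and (ii) the cluster-size potential $\sum_u\mathcal{L}(u)\log\mathcal{L}(u)$ together with the creation-offset term, not the heavy-set-size potential you wrote down.
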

	\begin{proof}
		Phases~1 and 3 take $O(m)$ time by Lemmas~\ref{lem:termination}, \ref{lem:clusterExpansion}, and \ref{lem:all-safe}. The while loop in Phase~2 terminates after $O(m)$
		iterations by Corollary~\ref{cor:termination}. Using just Lemmas~\ref{lem:normalize} and \ref{lem:clusterExpansion}, each iteration of Phase~2 would take $O(m)$ time leading to an overall running time of $O(m^2)$. We define a new potential function for an instance $\phi:G\rightarrow H$ to show that each iteration of Phase~2 takes $O(\log m)$ amortized time.
		
		For every $u\in V(H)$, let $\mathcal{L}(u)$ be the number of stable components in $G_u$.
		Let $s$ be the number of stable components created from the beginning of Phase~2 up to the current iteration.
		We define a new potential function as\footnote{All logarithms are of base 2.}
		
		$$\Phi_2(\phi)=\Phi_1(\phi)+ (12m-s)\log(48m)+ \sum_{u\in V(H)}\mathcal{L}(u)\log \mathcal{L}(u).$$
		By Corollary~\ref{cor:termination} the second term is nonnegative.
		Note that $\Phi_2(\phi)=O(m\log m)$ since $\Phi_1(\phi)=O(m)$ and $\sum_{u\in V(H)}\mathcal{L}(u) = O(m)$.
		We show that $\Phi_2$ strictly decreases in Phase~2. As argued above (cf. Lemma~\ref{lem:termination}), $\Phi_1$ strictly decreases. The second term of $\Phi_2$ can only decrease since $s$ increments when stable components are created (but never decrements).
		The term $\sum_{u\in V(H)}\mathcal{L}(u)\log \mathcal{L}(u)$ increases when new stable components are created. However, this increase is offset by the decrease in the second term of $\Phi_2$. It suffices to consider the case that $\mathcal{L}(u)$ increments from $k$ to $k+1$. Then
		\begin{align*}
			k\log k + \log (48m)
			& \geq k\log k + \log (4s)\\
			&\geq \left(k\log (k+1) - k \log \frac{k+1}{k}\right) + (2+\log (k+1))\\
			&\geq (k+1)\log (k+1) + 2 - \log \left(1+\frac{1}{k}\right)^k\\
			&\geq (k+1)\log (k+1) + 2 - \log e\\
			&> (k+1)\log (k+1),
		\end{align*}
		that is, the decrease of $\log(48m)$ in the second term offsets the increase of $(k+1)\log(k+1)-k\log k$ in the third term.
		
		We next show that the time spent on each iteration of the while loop in Phase~2 is bounded above by a constant times the  decrease of the potential $\Phi_2$. This will complete the proof since the potential $\Phi_2$ is nonincreasing throughout the execution of the algorithm as we have just shown.
		
		For a useful pipe $uv\in E(H)$, let $\Lambda(uv)=\Lambda$ be the collection of sets consisting of all $L(u,uw)$ where $w\notin\{v,w^*(u)\}$ and all $L(v,vx)$ where $x\not\in\{u,w^*(v)\}$.
		Our implementation of \textsf{pipeExpansion}$(uv)$ spends $O(|E(G^\diamond_{\cluster{uv}})|+\deg(\cluster{uv}))$ time to process the components of $G^\diamond_{\cluster{uv}}$, by Lemma~\ref{lem:clusterExpansion}, and $O(1+\sum_{L\in \Lambda}|L|)$ time to process the remaining stable components.
		However, by~\ref{b4}, $\deg(\cluster{uv})=O(|E(G^\diamond_{\cluster{uv}})|+1+\sum_{L\in \Lambda}|L|)$.
		Hence the running time of \textsf{pipeExpansion}$(uv)$ is
		
		\begin{equation}\label{eq:time}
			O\left(|E(G^\diamond_{\cluster{uv}})|+1+\sum_{L\in \Lambda}|L|\right).
		\end{equation}

		First, let $C$ be a component of $G^\diamond_{\cluster{uv}}$. We have seen (in the proof of Lemma~\ref{lem:termination}) that \textsf{pipeExpansion}$(uv)$  decreases $N_\sigma$ by at least $\sigma_C(uv)-1$ due to edges in $C$.
		By the definition of safe pipes, $C$ contains a nonempty set $A$ of components of $G_u$ and a nonempty set $B$ of components of $G_v$.
		By Lemma~\ref{obs:1}, each component $C'\in A\cup B$ contains $O(t_{C'})$ edges, where $t_{C'}$ is the number of terminals of $C'$.
		Notice that $\sigma_C(uv)\ge\sum_{C'\in A} (t_{C'}-3)$, because at most three out of $t_{C'}$ terminals are not adjacent to a pipe-edge in $\phi^{-1}[uv]$ by the definition of safe pipes (equality occurs when a thick edge is incident to $C'$ in a pipe other than $uv$).
		As such, for a component $C$ in $G^\diamond_{\cluster{uv}}$, the contribution of $C$ to $N_\sigma-Q$ decreases by $\Omega(|E(C)|)$.
		Therefore, the first term of (\ref{eq:time}) is charged to the decrease in $N_\sigma-Q$.
		By Lemma~\ref{lem:termination}, summation over pipe expansions in Phase~2 yields $O(m)$.
		This takes care of steps \textbf{(a)} and~\textbf{(c)} of the efficient implementation (Section~\ref{ssec:alg2}). It remains to bound the time complexity of step~\textbf{(b)}, which deals exclusively with stable components.
		
		Second, we show that the time that \textsf{pipeExpansion}$(uv)$ spends on stable components
		is absorbed by the decrease in the last term of $\Phi_2$.
		When we move the components of $L(u,uw)$, $w\notin\{v,w^*\}$, to the cluster $\cluster{uv}_w$, we spend linear time on all but a maximal set, which can be moved in $O(1)$ time using a set operation. In what follows we show that a constant times the corresponding decrease in the term  $\sum_{u\in V(H)}\mathcal{L}(u)\log \mathcal{L}(u)$ subsumes this work.
		
		We adapt the analysis from the classic heavy path decomposition. Suppose we partition a set of size
		$k=\mathcal{L}(u)=\mathcal{L}(v)$ into $\ell$ subsets of sizes $k_1\geq \ldots \geq k_\ell$.
		Note that $k_j\le k/2$ for $j\ge 2$. Then
		
		$$ k\log k
		= \sum_{i=1}^\ell k_i\log k\\
		\geq k_1\log k_1 + \sum_{j=2}^\ell k_j\log (2k_j)\\
		= \sum_{i=1}^\ell k_i\log k_i + \sum_{j=2}^\ell k_j.
		$$
		Hence, the decrease in $\sum_{u\in V(H)}\mathcal{L}(u)\log \mathcal{L}(u)$, which is equal to $k\log k-\sum_{i=1}^\ell k_i\log k_i$, is bounded from below by $k-k_1$. Therefore if we spend $O(1)$ time on a maximal subset (of size $k_1$), we can afford to spend linear time on all other subsets.
		Thus, the decrease in $\sum_{u\in V(H)}\mathcal{L}(u)\log \mathcal{L}(u)$ subsumes the actual work and this concludes the proof.
	\end{proof}
	
	This completes the proof of Theorem~\ref{thm:main}(i). Part~(ii) of Theorem~\ref{thm:main} is shown in Section~\ref{sec:reverse}.

	\section{Constructing an embedding}
	\label{sec:reverse}
	
	Our recognition algorithm in Section~\ref{sec:alg} decides in $O(m\log m)$ time whether a given instance $\varphi$ is a weak embedding. However, if $\varphi$ turns out to be a weak embedding, it does not provide an embedding $\psi_{\varphi}$,
	since at the end of the algorithm we have an equivalent ``reduced'' instance $\varphi'$ at hand. In this section, we show how to compute the combinatorial representation of an embedding $\psi_\varphi$ for the input $\varphi$.
	
	Assume that $\varphi:G\rightarrow H$ is a weak embedding. By Lemmas~\ref{lem:correctness} and \ref{lem:time}, we can obtain a combinatorial representation of an embedding $\pi_{\phi'}\in \Pi(\phi')$ of the instance $\varphi':G'\rightarrow H'$ produced by the algorithm at the end of Phase~2 in $O(n\log n)$ time.
	
	We sequentially reverse the steps of the algorithm, and maintain combinatorial embeddings for all intermediate instances until we obtain a combinatorial representation $\pi\in \Pi(\phi)$. By Lemma~\ref{lem:comb-reconstruction}, we can then obtain an embedding $\psi_{\phi}:G\rightarrow\mathcal{H}$ in $O(m)$ time. Reversing a \textsf{clusterExpansion}$(u)$ operation is trivial: the total orders of pipes within $\Delta_u$ can be ignored and the total order for every pipe $uv$ is the same as the order for $u_vv$. This can be done in $O(\deg(u))$ time.
	
	Let $\varphi^{(1)}:G^{(1)}\rightarrow H^{(1)}$ be the input instance of \textsf{pipeExpansion}$(uv)$, $\varphi^{(2)}:G^{(1)}\rightarrow H^{(2)}$ be the instance obtained by contracting $uv$ and  $\varphi^{(3)}:G^{(3)}\rightarrow H^{(3)}$ be the instance after \textsf{clusterExpansion}$(\cluster{uv})$. By the previous argument, the total orders of pipe-edges $\pi^{(2)}(\cluster{uv}w)$ of all pipes $\cluster{uv}w$ can be obtained from a combinatorial representation $\pi^{(3)}$ in $O(\deg(\cluster{uv}))$ time. These orders also correspond to $\pi^{(1)}(uw)$ and $\pi^{(1)}(vx)$ for pipes $uw$ and $vx$ in $\phi^{(1)}$ where $w\neq v$ and $x\neq u$.
	
	To obtain an order $\pi^{(1)}(uv)$, we embed $G_{\cluster{uv}}^{(1)}$ into $D_{\cluster{uv}}$ using Lemma~\ref{lem:comb-reconstruction} in $O(|E(G_{\cluster{uv}}^{(1)})|)$ time and find the Jordan curve defined in the proof of Lemma~\ref{lem:pipeExpansion}. The order $\pi^{(1)}(uv)$ of the pipe-edges in $\varphi^{-1}[uv]$ is given by the order in which the Jordan curve intersects these edges.
	This takes $O(|E(G_{\cluster{uv}}^{(1)})|)$ time by Theorem~\ref{thm:belyi}, though we first need to obtain an embedding of $G^{(1)}$, where triangles of wheels in $G_{\cluster{uv}}^{(1)}$ and 4-cycles induced by thick edges in and incident to $G_{\cluster{uv}}^{(1)}$ are empty.
	This can be done by changing the rotation at the vertices of the wheels and 4-cycles corresponding to thick edges one by one
	in $O(|E(G_{\cluster{uv}}^{(1)})|)$ time, since $uv$ is safe in the resulting instance.
	
	However, this would lead to a $O(m^2)$ worst case time complexity because of stable components.
	We show how to reduce the running time to $O(m\log m)$.
	Let us call a pipe-edge (thick edge) \textbf{stable}
	if it connects two stable components in two adjacent clusters.
	In each total order $\pi^*(uv)$ for a pipe $uv$ in an instance $\phi^*$,
	we arrange maximal blocks of consecutive stable edges into a \textbf{bundle}
	that takes a single position in the order, and we store the order among the stable edges
	in the bundle in a separate linked list.
	We can substitute each bundle with one \textbf{representative} stable component.
	Then, using $\pi^{(2)}(\cluster{uv}w)$ for all pipes $\cluster{uv}w$ incident to $\cluster{uv}$ we can obtain a list of at most $\deg(\cluster{uv})+c$ representative stable components, where $c$ is the number of components of $G^\diamond_{\cluster{uv}}$.
	We can proceed by embedding all components in $G^\diamond_{\cluster{uv}}$ and the representatives of the remaining stable components in $D_{\cluster{uv}}$.
	Obtaining a Jordan curve that encloses all vertices in $(\phi^{(1)})^{-1}[u]$ now takes $O(E(G^\diamond_{\cluster{uv}})+\deg(\cluster{uv}))$ time.
	The order in which the Jordan curve crosses the edges in $G_{\cluster{uv}}$ defines $\pi^{(1)}(uv)$, where the size of $\pi^{(1)}(uv)$ is $O(E(G^\diamond_{\cluster{uv}})+\deg(\cluster{uv}))$ and each pipe-edge obtained from a representative stable component represents a bundle of stable edges. We can merge consecutive bundles of stable edges in $\pi^{(1)}(uv)$, as needed, in $O(E(G^\diamond_{\cluster{uv}})+\deg(\cluster{uv}))$ time.
	By \ref{b4}, this running time is bounded above by the running time of our implementation of \textsf{pipeExpansion}$(uv)$, as argued in the proof of Lemma~\ref{lem:time}.
	Therefore, we can reverse every operation and obtain an embedding
	$\psi_{\phi}:G\rightarrow\mathcal{H}$ in $O(m\log m)$ time.
	This completes the proof of Theorem~\ref{thm:main}(ii).

	\section{Algorithm for nonorientable surfaces}
	\label{sec:nonorientable}
	
	We show that our algorithm can be adapted to recognize weak embeddings $\varphi:G\rightarrow H$ when $H$ is embedded in a \emph{nonorientable} manifold $M$. We discuss the adaptation to nonorientable surfaces in a separate section to reduce notational clutter in Sections~\ref{sec:pre}--\ref{sec:reverse}.
	
	First, we adapt the definition of the strip system. The embedding of a graph $H$ into a (orientable or nonorientable) surface $M$ is given by a rotation system that specifies, for each vertex of $H$, the ccw cyclic order of incident edges, and a signature
	$\lambda: E(H)\rightarrow\{-1,1\}$. To define the strip system $\mathcal{H}$, we proceed exactly as in the case that $M$ is orientable (Section~\ref{sec:pre}), except that for every edge $e=uv$, if $\lambda(e)=-1$, we identify $A_{u,v}$ with $\partial R_{uv}$ via an orientation reversing homeomorphism and $A_{v,u}$ with $\partial R_{uv}$ via an orientation preserving homeomorphism, or vice-versa.
	If we represent $M$ as a sphere with a finite number holes, that are turned into \textbf{cross-caps},
	the signature of an edge is interpreted as the parity of the number of times an edge passes through a cross-cap.
	
	Second, we adapt the operation of cluster expansion as follows. We put $\lambda(u_vv):=\lambda(uv)$ for all neighbors $v$ of $u$, and  $\lambda(e):=1$ for all newly created edges $e$.
	
	Third, we adapt the operation \textsf{pipeExpansion}$(uv)$ as follows. If $\lambda(uv)=-1$,
	before creating the cluster $\cluster{uv}$, we flip the value of $\lambda$ from $-1$ to 1, and vice-versa,
	for every  edge of $H$ adjacent to $u$.
	This corresponds to pushing the edge $uv$ off all the cross-caps that it passes through.
	Then the values of $\lambda$ on the edges incident to $\cluster{uv}$ in $H^*$ are naturally inherited from the values of $\lambda$ on the edges adjacent to $u$ and $v$ in $H$. The value of $\lambda$ for all other edges in $H^*$ remain the same as in $H$.
	
	The first two phases of the algorithm remain the same except that they use the adapted operations of cluster and pipe expansion.
	Phase~3 is modified as follows.
	If a generalized cycle $C$ in $G'$ satisfies $\Pi_{e\in E(\varphi'(C))}\lambda(e)=1$ (that is, the strip system of $\varphi'(C)$ is homeomorphic to an annulus), we proceed as in the orientable case.
	The modifications affect only the generalized cycles $C$ such that $\Pi_{e\in E(\varphi'(C))}\lambda(e)=-1$, or in other words generalized cycles $C$, for which the strip system of $\varphi'(C)$ is homeomorphic to the M\"obius band.
	For such a generalized cycle $C$, we report that the instance is negative if $C$ winds more than two times around $\varphi'(C)$; or if all pipe edges of $C$ are thick edges, the underlying graph of $C$ is planar (resp., not planar), and $C$ winds exactly once (resp., twice) around $\varphi'(C)$.
	Furthermore, we report that the instance is negative if there exist two distinct generalized cycles $C_1$ and $C_2$ in $G'$ winding once around $\varphi'(C)$ such that $\varphi'(C)=\varphi'(C_1)=\varphi'(C_2)$ in $H'$.
	Else we can report that $\varphi$ is a weak embedding at the end.

	The correctness of the algorithm in the M\"obius band case is implied by a stronger statement in \cite[Section 8]{FK18_ht}, but we can easily verify it by the following contradictions.
	Suppose two or more cycles wind exactly once around $\varphi'(C)$.
	In any embedding $\varphi$, the total orders in $\pi$
	(defined in Section~\ref{sec:combRep}) reverses in each traversal of the cycle, which is a contradiction.
	Similarly, if $C$ winds $k>2$ many times around $\varphi'(C)$ we  divide it into $k$ paths $P_1,\ldots, P_k$ sharing end vertices each of which is mapped injectively into $\varphi'(C)$. We assume that in $\pi$ the paths $P_1,\ldots, P_k$ appear in the given order up to the choice of orientation. The order of $P_i$'s in $\pi$ reverses with respect to a fixed orientation if we traverse  $\varphi'(C)$. Hence, $P_1$ must precede and follow $P_k$ along $C$ and therefore $k\le 2$, a contradiction.

	\section{Conclusions}
	\label{sec:con}
	
	We have shown (Theorem~\ref{thm:main}) that it takes $O(m\log m)$ time to decide whether a piecewise linear simplicial map $\varphi:G\rightarrow H$ from an abstract graph $G$ with $m$ edges to an embedded graph $H$ is a weak embedding (i.e., whether, for every $\eps>0$, there exists an embedding $\psi_{\eps}:G\rightarrow \mathcal{H}$ of $G$ into a neighborhood $\mathcal{H}$ of $H$ such that $\|\varphi-\psi_\eps\|<\eps$).
	The only previously known algorithm for this problem takes $O(n^{2\omega})\leq O(n^{4.75})$ time, where $\omega$ is the matrix multiplication constant~\cite{FK18_ht}, and until recently no polynomial-time algorithm was available even in the special case that $H$ is embedded in the plane. Only the trivial lower bound of $\Omega(m)$ is known for the time complexity of recognizing weak embeddings in our setting. Closing the gap between $\Omega(m)$ and $O(m\log m)$ remains open.
	
	If $\varphi:G\rightarrow H$ is a continuous map, but not necessarily simplicial (i.e., an image of an edge may pass through vertices of $H$), then the running time increases to $O(mn\log (mn))$, where $n=|V(G)|$ (cf.~Corollary~\ref{cor:nonsimplicial}).
	In the special case that $G$ is a cycle and $H$ is a planar straight-line graph, an $O(m\log m)$-time algorithm was recently given in~\cite{AAET17}. It remains an open problem whether a similar improvement is possible for arbitrary $G$ and $H$.

	\paragraph{Atomic Embeddings.}
	An interesting generalization of our problem is deciding whether a given 2-dimensional simplicial complex embeds into some 3-dimensional manifold, also known as the \emph{thickenability of 2-dimensional simplicial complexes}.
	Indeed, \cite[Lemma]{Skop94_thick} implies\footnote{The Lemma in~\cite{Skop94_thick} is stated only for connected graphs and only in the case that the target surface is a sphere. However, it is easy to see that an analogous statement holds for disconnected graphs and for orientable surfaces of arbitrary genus.} that a polynomial-time algorithm for the thickenability of 2-dimensional simplicial complexes would directly translate into a polynomial-time algorithm for our problem. Studying this more general problem is one of the next natural steps in our investigation: Currently, we do not know whether this problem is tractable. Furthermore, due to the same argument, a polynomial-time algorithm for deciding whether a 2-dimensional simplicial complex embeds in $\mathbb{R}^3$ would already imply a polynomial-time algorithm for our problem if we restrict ourselves to orientable surfaces. However, this problem is NP-hard~\cite{dMRST18_embedR3}, so the existence of such an algorithm is highly unlikely.
	
	The thickenability problem, originally due to Neuwirth~\cite{Skop94_thick}, can be seen as the following variant of the problem of deciding whether a simplicial map $\varphi:G\rightarrow H$ is a weak embedding\footnote{In fact, for Neuwirth's algorithm it is enough to assume that $G$ is a finite union of pairwise disjoint 3-cycles.}, which we call  the \textbf{atomic embeddability problem}: We are given an abstract graph $H$, and a strip system $\mathcal{H}$ as a 2-dimensional surface without boundary, partitioned into regions representing ``clusters'' and ``pipes.''
	Every vertex $u\in V(H)$ is represented by a surface $S_u$ with boundary (instead of a disk $D_u$), obtained from a 2-sphere by removing disjoint open disks (holes), where the number of holes equals $\deg(u)$ and each hole corresponds to an edge in $H$ incident to $u$.
	Every edge $uv\in E(H)$ is associated with a cylinder $C_{uv}$ (rather than a rectangle $R_{uv}$) whose boundary components are homeomorphically identified with the boundaries of the corresponding holes on $S_u$ and $S_v$, respectively.
	%
	%
	The technique that we developed in Sections~\ref{sec:pre}--\ref{sec:alg} is not directly applicable to the setting of atomic embeddings. It remains an open problem to recognize atomic embeddability efficiently.
	
%
	
	\bibliographystyle{plain}
	\bibliography{wembedding}
	
\end{document}